\DeclareRobustCommand{\cev}[1]{%
  \mathpalette\do@cev{#1}%
}
\newcounter{theoremcounter}
\newtheorem{theorem}{Theorem}
\newtheorem{corollary}{Corollary}
\newtheorem{definition}{Definition}
\newenvironment{proof}[1][Proof]{\begin{trivlist}
\item[\hskip \labelsep {\bfseries #1}]}{\end{trivlist}}
\newcommand{\qed}{\hfill $\blacksquare$}
\newcommand{\bs}{\boldsymbol}
\newcommand{\bra}[1]{\left\langle #1\right|}
\newcommand{\ket}[1]{\left|#1\right\rangle}
\newcommand{\braket}[2]{\left\langle #1|#2\right\rangle}
\newcommand{\expval}[3]{\left\langle #1\middle|#2\middle|#3\right\rangle}
\newcommand{\ketbra}[2]{\ket{#1}\bra{#2}}
\DeclareMathOperator{\Tr}{Tr}
\newcommand{\Mod}[1]{\ \mathrm{mod}\ #1}
\newcommand{\vast}{\bBigg@{4}}
\newcommand{\Vast}{\bBigg@{5}}
\begin{document}

\title{Improved Strong Simulation of Universal Quantum Circuits}
\author{Lucas Kocia}
\affiliation{Sandia National Laboratories, Livermore, California 94550, U.S.A.}
\begin{abstract}
  We find a scaling reduction in the stabilizer rank of the twelve-qubit tensored \(T\) gate magic state. This lowers its asymptotic bound to \(2^{\sim 0.463 t}\) for multi-Pauli measurements on \(t\) magic states, improving over the best previously found bound of \(2^{\sim 0.468 t}\). We numerically demonstrate this reduction. This constructively produces the most efficient strong simulation algorithm of the Clifford+\(T\) gateset to relative or multiplicative error.  We then examine the cost of Pauli measurement in terms of its \emph{Gauss sum rank}, which is a slight generalization of the stabilizer rank and is a lower bound on its asymptotic scaling. We demonstrate that this lower bound appears to be tight at low \(t\)-counts, which suggests that the stabilizer rank found at the twelve-qubit state can be lowered further to \(2^{\sim 0.449 t}\) and we prove and numerically show that this is the case for single-Pauli measurements. Our construction directly shows how the reduction at \(12\) qubits is iteratively based on the reduction obtained at \(6\), \(3\), \(2\), and \(1\) qubits. This explains why novel reductions are found at tensor factors for these number of qubit primitives, an explanation lacking previously in the literature. Furthermore, in the process we observe an interesting relationship between the T gate magic state's stabilizer rank and decompositions that are Clifford-isomorphic to a computational sub-basis tensored with single-qubit states that produce minimal unique stabilizer state inner products -- the same relationship that allowed for finding minimal numbers of unique Gauss sums in the odd-dimensional qudit Wigner formulation of Pauli measurements.
\end{abstract}
\maketitle

\section{Introduction}
Universal quantum computation can be achieved using the Clifford+\(T\) gateset, stabilizer states and Pauli measurement. Classical \emph{strong} simulation is defined as the task of calculating the probabilities of quantum circuits composed of this gateset to relative error. It is \(\#P\)-hard~\cite{Huang18_2} and so is expected to require exponential resources on a classical computer. Improvements on the exponential factor in the runtime of classical algorithms are important for simulating near-term noisy intermediate-scale quantum (NISQ) computers~\cite{Preskill18} and establishing the limit of their classical simulation~\cite{Harrow17}. Consequently, a great deal of effort has recently focused on the development and characterization of faster classical simulation algorithms~\cite{Stahlke14,Pashayan15,Bravyi16_2,Bravyi16_1,Dax17,Howard17,Bennink17,Howard18,Huang18,Bu19,Qassim19,Rall19,Pashayan20,Seddon20,Huang20} of universal quantum circuits to keep pace with experimental advances in quantum near-term devices.

The output of any circuit composed of Clifford and \(T\) gates can be written as a Pauli-based computation with \(k\) \(T\) gate magic states and stabilizer states on \(n\) qubits~\cite{Bravyi05}:
\begin{equation}
  \label{eq:probofapp}
  \Pi_k = \Tr \left[ \hat \Pi \hat \rho_k \right],
\end{equation}
\(\hat \rho_k = \ketbra{\Psi_k}{\Psi_k}\) for \(\ket{\Psi_k} = \ket{T}^{\otimes k}\ket{0}^{\otimes (n-k)}\), \(\ket T = (\ket 0 + \sqrt{i} \ket 1)/\sqrt{2}\) is the \(T\) gate magic state, and \(\hat \Pi = \prod_{i=1}^n (\hat I + p_i \hat P_i)\), \(p_i = \pm \frac{1}{2}\), for \(\hat P_i\) a (multi-qubit) Pauli operator.

Tensoring \(T\) gate magic states extends Eq.~\ref{eq:probofapp} to quantum universality. In other words, taking the limit \(k \rightarrow \infty\) approximates any quantum probability. 

Classical simulation of \(\Pi_k\) (Eq.~\ref{eq:probofapp}) is frequently performed in terms of stabilizer state inner products. The number of these inner products scales exponentially with \(k\) and so their number should be targeted for reduction in the development of any classical simulation that aspires to be more efficient. One way to minimize the number of inner products is by finding a decomposition of \(\ket{T}^{\otimes k}\) into fewer stabilizer states. Since tensor products of stabilizer states \(\ket 0\) are stabilizer states themselves, a stabilizer decomposition of \(\ket{T}^{\otimes k}\) can trivially be extended to a stabilizer decomposition of \(\ket{\Psi_k}\) with the same number of states. The minimum such decomposition of \(\ket{T}^{\otimes k}\) is called the stabilizer rank, \(\chi_k\), of \(\ket{T}^{\otimes k}\).

The tensor product of two stabilizer states is a stabilizer state, which means there exists a \emph{trivial tensor bound} on the stabilizer rank for all integer powers of a state: \(\chi_t \le (\chi_k)^{t/k} \) where \(t\) is a multiple of \(k\). This bound is sub-multiplicative; it is possible that the actual stabilizer rank $\chi_{t}$ for any multiple of \(k\) is strictly less than this trivial bound. If so, this implies that $\ket{\Psi}^{\otimes t}$ has a more efficient stabilizer decomposition and thereby establishes a more efficient asymptotic scaling bound.

Looking for better \(\chi_k\) scaling that beats prior trivial tensor bounds has required Monte Carlo numerical searchs of the stabilizer space~\cite{Bravyi16_2} and produced the following novel reductions for the \(T\) gate magic state: \(\chi_1 = 2\), \(\chi_2 = 2\), \(\chi_3 = 3\), and \(\chi_6 \le 7\)~\cite{Howard18}. The trivial tensor bound implies the following upper bounds: \(\chi_t \le (\chi_1)^t = 2^t\) for arbitrary \(t\), \(\chi_t \le (\chi_2)^{t/2} = 2^{0.5 t}\) for even \(t\), \(\chi_t \le (\chi_3)^{t/3} = 2^{\sim 0.53 t}\) for \(t\) a multiple of \(3\), and \(\chi_t \le (\chi_6)^{t/6} = 2^{\sim 0.47 t}\) for \(t\) a multiple of \(6\). It is clear that the last bound provides the most favorable such scaling.

To find a better asymptotic scaling via lower stabilizer decompositions requires finding \(\chi_k\) for larger \(k\). Unfortunately, the stabilizer rank grows at least linearly with \(k\) and the number of stabilizer states increase as \(2^{(1/2+o(1))k^2}\)~\cite{Aaronson04}. Therefore any numerical search must contend with a prohibitive search space that, for instance, grows to \(>10^{10^3}\) states by \(k=12\). Monte Carlo searches implemented on current hardware cease converging appreciably at \(k>7\). Therefore, a non-numerical method is especially desirable.

Here we take an algebraic approach to minimizing the number of inner products, which does not suffer from the scaling problems inherent in numerical searches. 

We establish the following results:
\begin{enumerate}
\item The six-qubit T gate magic state produces the best previously-known stabilizer rank scaling of \((\chi_6)^{t/6} = 2^{\sim 0.468 t}\) Gaussian eliminations of \(t \times t\) symmetric matrices with entries in \(\mathbb Z/2 \mathbb Z\) for \(t\) T gates. We find a reduction in the stabilizer rank of the twelve-qubit \(T\) gate magic state which improves this asymptotic scaling to \(2^{\sim 0.463 t}\). We numerically demonstrate this performance (see Fig~\ref{fig:strongsimscaling}) and freely release code that performs universal Pauli-based computations with this improved scaling.

\item We demonstrate an alternative stabilizer decomposition that also produces a minimal number of stabilizer state inner products \(\xi_k\) for single-Pauli measurements (i.e. restricting \(\hat \Pi\) in Eq.~\ref{eq:probofapp} to be a single-Pauli measurement: \((\hat I + \hat P)/2\)), if only the unique Gauss sums are found and counted. This alternative approach exhibits the same novel reductions in its scaling as \(\chi_k\) from tensoring products of \mbox{one-}, \mbox{two-}, \mbox{three-}, six-qubit T gates to produce any higher multiple while requiring less polynomial overhead since entanglement does not exist for single-Pauli measurements on tensored states. As a result, the new method is far easier to find reductions with.

\begin{itemize}
\item The Gauss sum scaling from the new method produces an asymptotic lower bound on the stabilizer rank scaling for single-Pauli measurement (see Theorem~\ref{th:asymptoticlowerbound}) which empirically appears to be tight.
\item It is important to restrict to decompositions of linear combinations of Gauss sums, not products of Gauss sums, to achieve this tight lower bound, as discussed in Section~\ref{sec:singlepaulimeas}.
\end{itemize}
\item Since finding the optimal Gauss sums is demonstrably easier than the stabilizer rank, we leverage this to push past prior stabilizer rank searches and find a new novel reduction at \(k=12\). This produces a new bound of \((\xi_{12})^{t/12} \le 2^{\sim 0.449 t}\) Gaussian eliminations (see Fig~\ref{fig:scaling}). If the lower bound in Theorem~\ref{th:asymptoticlowerbound} is tight and extends to multi-Pauli measurement, this implies that the reduction we found of \((\chi_{12})^{t/12} \le 2^{\sim 0.463 t}\) can be reduced further to at least \((\xi_{12})^{t/12}\).
\begin{itemize}
\item Our constructive proof directly shows how the reduction of \(k=12\) is iteratively based on the reduction obtained from \(k=6\), \(k=3\), \(k=2\), and \(k=1\). This explains why the novel reductions are found from these particular values of \(k\), an explanation lacking previously in the stabilizer inner product approach.
\end{itemize}
\item Since our proof is constructive, we explicitly give the Gauss sum decompositions that saturate this new bound for single-Pauli measurements. We release code of this decomposition that validates its scaling.
\end{enumerate}

This approach to simulating universal quantum probabilities builds upon the foundations of prior advances in stabilizer rank-based approaches~\cite{Bravyi16_1, Bravyi16_2, Qassim19, Huang20} and alternative quasiprobability approaches~\cite{Stahlke14,Pashayan15,Bennink17,Howard17,Huang18,Seddon19,Seddon20}. The latter approach tackles the sampling version of this problem---the task of sampling from the probabilities of quantum circuits composed of a universal gateset to additive error---which belongs to the \(BQP\) complexity class and is sometimes called \emph{weak simulation}. In particular, in this work, we will demonstrate a striking relationship between the Gauss sums that are produced when calculating expectation values in the Wigner and generalized quasiprobability representations that are frequently used in these latter approaches~\cite{Pashayan15,Bennink17,Kocia18_2,Seddon19,Kocia20}, and stabilizer-based decompositions that are similar to dyadic frames~\cite{Seddon20} and their related decompositions.

\section{Novel Stabilizer Rank Reduction}
\label{sec:stabrankreduction}

Clifford transforming the T gate magic state to \(e^{\frac{\pi i}{8}} S\ket{T} = \cos\frac{\pi}{8} \ket 0 + \sin \frac{\pi}{8} \ket 1 \equiv \ket H\), where \(S = \text{diag}\{1, i\}\), produces a state with only real coefficients. This implies that its minimal stabilizer decomposition can be written with only real coefficients on real-valued stabilizer states as well, which reduces the stabilizer search space. One such decomposition found numerically~\cite{Bravyi16_1} is,
\begin{eqnarray}
\label{eq:sixqubitmagicstatedecomp}
\ket{H^{\otimes 6}} &\propto& (-16 + 12\sqrt{2}) \ket{B_{6,0}} + (96 - 68\sqrt{2})\ket{B_{6,6}} \nonumber\\
                    && + (10 - 7\sqrt{2})\ket{E_6} + (-14 + 10\sqrt{2})\ket{O_6} \nonumber\\
                    && +(7 - 5 \sqrt{2})Z^{\otimes 6} \ket{K_6} + (10 - 7\sqrt{2})\ket{\phi'_0}\\
                    && + (10 - 7 \sqrt{2})\ket{\phi''_0}, \nonumber
\end{eqnarray}
where \(\ket B_{6,0} = \ket{0}^{\otimes 6}\), \(\ket B_{6,6} = \ket{1}^{\otimes 6}\), \(\ket{E_6}\) (\(\ket{O_6}\)) is the sum of all computational basis states with even (odd) Hamming weight, and the definitions for the rest of the stabilizer states can be found in~\cite{Bravyi16_1}.

It follows that \(\ket{H^{\otimes 12}} = \ket{H^{\otimes 6}} \otimes \ket{H^{\otimes 6}}\) and so it can be decomposed into \(7^2 = 49\) stabilizer states. These are the stabilizer states found by tensoring the stabilizer states in Eq.~\ref{eq:sixqubitmagicstatedecomp} with themselves.

Two stabilizer states in this decomposition (with their coefficients) are
\begin{equation}
\label{eq:twelvequbitstabstate1}
  (-16 + 12\sqrt{2}) (96 - 68\sqrt{2}) ( \ket{B_{6,0} \otimes B_{6,6}} + \ket{B_{6,6}\otimes B_{6,0}}).
\end{equation}

This linear combination also happens to equal a stabilizer state itself; \(\ket{B_{6,0} \otimes B_{6,6}} {=} \ket{0^{\otimes 6}1^{\otimes 6}}\) and \(\ket{B_{6,6} \otimes B_{6,0}} {=} \ket{1^{\otimes 6}0^{\otimes 6}}\) transform to \(\ket{0^{\otimes 5} \otimes 1\otimes 0^{\otimes 6}}\) and \(\ket{0^{\otimes 11} \otimes 1}\), respectively, by the Clifford transformation \(\prod_{j=2}^{6} C_{7,j+6}C_{1,j}\), where \(C_{i,j}\) is a controlled-not gate in control qubit \(i\) and target qubit \(j\), and so are Clifford isomorphic to a Bell state. 

\begin{figure}[ht]
  \includegraphics[scale=0.55]{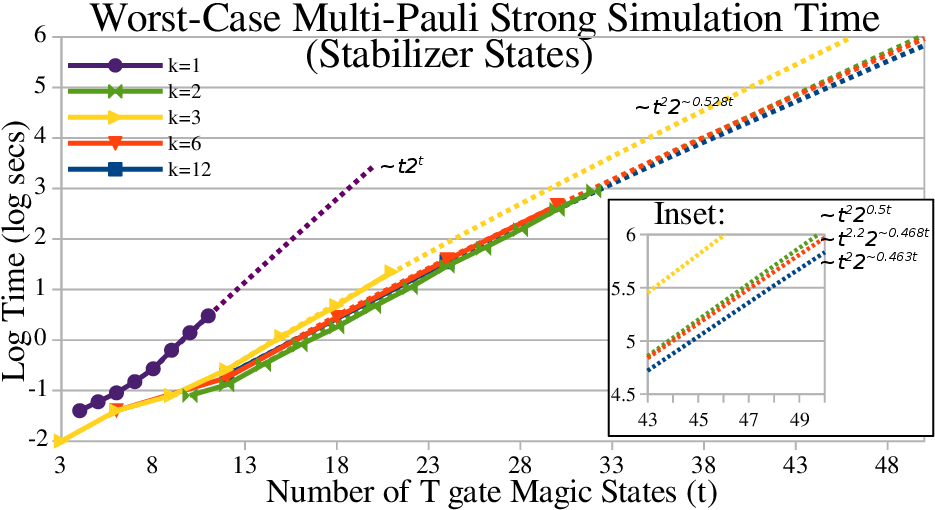}
  \caption{100 stabilizer states are sampled to calculate a random \(n\)-Pauli expectation value to some fixed relative error \(> 10\%\).}
  \label{fig:strongsimscaling}
\end{figure}

Two other stabilizer states in this decomposition (with their coefficients) are
\begin{equation}
\label{eq:twelvequbitstabstate2}
  (10 - 7\sqrt{2})(-14 + 10\sqrt{2}) (\ket{E_6} \ket{O_6} + \ket{O_6} \ket{E_6}).
\end{equation}

Six stabilizer generators of the stabilizer group of \(\ket{E_6}\) are \(XXXXXX\), \(XXIIII\), \(IXXIII\), \(IIXXII\), \(IIIXXI\), and \(ZZZZZZ\). \(\ket{O_6}\) has the same generators with the exception that the last one has a minus sign. This means that there exists a Clifford transformation that transforms both of them to computational basis states. Their tensor products must also be Clifford-isomorphic to orthogonal computational states. Hence, Eq.~\ref{eq:twelvequbitstabstate2} is also equal to a stabilizer state since it is Clifford-isomorphic to a Bell state.

Reducing these these two pairs of stabilizer states to two stabilizer states brings down the number of stabilizer states in the trivial tensored decomposition from \(49\) down to \(47\). This scales asymptotically as \(2^{\sim 0.463 t}\) for \(t\) a multiple of \(12\). The explicit form of the stabilizer decomposition for \(\ket{T^{\otimes 12}}\) can be found in Appendix~\ref{sec:minimalstabdecomp}. Numerical results are presented in Figure~\ref{fig:strongsimscaling} and details can be found in Appendix~\ref{sec:strongsimnumericaldetails}.

The tensored decompositions from this reduction can undergo similar reductions due to the presence of equally-weighted stabilizer states that are Clifford-isomorphic to Bell states. However, the improvement is asymptotically smaller and can be numerically found to not lower the exponential factor by ten-thousandth (i.e. any improvement still rounds to \(2^{\sim 0.463 t}\)).

We note that finding such a reduction from the trivial tensored stabilizer decomposition of smaller decompositions is in general not expected. Indeed, a similar effort of using a \(k=3\) decomposition (such as Eq.~\ref{eq:3qubitTgatedecomp}) to infer a minimal stabilizer decomposition for \(k=6\) (such as Eq.~\ref{eq:6qubitdecomposition}) would require using more elaborate properties. This is why the \(k=6\) reduction in the past was actually found by a numerical search. Moreover, there is no indication in the literature that the reductions found at \(k=2\), \(k=3\) and \(k=6\) have a dependent relationship on each other, let alone an iterative relationship. As a result, there is no reason to believe beforehand that a scaling improvement can be found at \(k=12\). However, in the following sections, we will constructively prove that a reduction exists and argue that it can improved further.

\section{Single-Pauli Measurement Lower Bound}
\label{sec:singlepaulimeas}

In the next section we will restrict to the case of single-Pauli measurements to ease our formal analysis of finding a lower bound on the stabilizer rank. In this section, we describe why this is a useful endeavor when it is appropriately restricted.

As defined after Eq.~\ref{eq:probofapp}, \(\ket T = 2^{-1/2} (\ket 0 + \sqrt{i} \ket 1)\). Since the stabilizer rank of \(\ket T\), \(\chi_1 = 2\), this is a minimal stabilizer decomposition. Single-Pauli measurements, \(2^{-1} \bra{T}^{\otimes t}(I \pm P) \ket{T}^{\otimes t}\), where \(P \equiv \prod_{i=1}^n P_i\) for \(P_i\) a single-qubit Pauli, can be decomposed into the following linear combination of \(t\)-qubit stabilizer state inner products:
\begin{equation}
  \label{eq:vanillalineardecomp}
  2^{-t/2-1}\left(1 + \sqrt{i}^{|x|-|x'|} \sum_{\bs x,\, \bs x' \in \mathbb F_2^n} \expval{\bs x}{P}{\bs x'}\right),
\end{equation}
where \(|x|\) denotes the Hamming weight of \(x\) in binary. This linear combination consists of \(((\xi_1)^t)^2 = 2^{2t}\) \(t\)-qubit inner products. However, since \(P\) is separable into single-qubit Pauli operators, then so is the overall inner product. This permits the sum to be factored:
\begin{equation}
  2^{-t/2-1}\left(1 + \sqrt{i}^{|x|-|x'|} \prod_{i=1}^t \sum_{x,\, x' \in \mathbb F_2} \expval{x}{P_i}{x'}\right).
\end{equation}
Every set of \(x_1^2=4\) one-qubit inner products, \(\sum_{x,\, x' \in \mathbb F}\), can be calculated separately and the \(t\) results can be multiplied together to produce the full result. This tallies up to \(4t\) one-qubit inner products compared to Eq.~\ref{eq:vanillalineardecomp}'s \(2^{2t}\) \(t\)-qubit inner products. An exponential reduction!

Nevertheless, it is the first decomposition consisting of \(2^{2n}\) inner products that is of interest to use as a lower bound on general multi-Pauli measurement.

This is because multi-Pauli measurements, \(\prod_{i=1}^t (I + \sigma_i \mathbb P_i)\), where \(\sigma_i = \pm 1\) and \(\mathbb P_i \equiv \prod_{j=1}^t P_i\), no longer generally permit the factoring reduction possible with single-Pauli measurements since they no longer keep the initially separable \(T\)-gate tensored magic states separable.

As a result, the stabilizer decomposition that produces the lowest possible \emph{linear} combination of stabilizer state inner products for single-Pauli measurement corresponds to a more useful (i.e. tighter) lower bound on the stabilizer rank of general multi-qubit Pauli measurements. Therefore, here we will not allow for independent calculations of products of stabilizer state inner products when trying to produce such a lower bound.

The same sort of restriction will hold in the next section where we generalize from stabilizer state inner products to Gauss sums.

\section{Gauss Sum Rank Bound}

Gauss sums are proportional to the determinant of their covariance matrix:
\begin{eqnarray}
  \label{eq:Gausssum}
  G_m(\bs A, \bs v, c) &\equiv& \sum_{x_1, \ldots, x_k=0}^1 \exp \left[ \frac{\pi i}{2^m} (\bs x \bs A \bs x^T + 2 \bs v \bs x^T + c) \right] \nonumber\\
  &\propto& 2^{\det \bs A/2},
\end{eqnarray}
where \(2^m \bs A^{-1}\) has even diagonal entries. The proportionality constant in Eq.~\ref{eq:Gausssum} consists of an eighth root of unity. See Definition 2.1 and Remark 2.2 in~\cite{Fisher02} for a full identity.

Calculating these sums for \(n\) qubits requires Gaussian elimination to find the determinant of the \(k\times k\) covariance matrix \(\bs A\) with entries in \(\mathbb Z/ 2 \mathbb Z\) in Eq.~\ref{eq:Gausssum}. Therefore it scales as \(\mathcal O(k^3)\).

\(n\)-qubit stabilizer states can be expressed as a quadratic form defined on a \(m\)-dimensional affine space in the computational basis~\cite{Dehaene03,Nest08} where \(m \le n\):
\begin{equation}
  \ket{\Psi} = 2^{-\frac{m}{2}}\sum_{x \in \mathcal A_{G,h}} (-1)^{q(x)}i^{l(x)} \ket{x},
\end{equation}
where \(l(x)\) is a linear term, \(q(x) = \sum_{i\ne j} q_{ij} x_i x_j + c_i x_i\), where \(q_{ij}\), \(c_i \in \mathbb Z_2\). \(\mathcal A_{G,u}\) is an affine space, \(\mathcal A_{G,u} = \{ G u \oplus h | u \in \mathbb Z_2^n\}\) where \(\oplus\) is mod \(2\) addition, with \(G\) an \(m \times n\) matrix with entries in \(\mathbb Z_2\) and \(h \in \mathbb Z_2^n\). \(l(x)\) and \(q(x)\) are functions of \(x\) in terms of its coordinates (i.e. components in the affine space), \((x_1, \ldots, x_m)\) for \(x_i \in \{0,1\}\), whereas \(\ket{x}\) is expressed in the basis of the affine space, \(\ket{x} = \ket{h \oplus x_1 g_1 \oplus \ldots \oplus x_m g_m}\), in terms of \(n\) qubits.

As a result, stabilizer state inner products are Gauss sums~\cite{Bravyi16_2}. This association is injective; each stabilizer state inner product can be associated with a Gauss sum while any Gauss sum can be associated with many stabilizer state inner products. This flexibility can be used to algebraically lower the number of Gauss sums without bothering to associate them consistently with a state's stabilizer decomposition.

  Given a state \(\ket \Psi\) on \(n\) qubits and a set of projectors \(\{\hat \Pi_i\}_{i=1}^\mu\), choose a state decomposition \(\ket \Psi = \sum_{i=1}^\nu c_i \ket{\phi_i}\) (not necessarily consisting of stabilizer states) such that there exists some Clifford gate \(U_C\) and set of stabilizer states \(\{\ket{\varphi_i}\}_{i=1}^\nu\) and
  \begin{equation}
    \label{eq:stabstate_correspondence}
    \expval{\phi_j}{\hat \Pi_i}{\phi_k} = \expval{\varphi_j}{\hat \Pi_i \hat U_C}{\varphi_k}  \, \forall j,\, k \in \{1, \ldots, \nu\}.
  \end{equation}
  This establishes a one-to-one correspondence between the states \(\ket{\phi_i}\) and stabilizer states \(\ket{\varphi_i}\) that allows us to define a more general notion of a minimal number of Gauss sums:
\begin{definition}
  \label{def:GaussSumRank}
  We define the \emph{Gauss sum rank} for an \(n\)-qubit state \(\ket{\Psi}\), \(\xi \equiv \xi(\ket{\Psi}, \{\hat \Pi_i\}_{i=1}^\mu)\), to be the minimum number of non-zero unique Gauss sums with maximum dimension \(n\) considered over all such decompositions \(\{\phi_i\}\) and \(\{\varphi_i\}\) given by Eq.~\ref{eq:stabstate_correspondence}, where the cost of specifying the unique Gauss sums and their multiples is \(\mathcal O(n^3)\). Gauss sums are defined to be non-unique if they are term-wise identical in the computational basis.
\end{definition}
Note that this definition includes the case where \(\{\ket{\phi_i}\}_{i=1}^\nu\) are stabilizer states too.

While the definition of a stabilizer rank \(\chi(\Psi) \equiv \chi\) is only dependant on a state \(\Psi\), the definition of a Gauss sum rank \(\xi(\Psi, \Pi) \equiv \xi\) is dependent on a state \(\Psi\) and a set of observables \(\{\Pi_i\}_i\). \(\chi_k\) stabilizer states can produce from \(\chi_k\) to \(\chi_k^2\) non-zero inner products when evaluating the expectation value of some operator. It follows that \(\xi_k \le \chi_k^2 \, \forall \, \Pi\). 

For \(k\)-separable measurements \(\{\hat \Pi_i\}_i\), the Gauss sum always obeys the same trivial tensor bound as the stabilizer rank: \(\xi_t \le (\xi_k)^{t/k} \) where \(t\) is a multiple of \(k\). Similarly, the cost of simulation of \(\Pi_k\) scales the same: \(\xi_k\) stabilizer inner products, each requiring \(k^3\) operations to calculate, for a total cost of \(\xi_k k^3\) operations compared to \(\chi_k^2 k^3\) in the stabilizer decomposition~\cite{Bravyi16_2}. However, in the stabilizer case, the quadratic power of \(\chi_k\) can be decreased to a linear power by taking advantage of the fact that stabilizer states form a two-design~\cite{Dankert09}. This necessarily involves sampling random stabilizer states \(\{\ket{\psi_a}\}_{a\in \mathcal S}\) and scales with additional cost: \(L(\epsilon, p_f) \equiv\epsilon^{-2} \log(p_f^{-1})\) to obtain the quantum probability (Eq.~\ref{eq:probofapp}) with relative error \(\epsilon\) with probability at least \((1-p_f)\)~\cite{Bravyi16_1}. 
\begin{equation}
\label{eq:stabstatesamplingexpval}
  \expval{\Psi}{\hat \Pi}{\Psi} = \sum_{a=1}^{L(\epsilon, \,p_f)} \left( \left( \sum_{i=1}^{(\chi_k)^{t/k}} c_i \braket{\psi_a}{\phi_i} \right) \times \text{c.c.} \right)
\end{equation}

The result of this transformation is that only \(L(\epsilon, p_f) \chi_k k^3\) operations are necessary to evaluate the expectation value \(\Pi_k\) to relative error. A very similar quadratic reduction can be made for the Gauss sum rank cost for single-Pauli measurements, as we will see shortly, which will allow us to use it as a lower bound on the stabilizer rank cost.

If we restrict ourselves to single-Pauli measurements, to find an upper bound on the Gauss sum rank of the T gate magic state, we can exploit two properties that these states have:
\begin{enumerate}
\item they possess equiprobable stabilizer decompositions
  \begin{itemize}
  \item this produces optimal conditions for their expectation values to be repeated
  \end{itemize}
\item these stabilizer decompositions are taken to one another under Paulis
  \begin{itemize}
  \item as a result, the expectation values of \(\nu\) states only produces \(\nu\) non-zero Gauss sums instead of up to \(\nu^2\)
  \end{itemize}
\end{enumerate}
In particular, these are properties of stabilizer decompositions that can be written in terms of stabilizer states Clifford-isomorphic to computational basis states tensored with single-qubit stabilizer states on the Bloch sphere equator,
\begin{theorem}[Gauss Sum Rank of Single-Pauli]
\label{th:PauliExpectationScaling}
Consider a Clifford gate \(U_C\) such that 
\begin{equation}
  \hat U_C \ket{\Psi} = \sum_{i=0}^{\nu} c_i \prod_j^{m} \ket{\phi_{ij}} \prod_k^{n-m} \ket{\psi_{ik}},
\end{equation}
where \(\{\ket{\phi_{ij}}\}\) are orthogonal single-qubit stabilizer states, \(\{\ket{\psi_{ik}}\} \equiv \frac{1}{\sqrt{2}}(\ket 0 + i^{\mu_{ik}} \ket 1)\) are single-qubit stabilizer states on the Bloch sphere equator, \(c_i \in \mathbb R\) and \(\mu_{ik} \in \mathbb Z\). Then, for \(\hat \rho = \ketbra{\Psi}{\Psi}\) and \(\hat P\) a Pauli operator, the Gauss sum rank of \(\Tr \left[ \hat P \hat \rho \right]\) is at most \(\nu\) and the decomposition is made up of Gauss sums with equal absolute value magnitudes.
\end{theorem}
\begin{proof}
  It is clear that for \(\hat P = \hat I\), there are only \(\nu\) non-zero orthogonal stabilizer state inner products that are non-zero, namely the ones with the same bras and kets. Any \(\hat P \ne \hat I \) can be taken to act on the kets. The ket stabilizer states can be taken to the computational basis states by a Clifford operator, \(\hat U_C\). Clifford gates act on Paulis, \(\hat P\), by taking them to other Paulis, \(\hat P'\). Paulis act on computational basis states, \(\ket{\alpha_i}\), by taking them to other computational basis states, \(\ket{\alpha'_i}\). As a result, each ket stabilizer state, \(\ket{\phi_i}\) and \(\ket{\psi_j}\), which previously only had non-zero inner product with one bra stabilizer state, can still only have non-zero inner product with one bra stabilizer state:
  \begin{eqnarray}
    && \sum_{i,j}^{\nu} \prod_{k,l} \prod_{m,n}\expval{\phi_{ik},\psi_{im}}{\hat U^\dagger_C \hat P \hat U_C}{\phi_{jl},\psi_{jn}} c^*_i c_j \nonumber\\
    &=& \sum_{i,j}^{\nu} \prod_{k,l} \prod_{m,n}\expval{\phi_{ik},\psi_{im}}{\hat P'}{\phi_{jl},\psi_{jn}} c^*_i c_j \\
    &=& \sum_{i,j}^{\nu} \prod_{k,l} \prod_{m,n}\braket{\phi_{ik},\psi_{im}}{\phi'_{jl},\psi'_{jn}} c^*_i c_j
  \end{eqnarray}
Therefore, the total sum consists of only \(\nu\) non-zero inner products.

Paulis take stabilizer states that lie on the Bloch sphere equator to other stabilizer states on the Bloch sphere equator. As a result, the absolute value of the final inner product sum remains invariant under the Paulis.
\qed
\end{proof}
This was originally noticed in~\cite{Kocia18_2} (see Theorem 3) for odd-dimensional qudits, and is extended here to qubits.

As a result, it can also be shown that the Gauss sum rank cost for single-Pauli measurements is a lower bound on the stabilizer rank cost in the asymptotic limit:

\begin{theorem}[Asymptotic Bound on Stabilizer Rank]
\label{th:asymptoticlowerbound}
In the limit that \(t \rightarrow \infty\), the cost of evaluating single-Pauli measurement expectation values in terms of the Gauss sum rank (where independent calculations of products of stabilizer state inner products is not permitted as discussed in Section~\ref{sec:singlepaulimeas}) is \(\mathcal O (\xi_t)\) and lower bounds cost of evaluating the expectation values in terms of the stabilizer rank cost, which is \(\mathcal O (\chi_t)\). 
\end{theorem}
\begin{proof}
We are interested in calculating \(\expval{\Psi}{(I+P)}{\Psi} = ||(I+P)\ket{\Psi}||^2\).
\begin{eqnarray}
  && \expval{\Psi}{(I+P)}{\Psi} \nonumber\\
&=& \expval{\Psi}{I}{\Psi} + \expval{\Psi}{P}{\Psi}\\
&=& \sum_{i=1}^{\le \xi} c_{i}^* c_{f(i)} \expval{\varphi_{i}}{I}{\varphi_{f(i)}} \\
&&+ \sum_{i=1}^{\le \xi} (c'_{i})^* c'_{f(i)} \expval{\varphi'_{i}}{P' U_C}{\varphi'_{f(i)}} \nonumber\\
\end{eqnarray}
where \(f:\mathbb Z \rightarrow \mathbb Z\) is a map that associates the \(i\)th state with the one state it has a non-zero inner product with under \(\hat P\).

This contains \(2 \xi_t\) diagonal Gauss sums.

Similarly, a decomposition of \(\ket{\Psi}\) into a minimal decomposition of \(\chi_t\) stabilizer states would produce \(L \chi_t\) Gauss sums above, for \(L = \mathcal O(\epsilon^2 \log p_f^{-1})\) for multiplicative error \(\epsilon\) and failure probability \(p_f\), using their two-design property~\cite{Dankert09}.

Since \(\mathcal O(\epsilon^{-2} \log(p_f^{-1}) t^3) \chi_t \underset{t\rightarrow \infty}{\rightarrow} \chi_t\) and the Gauss sum calculation above to relative error scales as \(\xi_t\), it follows that since the Gauss sum rank includes the stabilizer rank in its definition that the cost of evaluating the single-Pauli measurement expectation values in terms of the Gauss sum asymptotic rank \(\xi_t\) is a lower bound on the cost in terms of the stabilizer asymptotic rank \(\chi_t\).\qed
\end{proof}

The T gate magic state \(\ket{T^{\otimes k}}\) for \(k=1\) and \(k=2\) possesses such stabilizer decompositions and they are also minimal rank stabilizer decompositions.

For \(k=1\):
\begin{eqnarray}
\label{eq:1qubittgatemagicstate}
  && \Tr \left[ \hat P \hat \rho_{T} \right] \nonumber\\
  &=& \frac{1}{2} \left[ \left(\expval{0}{\hat P}{0} + \expval{1}{\hat P}{1}\right) \right. \\
  && \left.+  \left(\sqrt{i} \expval{0}{\hat P}{1} + \sqrt{-i} \expval{1}{\hat P}{0}\right) \right] \nonumber\\
  &=& \frac{\omega}{2} \left\{ \left[ \delta(\gamma + \delta ) + (-1)^{\beta}\delta(\gamma + \delta ) \right] \right. \nonumber\\
  && \left. + \left[ \sqrt{i} (-i)^{\delta} \delta(\alpha + \beta ) + \sqrt{-i} i^{\delta}\delta(\alpha + \beta ) \right] \right\} \nonumber
\end{eqnarray}
where \(\hat P = \omega(\alpha \hat I + \beta \hat Z + \gamma \hat X +\delta \hat Y)\), for \(\omega\) a fourth root of unity and \(\alpha, \, \beta, \, \gamma, \, \delta \in \{0,1\}\) such that \(\alpha + \beta + \gamma + \delta = 1\) (i.e. \(\hat P\) is a single-qubit Pauli operator). The Kronecker delta function arguments are to be considered modulo two here and below.

The Gauss sums are formed from a minimal stabilizer decomposition that are computational basis states: \(\ket{0}\) and \(\ket{1}\). Therefore, according to Theorem~\ref{th:PauliExpectationScaling}, for a given \(\hat P\), only two of the Gauss sums are non-zero. The stabilizer rank for \(k=1\) is also two. 

The last equality in Eq.~\ref{eq:1qubittgatemagicstate} shows that \(\Tr \left[ \hat P \hat \rho_T \right]\) consists of four terms that can each be written as Kronecker delta functions multiplied by phases. For the sake of dimensional analysis, these Kronecker delta functions should be considered in terms of their Gauss sum identities: \(\delta(x) = 2^{-1} \sum_{y \in \mathbb Z/2\mathbb Z} \exp (\pi i xz)\). Each line consists of a set of two Gauss sums that are chosen so that only such set is non-zero for a given \(P\).

A very similar decomposition can be made for \(k=2\). Setting \(\ket{\phi_1} = \frac{1}{\sqrt{2}} (\ket {00} + i\ket {11})\) and \(\ket{\phi_2} = \frac{1}{\sqrt{2}} ( \ket{01} + \ket{10} )\), such that \(\ket {T^{\otimes 2}} = (\ket{\phi_1} + \sqrt{i}\ket{\phi_2})/\sqrt{2}\), we can write
\begin{widetext}
\begin{eqnarray}
  \label{eq:2qubittgatemagicstate}
  &&  \Tr \left[ \hat P \hat \rho_{T^2} \right] \\
  &=& \frac{1}{2} \left[ \left(\expval{\phi_1}{\hat P}{\phi_1} + \expval{\phi_2}{\hat P}{\phi_2}\right) \right.\nonumber\\
  && \quad \left. + \sqrt{i} \expval{\phi_1}{\hat P}{\phi_2} + \left(\sqrt{-i} \expval{\phi_2}{\hat P}{\phi_1} \right) \right] \nonumber\\
  &=& \frac{\omega}{2} \left\{ \left[ i^{\delta_2-\gamma_1} G_1 \left( 2(\beta_1 + \beta_2 + \gamma_1 + \delta_2) \right) + (-1)^{\beta_1+\delta_1} i^{\delta_1+\delta_2} G_1 \left( 2(\beta_1 + \delta_1 + \beta_2 + \delta_2) \right) \right] \delta( \gamma_1+\delta_1 -\gamma_2-\delta_2 ) \right.\nonumber\\
  && \left. \quad + \left[ \sqrt{i} (-1)^{\beta_2+\gamma_2} i^{\delta_1 -\gamma_2 } (-i) G_1(1,\beta_2 + \gamma_2 + \beta_1 + \delta_1) + \sqrt{-i} i^{\delta_1 + \delta_2} G_1(1,\beta_1 + \beta_2 + \delta_1 + \delta_2) \right] \delta(\gamma_1 + \delta_1 - \gamma_2 - \delta_2 +1 ) \right\} \nonumber
\end{eqnarray}
\end{widetext}
for \(\hat P = \prod_{i=1}^k\omega(\alpha_i \hat I + \beta_i \hat Z_i + \gamma_i \hat X_i +\delta_i \hat Y_i)\), where \(\omega\) is a fourth root of unity and \(\alpha_i, \, \beta_i, \, \gamma_i, \, \delta_i \in \{0,1\}\) such that \(\alpha_i + \beta_i + \gamma_i + \delta_i = 1\) (i.e. \(\hat P\) is a Pauli operator). We will use this same definition for \(\hat P\) for \(k>2\).

The last equality of Eq.~\ref{eq:2qubittgatemagicstate} consists of \(4\) terms that can be written as two-dimensional Gauss sums (again, the Kronecker delta functions should be considered in their Gauss sum form for this dimensional analysis). Again, each line consists of a set of two Gauss sums that are chosen so that only one set is non-zero for a given \(\hat P\). Similarly as for \(k=1\), the stabilizer rank for \(k=2\) is also two. As before, the non-zero Gauss sums are formed from a minimal stabilizer decomposition: \(\ket{\phi_1}\) and \(\ket{\phi_2}\).

These two examples are particularly simple because the Gauss sums in their decomposition come from the inner products of stabilizer states that satisfy the constraints given by Theorem~\ref{th:PauliExpectationScaling}. In general, for \(k>2\), it has been found that stabilizer decompositions that saturate the stabilizer rank are not even orthogonal, let alone Clifford-isomorphic to computational basis states tensored with separable stabilizer states on the Bloch sphere equator~\cite{Bravyi16_2}. As discussed, the only known way to find these general non-orthogonal decompositions is by searching the prohibitively large stabilizer space.

Fortunately, a corollary can be made to Theorem~\ref{th:PauliExpectationScaling} that will allow us to keep finding novel Gauss sum reductions for \(k>2\):

\begin{corollary}[T Gate Magic State Gauss Sums]
\label{co:PauliExpectationScaling}
  Theorem~\ref{th:PauliExpectationScaling} holds if the single-qubit stabilizer states that lie on the Bloch sphere equator are replaced with single-qubit T gate magic states, where the resultant Gauss sums are greater in dimension by the number of the single-qubit T gate magic states.
\end{corollary}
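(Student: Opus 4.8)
The plan is to re-run the proof of Theorem~\ref{th:PauliExpectationScaling} essentially verbatim, isolating the single place where it used that the equatorial factors are \emph{stabilizer} states — the sentence ``the ket stabilizer states can be taken to the computational basis states by a Clifford operator $\hat U_C$'' — and replacing it by a local argument about the single-qubit $|T\rangle$ factors that is insensitive to their non-stabilizerness. The orthogonal factors $\{\ket{\phi_{ij}}\}$ remain ordinary (Clifford images of) computational-basis states, so the combinatorial core of the theorem is untouched: writing $\hat P = \bigotimes_q \hat P_q$ and applying the Clifford $\hat U_C$ that acts only on the $\phi$-qubits, bringing each term's $\phi$-string to a computational basis string, the conjugated operator $\hat U^\dagger_C \hat P \hat U_C$ is again a tensor product of single-qubit Paulis, and on the $\phi$-sector it maps each ket string to \emph{at most one} bra string. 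Because $\hat U_C$ acts trivially on the remaining qubits and each term is a product state across the $\phi/T$ cut, the magic-state overlaps factor out term-by-term and cannot create new matches; since $\braket{T}{T}=1\neq 0$ the diagonal survives for $\hat P=\hat I$, and for $\hat P\neq\hat I$ at most $\nu$ terms survive, exactly as before.

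The substantive new step — the one I expect to be the main obstacle — is to certify that the surviving terms are still Gauss sums in the sense of Definition~\ref{def:GaussSumRank}, and to pin down the promised increase in dimension. Here I would use the identity $\ket T = \tfrac1{\sqrt2}\sum_{x\in\{0,1\}}\omega_8^{\,x}\ket x = \tfrac12\sum_{x,y\in\{0,1\}}\omega_8^{\,q(x,y)}\ket x$, where $\omega_8\equiv e^{i\pi/4}=\sqrt i$ and $q$ is a quadratic form modulo $8$ compatible with the condition of Eq.~\ref{eq:Gausssum} (one explicit choice is $q(x,y)=1+7x+6y+4xy$, for which $\tfrac1{\sqrt2}\sum_y\omega_8^{q(x,y)}=\omega_8^{\,x}$ and $2^m\bs A^{-1}$ has even diagonal). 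Thus each single-qubit $|T\rangle$ factor introduces exactly one auxiliary $\mathbb Z_2$ summation variable, so a ket built from $r$ of them yields, in every surviving overlap, a Gauss sum whose number of binary variables exceeds the stabilizer count by $r$. Equivalently, the single-qubit overlaps $\expval{T}{\hat P_q}{T}$ take values in $\{0,\,1,\,\pm 1/\sqrt2\}$ up to an eighth-root phase, so each nontrivial magic-state factor can be realized from honest stabilizer states at the cost of one extra Hadamard-type ancilla, making the requirement of Definition~\ref{def:GaussSumRank} — that the terms be term-wise identical to Gauss sums obtained from a Clifford acting on stabilizer states — satisfiable with $n+r$ rather than $n$ underlying qubits. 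The cost of specifying the unique Gauss sums stays $\mathcal O(n^3)$ since $r\le n$ and Gaussian elimination on an $(n+r)\times(n+r)$ covariance matrix is still cubic.

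Finally I would close the equal-magnitude clause: Paulis permute the single-qubit $|T\rangle$ family among states of equal Bloch-vector length, and in the decompositions that actually arise — where $\hat U_C$ merely peels off the $\phi$-structure of $\ket{T^{\otimes k}}\otimes\ket{0\cdots0}$ and leaves the remaining $|T\rangle$ factors untouched and identical across all terms — the magic-state overlap factor is literally the same number in each surviving term, so the $\nu$ Gauss sums have equal absolute value, mirroring the equatorial invariance of Theorem~\ref{th:PauliExpectationScaling}. To reiterate, the hard part is the middle step — showing that admitting non-stabilizer $\ket T$ factors keeps us inside the Gauss-sum framework with a controlled ($+r$) dimension penalty and without violating the normalization hypotheses of Eq.~\ref{eq:Gausssum} and \cite{Fisher02} — whereas the ``$\le\nu$'' bookkeeping transfers from Theorem~\ref{th:PauliExpectationScaling} with no real change.
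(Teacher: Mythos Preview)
Your proposal is correct, and the bookkeeping for the ``at most $\nu$'' count and the equal-magnitude clause transfers from Theorem~\ref{th:PauliExpectationScaling} just as you describe.

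The paper's own proof takes a much more pedestrian route than yours. Rather than representing $\ket T$ structurally via an auxiliary-variable quadratic form and then arguing that Definition~\ref{def:GaussSumRank} is satisfied on $n+r$ effective qubits, the paper simply performs a two-case check on the single-qubit Pauli acting on each $T$-factor: for $P\in\{I,Z\}$ it observes directly that $\expval{T}{P}{T} = \tfrac12(\bra0+\bra1)P(\ket0+\ket1)$, and for $P\in\{X,Y\}$ that $\expval{T}{P}{T} = \tfrac12(\bra0-i\bra1)(\sqrt{i}\,PS)(\ket0+i\ket1)$, using the phase gate $S$ to absorb the $\sqrt{i}$. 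In either case the $T$-factor expectation value is literally a stabilizer-state inner product on the \emph{same} qubit, hence a two-dimensional Gauss sum, and the claimed dimension increase is read off without any ancilla construction or verification of the covariance condition in Eq.~\ref{eq:Gausssum}. Your route is more systematic and would extend more cleanly to other non-stabilizer single-qubit states admitting a quadratic-form lift; the paper's buys brevity by exploiting the specific relation between $\ket T$ and the equatorial stabilizer states through $S$.
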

\begin{proof}
The expectation value of \(P\), for \(P \in \{I, Z\}\), with respect to the T gate magic state, can be written as the off-diagonal expectation value of \(P\) with respect to two stabilizes states, or equivalently, the expectation value of \(P S\), for \(S = \text{diag}\{1,i\}\) the phase shift gate, with respect to a single stabilizer state:
\begin{eqnarray}
  \expval{T}{P}{T} &=& (\bra{0} + \bra{1}) (P) (\ket{0}+\ket{1})/2
\end{eqnarray}
The expectation value can be written in terms of a single stabilizer state and is a two-dimensional Gauss sum.

On the other hand, the expectation value of \(P\), for \(P \in \{X, Y\}\), with respect to the T gate magic state, can be written as the expectation value of \(\sqrt{i} P S\), for \(S = \text{diag}\{1,i\}\) the phase shift gate, with respect to a single stabilizer state:
\begin{eqnarray}
  \expval{T}{P}{T} &=& (\bra{0} + (-i) \bra{1}) (\sqrt{i} P S) (\ket{0}+ i\ket{1})/2.
\end{eqnarray}
This produces a diagonal two-dimensional Gauss sum.\qed
\end{proof}

Here we take the approach suggested by this Corollary. We use the decidedly non-minimal stabilizer decomposition at \(k=3\) from tensoring together the \(k=1\) and \(k=2\) decompositions, which have the benefit of consisting of stabilizer states that are Clifford-isomorphic to computational basis states. We will minimize the total number of Gauss sums by finding repetitions from the Gauss sums found by making the alternative decomposition in Corollary~\ref{co:PauliExpectationScaling}.

\section{Gauss Sum Scaling of Single-Pauli Measurement}

\subsection{\(k=3\)}

Due to the qubit permutation symmetry inherent in any tensored state, three different equiprobable stabilizer decompositions for \(k=3\) can be obtained from tensoring \(k=1\) and \(k=2\):
  \begin{eqnarray}
\label{eq:decomposition1}
    \ket{T}^{\otimes 3} &=& \frac{1}{2} C_{12} \left[ \ket{+'} \ket{00} + i\ket{+} \ket{11} \right.\\
                        &&\qquad \left. + \sqrt{i}\ket{+'} \ket{01} + \sqrt{i} \ket{+}\ket{10} \right] \nonumber\\
                        &=& \frac{1}{2} C_{23} \left[ \ket{0}\ket{+'}\ket{0} + i\ket{1}\ket{+}\ket{1} \right.\\
                        &&\qquad \left. + \sqrt{i}\ket{0}\ket{+}\ket{1} + \sqrt{i}\ket{1}\ket{+'}\ket{0}\right] \nonumber\\
                        &=& \frac{1}{2} C_{31} \left[ \ket{00}\ket{+'} + i\ket{11}\ket{+} \right.\\
                        &&\qquad \left. + \sqrt{i}\ket{01}\ket{+} + \sqrt{i}\ket{10}\ket{+'}\right] , \nonumber
  \end{eqnarray}
where \(\ket{+} \equiv (\ket{0} + \ket{1})/\sqrt{2}\) and \(\ket{+'} \equiv (\ket{0} + i \ket{1})/\sqrt{2}\).

As can be seen above, the four stabilizer states in these three decompositions are all Clifford-isomorphic to computational basis states tensored with single-qubit stabilizer states since their tensor components have this property. Therefore, by Theorem~\ref{th:PauliExpectationScaling}, they produce up to four three-dimensional Gauss sums of equal magnitude for the expectation value of any Pauli operator \(P\).

Similarly, the three-qubit T gate magic state can also be rewritten in three ways to be Clifford-isomorphic to computational basis states tensored with single-qubit T gate magic states:
\begin{eqnarray}
\label{eq:TgateMagicStateGaussSums1}
  \ket{T}^{\otimes 3} &=& \frac{1}{2} ( \ket{00T} {+} \sqrt{i}\ket{01T} {+} \sqrt{i}\ket{10T} {+} i\ket{11T})\\
  &=& \frac{1}{2} ( \ket{T00} + \sqrt{i}\ket{T01} + \sqrt{i}\ket{T10} + i\ket{T11})\\
\label{eq:TgateMagicStateGaussSums3}
  &=& \frac{1}{2} ( \ket{0T0} + \sqrt{i}\ket{0T1} + \sqrt{i}\ket{1T0} {+} i\ket{1T1}).
\end{eqnarray}

Therefore, by Corollary~\ref{co:PauliExpectationScaling}, \emph{given a Pauli}, this can be rewritten as four three-dimensional Gauss sums for its expectation value.

We will find the decompositions given by Eq.~\ref{eq:decomposition1}, Eq.~\ref{eq:TgateMagicStateGaussSums1} and Eq.~\ref{eq:TgateMagicStateGaussSums3} to be particularly useful. We can rewrite these terms parametrized by \(x\) and \(y\), and write out the sum over computational basis states in the Gauss sum or T gate magic state explicitly in terms of a third variable \(z\):
  \begin{eqnarray}
\label{eq:decomposition1_2}
    \ket{T}^{\otimes 3} &=& \frac{1}{2 \sqrt{2}} C_{12} \sum_{\substack{x,y,\\z=0}}^1 {i^{(x-1)^2(y+z)+xy/2+x(y-1)^2/2}}\\
                        &&\qquad \qquad \qquad \times \ket{4z{+}3x{+}y{+}2xy} \nonumber\\
\label{eq:TgateMagicStateGaussSums1_2}
                        &=& \frac{1}{2 \sqrt 2} \sum_{\substack{x,y,\\z=0}}^1 {i^{(x-1)^2(y)+xy/2+x(y-1)^2/2+z/2}}\\
                        &&\qquad \qquad \qquad \times \ket{z{+}6x{+}3y{-}5xy},  \nonumber\\
\label{eq:TgateMagicStateGaussSums3_2}
                        &=& \frac{1}{2 \sqrt 2} \sum_{\substack{x,y,\\z=0}}^1 {i^{(x-1)^2(y)+xy/2+x(y-1)^2/2+z/2}}\\
                        &&\qquad \qquad \qquad \times \ket{4z{+}3x{+}y{+}2xy},  \nonumber
  \end{eqnarray}
where Eq.~\ref{eq:decomposition1}, Eq.~\ref{eq:TgateMagicStateGaussSums1} and Eq.~\ref{eq:TgateMagicStateGaussSums3} are rewritten as Eq.~\ref{eq:decomposition1_2}, Eq.~\ref{eq:TgateMagicStateGaussSums1_2} and Eq.~\ref{eq:TgateMagicStateGaussSums3_2}, respectively.

A straightforward decomposition for \(k=3\) by tensoring together the \(k=1\) and \(k=2\) decompositions, corresponding to using the decomposition given by Eq.~\ref{eq:decomposition1_2}, can be found in Appendix~\ref{sec:3qubitdecomp} in Eq.~\ref{eq:3qubittgate1}. This produces \(\xi_3^2 = (\xi_1 \xi_2)^2 = 16\) Gauss sums by the trivial tensor bound. In Eq.~\ref{eq:3qubittgate1}, these sets are organized into four sets. As before, only one set is non-zero for a given \(\hat P\). 

By inspection, we can reduce the number of unique Gauss sums for any given Pauli by instead using the decomposition given by Eq.~\ref{eq:TgateMagicStateGaussSums1} for the first and second sets of four Gauss sums, and the decomposition given by Eq.~\ref{eq:TgateMagicStateGaussSums3} for the fourth set of four Gauss sums. This produces repeated Gauss sums that we can take account of with factors of \(2\):
\begin{widetext}
\begin{eqnarray}
  \label{eq:3qubittgatemagicstate}
  && \Tr \left[ \hat P \hat \rho_{T^3} \right]  \\
  &=& \frac{\omega}{8} \left\{\sum_{y=0}^{1} \sum_{x=0}^{y(\gamma_1+\gamma_2)+(y+1)(\gamma_1+\delta_2)} 2^{1-y(\gamma_1-\gamma_2)^2-(y-1)^2(\gamma_1-\delta_2)^2} \left[ \mathcal G_1(x,y) + \mathcal G_2(x,y) \right] \right. \nonumber\\
  && \quad + 2 \left. \sum_{x=0}^{1+\beta_3} \sum_{y=0}^{1+\alpha_3} \mathcal G_3(x,y) + \sum_{y=0}^1 \sum_{x=y(\delta_1 + \delta_2)}^{1+y(\gamma_1 + \gamma_2)} 2^{(y-1)^2[(x)^2(\gamma_1+\delta_2) + (x-1)^2(\gamma_2+\delta_1)]} \mathcal G_4(x,y) \right\} \nonumber
\end{eqnarray}
where
\begin{eqnarray}
  \mathcal G_1(x,y) &=& e^{\frac{\pi i}{2} ((\delta_2-\gamma_1)(y+1)^2 + 2(\beta_1 + \beta_2 + \gamma_1 + \delta_2)x(y+1)^2 + 2(\delta_1 + \delta_2 + \beta_1 + \beta_2) x y + (2\beta_1+3\delta_1+\delta_2)y} G_1( 2 \beta_3)\\
                    && \qquad \times  \delta( \gamma_1+\delta_1 -\gamma_2-\delta_2) \delta(\gamma_3 + \delta_3), \nonumber\\
  \mathcal G_2(x,y) &=& e^{\frac{\pi i}{2} ((\delta_2-\gamma_1)(y+1)^2 + 2(\beta_1 + \beta_2 + \gamma_1 + \delta_2)x(y+1)^2 + 2(\delta_1 + \delta_2 + \beta_1 + \beta_2) x y + (2\beta_1+3\delta_1+\delta_2)y} \sqrt{-i} G_1(\gamma_3+\delta_3) \\
                    && \qquad \times \delta( \gamma_1+\delta_1 -\gamma_2-\delta_2) \delta(\alpha_3 + \beta_3), \nonumber\\
  \mathcal G_3(x,y) &=& \sqrt{-i} e^{\frac{\pi i}{2} (2\beta_3 y + (x+1)^2(\delta_1 + \gamma_2 + 2 \beta_2 ) + x(\delta_1 + \delta_2))} G_1(1,\beta_1 + \beta_2 + \delta_1 + (x+1) \gamma_2 + x \delta_2) \\
                    && \qquad \times \delta(\gamma_1 + \delta_1 - \gamma_2 - \delta_2 +1) \delta(\gamma_3 + \delta_3), \nonumber\\
                    && \text{and} \nonumber\\
  \mathcal G_4(x,y) &=& (-i) e^{\frac{\pi i}{2} (y(\delta_1+\delta_2) + (y+1)^2 (\delta_1+\gamma_2+2\beta_2)+x^2+2(\beta_1+\beta_2+\delta_1+\gamma_2(y+1)+\delta_2y)x)}\\
                    && \qquad \times G_1(0 , \gamma_3 + \delta_3 + \beta_1+\beta_2+\delta_1+\gamma_2(y+1)+\delta_2y + x ) \delta(\gamma_1 + \delta_1 - \gamma_2 - \delta_2 +1) \delta(\alpha_3 + \beta_3). \nonumber
\end{eqnarray}
\end{widetext}
The sum ranges are to be considered modulo two here and below.

Like the naively tensored expression in Appendix~\ref{sec:3qubitdecomp}, Eq.~\ref{eq:3qubittgatemagicstate} still consists of four sets of four Gauss sums, where only one set is non-zero for a given \(\hat P\). However, now, of the four Gauss sums in each set, at least one is repeated. Hence, at most only three non-zero Gauss sums have to be evaluated in each set for a given \(\hat P\). More precisely, the first two lines correspond to sets containing either one or two repeated Gauss sums. The third line corresponds to a set containing two or four repeated Gauss sums. The fourth line corresponds to a set containing one repeated Gauss sum.

As a consequence, the number of non-zero Gauss sums that must be summed over to evaluate \(\Tr \left[ \hat P \hat \rho_{T^3} \right]\) for all \(\hat P\) is \(\xi_3 = 3\). This matches the number found by finding a stabilizer decomposition of \(\ket \Psi\) that saturates its stabilizer rank: \(\xi_3 \le \chi_3 = 3\)~\cite{Bravyi16_2}.

As an aside, in order for us to meaningfully declare that we have fewer Gauss sums, it is important that the cost of the parts of the equations that determine the appropriate factor of 2 scale the same as the phase prefactors did when all the Gauss sums were included. Similarly, the three Guass sums that are summed over must also scale that same. This means that the cost must scale cubically with the number of qubits, the cost of Gaussian elimination in evaluating Gauss sums. In particular, for Eq.~\ref{eq:3qubittgatemagicstate}, this means that the arguments to the delta functions in the power of \(2\) and the equation determining the range of the sums over \(x\) and \(y\) must contain a number of terms that scales cubically with the Pauli operator parameters (\(\alpha_i\), etc.).

In practice we actually find that the prefactor scales linearly instead of cubically with the number of qubits. This is because the states are not entangled and we are only taking a single-Pauli measurement. This means there is no need for the expensive row-reduction step in Gaussian elimination.

\subsection{\(k=6\)}

While we found a reduction in the decomposition for \(k=3\) formed by tensoring the \(k=1\) and \(k=2\) decompositions by inspection, we proceed to now find reductions in \(k=6\) and \(k=12\) over the trivial tensor bound by making use of a property observe in the largest set of Gauss sums that are non-zero for a given \(P\) in reduced \(k=3\) decomposition.

A naive decomposition of the six-qubit T gate magic state can be written by tensoring the decomposition for the three-qubit T gate magic state with itself. This is shown in Eq.~\ref{eq:6qubittgate1} in Appendix~\ref{sec:6qubitdecomp}.

The lines with the largest number of Gauss sums contain nine of them. We can parametrize these sets in terms of \(2\times 2\) matrices of Gauss sums that are permuted by \(\sigma_{\hat P',i}\) and \(\sigma'_{\hat P',i}\): permutations dependent on \(\hat P'\) and \(i\), the index of the sets of nine Gauss sums in Eq.~\ref{eq:6qubittgate1}:
\begin{eqnarray}
\label{eq:permutingGausssums}
  \mathcal G_i = \left(\begin{array}{cc} \sigma_{\hat P',i}(G_{i_1},& G_{i_2})\\ \sigma'_{\hat P',i}(G_{i_3},& G_{i_2})\end{array}\right),
\end{eqnarray}
\(\sigma\) permutes the entries in each row. There are three unique Gauss sums in these two sets. We denote the repeated Gauss sum by \(G_{i_2}\) and the other two by \(G_{i_1}\) and \(G_{i_3}\). In Eq.~\ref{eq:6qubittgate1}, the Gauss sums are indexed by \(x\)-values row-wise and by \(y\)-values column-wise.

Due to the prefactor equation (in terms of \(x\) and \(y\)) being a parametrization of the two-dimensional Gauss sums with arguments that \(0\) or \(1\), it follows that the Gauss sums can only advance by quarters in the unit circle. Hence, those that are not repeated must either be negatives of each other or the same: \(G_1 = \pm G_3\). We can use this to simplify the number of unique Gauss sums when we take the tensor product of the sets with themselves.

It follows that
\begin{eqnarray}
 && \mathcal G_i \otimes \mathcal G_j \\
&=& \left( \begin{array}{cc} \begin{array}{cc} G_{i_1} \otimes \sigma(G_{j_1}, & G_{j_2})\\ G_{i_1} \otimes \sigma'(G_{j_3},& G_{j_2}) \end{array} & \begin{array}{cc} G_{i_2} \otimes \sigma(G_{j_1}, & G_{j_2})\\ G_{i_2} \otimes \sigma'(G_{j_3},& G_{j_2}) \end{array} \\ \begin{array}{cc} G_{i_3} \otimes \sigma(G_{j_1}, & G_{j_2})\\ G_{i_3} \otimes \sigma'(G_{j_3},& G_{j_2}) \end{array} & \begin{array}{cc} G_{i_2} \otimes \sigma(G_{j_1}, & G_{j_2})\\ G_{i_2} \otimes \sigma'(G_{j_3},& G_{j_2}) \end{array}
\end{array} \right). \nonumber
\end{eqnarray}
The right half of the matrix clearly contains only three unique Gauss sums since it only involves tensor products between \(G_{i_2}\) and the three unique Gauss sums \(G_{j_{\{1,2,3\}}}\). On the other hand, since the left half involves the tensor products between the two Gauss sums \(G_{i_{\{1,3\}}}\) with \(G_{j_{\{1,2,3\}}}\), it seems to contain six unique Gauss sums. In total, this produces nine Gauss sums, as expected from tensoring two sets of three Gauss sums. However, using the fact that \(G_1 = \pm G_3\), it can be shown that two pairs in the left hand-side are equal to each other: \(G_{i_1} \otimes G_{j_1} = G_{i_3} \otimes G_{j_3}\) and \(G_{i_1} \otimes G_{j_3} = G_{i_3} \otimes G_{j_1}\).

\begin{figure}[ht]
  \includegraphics[scale=0.55]{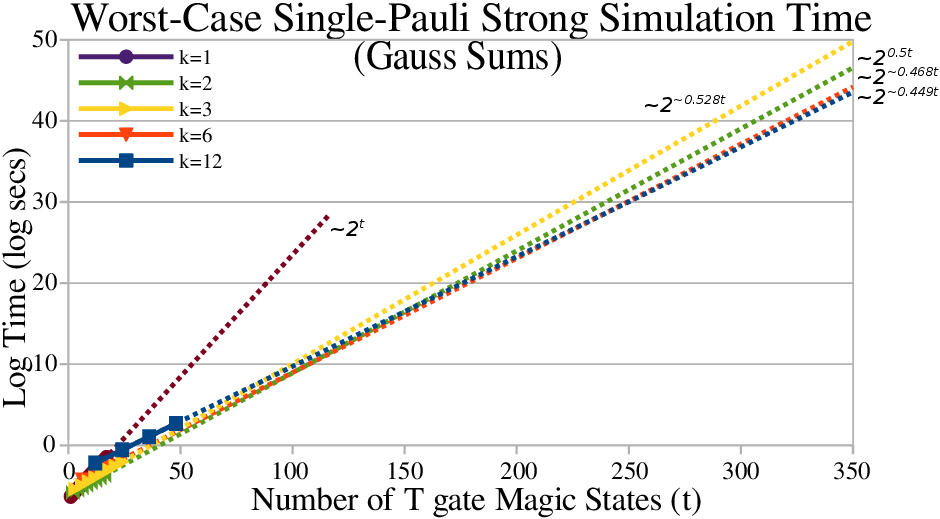}
  \caption{A plot of the computational time to calculate random single-Pauli measurements using the \(k=1\), \(k=2\), \(k=3\), \(k=6\) and \(k=12\) Gauss sum rank. Fitting is shown to \((\xi_1)^t = 2^t\), \((\xi_2)^{t/2} = 2^{0.5 t}\), \((\xi_6)^{t/6} = 2^{\sim 0.468 t}\), and \((\xi_{12})^{t/12} = 2^{\sim 0.449t}\), with respect to \(t\), the number of \(T\) gate magic states in Eq.~\ref{eq:probofapp} for \(t\) a multiple of \(1\), \(2\), \(6\) and \(12\), respectively.}
  \label{fig:scaling}
\end{figure}

Therefore, the nine Gauss sums reduce to seven unique Gauss sums. Every Gauss sums is repeated twice except for \(G_{i_2} \otimes G_{j_2}\), which is repeated four times. This additional repetition can be described by a factor of two taken to the power of the products of the two arguments in the factors of two in Eq.~\ref{eq:6qubittgate2}, thereby conserving the linear cost of calculating this factor. Otherwise, the unique Gauss sums can be picked out by adding a factor of \(x\) or \(y\) (\(\Mod 2\)) to the range of the second two sums. This simpification can be seen by comparing the first line of Eq.~\ref{eq:6qubittgate1} compared to Eq.~\ref{eq:6qubittgate2}:
\begin{widetext}
\begin{eqnarray}
  && \sum_{y=0}^{1} \sum_{x=0}^{\substack{y(\gamma_1+\gamma_2)+\\(y+1)(\gamma_1+\delta_2)}} \sum_{y'=0}^{1} \sum_{x'=0}^{\substack{y'(\gamma_4+\gamma_5)+\\(y'+1)(\gamma_4+\delta_5)}} 2^{1-y(\gamma_1-\gamma_2)^2-(y-1)^2(\gamma_1-\delta_2)^2} 2^{1-y'(\gamma_4-\gamma_5)^2-(y'-1)^2(\gamma_4-\delta_5)^2} \\
  && \quad \qquad \qquad \qquad \qquad \qquad \qquad \qquad \times \left[ \mathcal G_1(x,y) + \mathcal G_2(x,y) \right] \left[ \mathcal G_1(x',y') + \mathcal G_2(x',y') \right] \nonumber\\
  &=& \sum_{y=0}^{1} \sum_{x=0}^{\substack{y(\gamma_1+\gamma_2)+\\(y+1)(\gamma_1+\delta_2)}} \sum_{y'=0+(\gamma_4+\delta_5)x}^{1-(\gamma_4+\gamma_5)x} \sum_{x'=0}^{\substack{y'(\gamma_4+\gamma_5)+\\(y'+1)(\gamma_4+\delta_5)}} 2\times 2^{\left[1-y(\gamma_1-\gamma_2)^2-(y-1)^2(\gamma_1-\delta_2)^2\right]\left[1-y'(\gamma_4-\gamma_5)^2-(y'-1)^2(\gamma_4-\delta_5)^2\right]} \\
  && \quad \qquad \qquad \qquad \qquad \qquad \qquad \qquad \qquad \qquad \times \left[ \mathcal G_1(x,y) + \mathcal G_2(x,y) \right] \left[ \mathcal G_1(x',y') + \mathcal G_2(x',y') \right]. \nonumber
\end{eqnarray}
\end{widetext}

This simplification reduces all the largest sets of Gauss sums, which contain nine, down to seven.

If \(G_{i_k} = G_{j_k} \forall k\), which is not generally true, one might imagine two more pairs can be equated since \(G_1 \otimes G_2 = G_2 \otimes G_1\) and \(G_3 \otimes G_2 = G_2 \otimes G_3\). However, these cannot be singled out with arguments that are bilinear in the Greek letters and \(x\) or \(y\) and do not sum up to repetitions that number a power of \(2\) and hence would require additional coefficients that would thereby increase the number of Gauss sums in linear combination.

Therefore, in the worst case, the six-qubit T gate magic state expectation value requires calculating \(\xi_6=7\) Gauss sums, the same as \(\chi_6\).

\subsection{\(k=12\)}

This same simplification can be iteratively used again in the largest sets obtained from tensoring together the decomposition for the six-qubit T gate with itself, reducing the number from \(7 \times 7 = 49\) Gauss sums down to \(31\). As a result, the formerly next-largest set consisting of \(6 \times 7=42\) Gauss sums, becomes the largest set and determines the worst-case scaling of \((\xi_{12})^{t/12} = 2^{\sim 0.449 t}\). See Eq.~\ref{eq:12qubittgate} in Appendix~\ref{sec:12qubitdecomp} for the resulting expression after this reduction.

We numerically validated these Gauss sum decompositions, and thereby their scaling, for both representative and random cases of single-Pauli measurements, by comparing measurement probabilities with those obtained from a naive Hilbert space treatment (see Fig.~\ref{fig:scaling}). We make this code available at https://s3miclassical.com/gitweb/.

\section{Discussion}

The Gauss sum rank for single-Pauli measurements on the T gate magic state seems to be a tight lower bound on the T gate magic state's stabilizer rank for \(k=1\), \(k=2\), \(k=3\) and \(k=6\), where it is known. If this trend holds, then it follows that the improved scaling stabilizer rank we found from the twelve-qubit tensor component of \(2^{\sim 0.463 t}\) can be lowered further to at least \(2^{\sim 0.449 t}\). This corresponds to a stabilizer rank for twelve qubits of at most \(42\) instead of \(47\).

A note of caution: the Gauss sum reductions we have found are technically upper bounds on the Gauss sum rank --- we have not proven that they are the lowest reductions possible. If it is possible to lower the Gauss sum rank for single-Pauli measurements on the T gate magic state further, the Gauss sum rank will no longer be a tight lower bound on the stabilizer rank. However, we see no way to lower the Gauss sum rank decompositions further within the contraints of Definition~\ref{def:GaussSumRank} of the Gauss sum rank.

\begin{table*}[t!]
  \begin{tabular}{p{30pt}|p{30pt}|p{30pt}|p{30pt}|p{30pt}|p{30pt}|p{30pt}|p{30pt}|p{30pt}|p{30pt}|p{30pt}|p{30pt}|p{30pt}|p{30pt}|p{30pt}}
    \hline
    \(k\) & $1$ & $2$ & $3$ & $4$ & $5$ & $6$ & $7$ & $8$ & $9$ & $10$ & $11$ & $12$ & $13$ & $14$ \\
    \hline \hline
    \multicolumn{8}{l}{qubit:} \\
    \hline
    \(\chi_k\) & $2$ & $2$ & $3$ & $4$ & $6$ & $7$ & $12$ & \multicolumn{4}{c}{\cellcolor[gray]{0.8}\emph{inaccessible to Monte Carlo}}& \multicolumn{1}{c}{\cellcolor[gray]{0.8}$\le 47$} & \multicolumn{2}{c}{\cellcolor[gray]{0.8}{}}\\
    \hline
    \(\xi_k\) & $2$& $2$ & $3$ & $4$ & $6$ & $7$ & $12$ & $\le 14$ & $\le 21$ & $\le 28$ & {$\le 42$} & {$\le 42$} & $\le 84$& $\le 84$\\
    \hline
    \(\xi_k^{t/k}\) & {$2^t$}& {$2^{0.5 t}$}& {$2^{\sim 0.528t}$}& & & {${2^{\sim 0.468t}}$} & & & & & & $2^{\sim 0.449 t}$ & \\
    \hline
    \multicolumn{8}{l}{qutrit:} \\
    \hline
    \(\chi_k\) & $3$ & $3$ & $8$ & \multicolumn{11}{c}{\cellcolor[gray]{0.8}\emph{inaccessible to Monte Carlo}}\\
    \hline
    \(\xi_k\) & $3$& $3$ & $8$ & $9$ & $24$ & $24$ & $\le 72$ & $72$ & $\le 216$ & $216$ & {$\le 486$} & {$\le 486$} & $\le 1458$& $\le 1458$\\
    \hline
    \(\xi_k^{t/k}\) & {$3^t$}& {$3^{0.5 t}$}& {$3^{\sim 0.631t}$}& & & {$3^{\sim 0.482t}$} & & & & & {$3^{\sim 0.512 t}$} & {$3^{\sim 0.469 t}$} & & \\
    \hline
  \end{tabular}
  \caption{Upper bound of qubit and qutrit \(T\) gate magic state stabilizer ranks \(\chi_k\) are tabulated and compared to qubit and qutrit single-Pauli measurement Gauss sum ranks \(\xi_k\) along with their tensor upper bounds (\(\chi_k^{t/k}\) and \(\xi_k^{t/k}\) respectively). The Gauss sum rank for qutrits is for the Wigner formulation of the expectation value~\cite{Kocia20}. The reductions in the qubit scaling for \(k=1\), \(k=2\), \(k=3\), \(k=6\), and \(k=12\) are observed for qutrits as well. \((\chi_k)^{t/k}\) is only listed at the \(k\) values at which there is a reduction over the trivial tensor bound.}
  \label{tab:Gausssumresults}
\end{table*}

The deviation of the search for the minimum number of Gauss sums at \(t>2\) was inspired by the approach taken in the Wigner-Weyl-Moyal (WWM) formalism for odd-prime-dimensional qudits where a similar reduction was found at \(t=1\), \(t=2\), \(t=3\), \(t=6\) and \(t=12\) (see Table~\ref{tab:Gausssumresults}). This parallel can most easily be seen by examining the Gauss sums for \(k=3\) produced by the straightforward tensoring of the \(k=1\) and \(k=2\) decompositions, given by Eq.~\ref{eq:decomposition1_2}. The transformations to using the alternative decompositions correspond to rewriting the first and second sets of four Gauss sums as Gauss sums over \(y\) instead of \(z\), leaving the third set of four Gauss sums as is, and rewriting the fourth set of Gauss sums over \(X \equiv x+y\) instead of \(z\). These transformations produce linear combinations of Gauss sums that, once multiplied by their coefficients, are always non-negative if they are not cancelled. Thus they can be interpreted as classical probability distributions. These transformations are similar to performing controlled-not Clifford transformations on the ``intermediate'' phase space coordinates in the qutrit WWM formalism to obtain reductions in the number of quadratic Gauss sums~\cite{Kocia18_2,Kocia20}.

Definition~\ref{def:GaussSumRank} was inspired by the fact that in the odd-dimensional WWM formalism quadratic Gauss sums are the computational primitive as well, but are not restricted to be stabilizer inner products of a consistent minimal stabilizer decomposition.

For even \(d\), the WWM formalism requires three Grassmann generators~\cite{Kocia17_2}. All the results presented here could have equally well been presented in the WWM formalism: the stabilizer states would correspond to phase space and the Gauss sums presented here would be replaced by Grassman Gauss (``Grauss'') sums~\cite{Kocia17_2}. Each such inner product is a non-contextual (order \(\hbar^0\)) term with a contextual (order \(\hbar^1\)) phase, just like for the Feynman path integral expanded in powers of \(\hbar\) to produce the van Vleck-Morette-Gutzwiller propagator~\cite{Van28,Morette51,Gutzwiller67}.

This relationship raises an interesting open question: does the existence of \(\xi\) Gauss sums for \(\Tr(\hat \Pi \hat \rho)\) necessarily mean that the stabilizer rank of \(\ket{\Psi}\) is \(\chi=\xi\)? We have seen that this is the case here, at least for \(k=1\) to \(k=7\), but we leave open the question of whether there is a stabilizer decomposition for \(k=12\) that matches our minimal Gauss sum decomposition.

A related open question is whether the Gauss sum rank results shown here for single-Pauli measurements can be extended to multi-Pauli measurements. We leave this for future work.

Previous studies~\cite{Gross07} and the WWM formalism have established a correspondence between coherent states and stabilizer states, and Gaussian integrals and Gauss sums. If this correspondence truly holds, then one would also expect there to be such a relationship between minimal coherent state decompositions, which correspond to stationary phase approximation points, and minimal stabilizer state decompositions, and hence that \(\xi_k = \chi_k\).

There are, of course, special properties that the magic states possess that may make them more fit to have \(\xi_k = \chi_k\). For instance, for the qubit case the \(T\) gate magic state has equal magnitude coefficients for its \(k=1\) and \(k=2\) minimal stabilizer decompositions that are eighth roots of unity, an important property that we took advantage of in finding their minimal number of unique Gauss sums when traced over. Are these properties necessary for \(\xi_k\) to be equal to \(\chi_k\) and thereby is this equivalence somehow restricted to this subset of states? We leave these questions for future study.

Another natural question arises: Can we use the same property we used when tensoring together \(k=3\) to produce our reduction at \(k=6\) and again at \(k=12\) to produce a reduction at \(k=24\)? It turns out that the same method cannot be used because the second largest group of Gauss sums in \(k=3\) becomes the dominant one now at \(k=12\): the set of Gauss sums consisting of \emph{two} pairs of repeating Gauss sums. When tensored with the sets containing one repeated Gauss sum at \(k=6\) and two repeated Gauss sums at \(k=12\), these produce sets with \(42\) unique Gauss sums. However, because of the first tensor product containing two pairs of repeated Gauss sums, the important property that is used in our reductions does not hold -- there are not necessarily two Gauss sums that are negatives of each other. Hence, further reductions require finding some other property of this new largest set after tensoring.

\section{Conclusion}

In conclusion, we improve on the best known stabilizer rank of \(\xi_t \le 2^{\sim 0.468 t}\) down to \(\xi_t \le 2^{\sim 0.463 t}\) for multi-Pauli measurement expectation values. We derive an alternative asymptotic lower bound for single-Pauli measurement expectation value that relies on grouping equivalent Gauss sums formed by inner products between orthogonal stabilizer states Clifford-isomorphic to a computational basis and particular single-qubit states called the Gauss sum rank \(\xi_k\). Using this method we find a new asymptotic scaling bound of \(\xi_t \le 2^{\sim 0.449 t}\) from the twelve-qubit decomposition. This work suggests that this twelve-qubit stabilizer decomposition can be further improved to match this single-Pauli bound since the bound is tight for lower \(t\)-counts.

\noindent---\newline
This material is based upon work supported by the U.S. Department of Energy, Office of Science, Office of Advanced Scientific Computing Research, under the Quantum Computing Application Teams program. 
Sandia National Laboratories is a multimission laboratory managed and operated by National Technology \& Engineering Solutions of Sandia, LLC, a wholly owned subsidiary of Honeywell International Inc., for the U.S. Department of Energy's National Nuclear Security Administration under contract {DE-NA0003525}. This paper describes objective technical results and analysis. Any subjective views or opinions that might be expressed in the paper do not necessarily represent the views of the U.S. Department of Energy or the United States Government. SAND2020-14255 O

\acknowledgments
The author would like to thank Mohan Sarovar for fruitful discussions in the process of this research.

\bibliography{biblio}{}
\bibliographystyle{unsrt}
\appendix

\section{Minimal Stabilizer Decompositions}
\label{sec:minimalstabdecomp}

A decomposition of \(\ket{T^{\otimes 6}}\) into seven stabilizer states is: 
\begin{eqnarray}
\label{eq:6qubitdecomposition}
\ket{T^{\otimes 6}} &=&c_{b_{60}} \ket{\Psi_{b_{60}}} + c_{b_{66}} \ket{\Psi_{b_{66}}} +  c_{e_6} \ket{\Psi_{e_6}} + c_{o_6} \ket{\Psi_{o_6}} \nonumber\\
&& + c_{k_6} \ket{\Psi_{k_6}} + c_{\phi'} \ket{\Psi_{\phi'}} + c_{\phi''} \ket{\Psi_{\phi''}},
\end{eqnarray}
where
\begin{eqnarray}
  c_{b_{60}} &=& (-16+12\sqrt{2}) \cos^6\left(\pi/8\right) e^{\pi i 6/8} 2^3,\\
  c_{b_{66}} &=& (96 - 68\sqrt{2}) \cos^6\left(\pi/8\right) e^{\pi i 6/8} 2^3,\\
  c_{e_6} &=& (10 - 7 \sqrt{2}) \cos^6\left(\pi/8\right) e^{\pi i 6/8} 2^{\frac{5}{2}},\\
  c_{o_6} &=& (-14 + 10 \sqrt{2}) \cos^6\left(\pi/8\right) e^{\pi i 6/8} 2^{\frac{5}{2}},\\
  c_{k_6} &=& (7 - 5 \sqrt{2}) \cos^6\left(\pi/8\right) e^{\pi i 6/8} 2^{3},\\
  c_{\phi'} &=& c_{\phi''} = (10 - 7 \sqrt{2}) \cos^6\left(\pi/8\right) e^{\pi i 6/8} 2^{\frac{5}{2}},
\end{eqnarray}
and
\begin{eqnarray}
  \ket{\Psi_{b_{60}}} &=& \frac{1}{8} \sum_{x=0}^{2^6-1} \ket{x},\\
  \ket{\Psi_{b_{66}}} &=& \frac{1}{8} \sum_{x=0}^{2^6-1} e^{\frac{\pi i}{4} (4 x+4)} \ket{x},\\
  \ket{\Psi_{e_6}} &=& \frac{1}{4\sqrt 2} \sum_{\substack{x=0\\x'=Gx+h}}^{2^5-1} e^{\frac{\pi i}{4} (x^TJ_6x + 4)} \ket{x'},\\
  \ket{\Psi_{e_6}} &=& \frac{1}{4\sqrt 2} \sum_{\substack{x=0\\x'=Gx+h}}^{2^5-1} e^{\frac{\pi i}{4} (x^TJ_6x + Dx+ 4)} \ket{x'},\\
  \ket{\Psi_{k_6}} &=& \frac{1}{\sqrt 2} \sum_{\substack{x=0\\x'=G_k x+h}}^{1} e^{\frac{\pi i}{4} (2x + 6)} \ket{x'},\\
  \ket{\Psi_{\phi'}} &=& \frac{1}{4\sqrt 2} \sum_{\substack{x=0\\x'=Gx+h}}^{2^5-1} e^{\frac{\pi i}{4} (x^TJ_{\phi'}x)} \ket{x'},\\
  \ket{\Psi_{\phi''}} &=& \frac{1}{4\sqrt 2} \sum_{\substack{x=0\\x'=Gx+h}}^{2^5-1} e^{\frac{\pi i}{4} (x^TJ_{\phi''}x)} \ket{x'},
\end{eqnarray}
where
\begin{equation*}
  J_6 = \left( \begin{array}{ccccc} 0 & 4 & 4& 4& 4\\ 0& 0& 4& 4& 4\\ 0& 0& 0& 4& 4\\ 0& 0& 0& 0& 4\\ 0& 0& 0& 0& 0 \end{array} \right), \hskip 10pt
  J_{\phi'} = \left( \begin{array}{ccccc} 0 & 4 & 0& 0& 4\\ 0& 0& 4& 0& 0\\ 0& 0& 0& 4& 0\\ 0& 0& 0& 0& 4\\ 0& 0& 0& 0& 0 \end{array} \right),
\end{equation*}
\begin{equation}
  J_{\phi''} = \left( \begin{array}{ccccc} 0 & 0 & 4& 4& 0\\ 0& 0& 0& 4& 4\\ 0& 0& 0& 0& 4\\ 0& 0& 0& 0& 0\\ 0& 0& 0& 0& 0 \end{array} \right),
\end{equation}
\begin{equation}
  D = \left( \begin{array}{c} 4 \\ 4 \\ 4\\ 4\\ 4 \end{array}\right),
\end{equation}
\begin{equation}
  G = \left(\begin{array}{cccccc} 1 & 1 & 0 & 0 & 0 & 0\\ 1 & 0 & 1 & 0 & 0 & 0\\ 1 & 0 & 0 & 1 & 0 & 0\\ 1 & 0 & 0 & 0 & 1 & 0\\ 1 & 0 & 0 & 0 & 0 & 1 \end{array} \right), \hskip 5pt
  G_k = \left(\begin{array}{cccccc} 1 & 1 & 1 & 1 & 1 & 1 \end{array} \right),
\end{equation}
and 
\begin{equation}
h = \left(\begin{array}{c} 1 \\ 0 \\ 0 \\ 0 \\0 \\ 0 \end{array}\right),\hskip 10pt h_o = \left(\begin{array}{c} 0 \\ 0 \\ 0 \\ 0 \\ 0 \\ 0 \end{array}\right),\hskip 10pt h_k = \left(\begin{array}{c} 1 \\ 1 \\ 1 \\ 1 \\ 1 \\ 1 \end{array}\right).
\end{equation}

A decomposition of \(\ket{T^{\otimes 12}}\) into forty-seven stabilizer states can be found by tensoring the above decomposition and replacing the terms \({c_{b_{60}} c_{b_{66}} \ket{\Psi_{b_{60}} \otimes \Psi_{b_{66}}} + c_{b_{66}} c_{b_{60}} \ket{\Psi_{b_{66}} \otimes \Psi_{b_{60}}}}\) and \({c_{e_6} c_{o_6} \ket{\Psi_{e_6} \otimes \Psi_{o_6}} + c_{o_6} c_{e_6} \ket{\Psi_{o_6} \otimes \Psi_{e_6}}}\) with \(c_b \ket{\Psi_{b_{60} b_{66}}}\) and \(c_6 \ket{\Psi_{e_6 o_6}}\), respectively, where
\begin{eqnarray}
  \ket{\Psi_{b_{60} b_{66}}} &=& \frac{1}{32\sqrt 2} \sum_{\substack{x=0\\x'=Gx+h_b}}^{2^{11}-1} e^{\frac{\pi i}{4} (Dx+4)} \ket{x'},\\
  \ket{\Psi_{e_6 o_6}} &=& \frac{1}{32 \sqrt 2} \sum_{\substack{x=0\\x'=Gx+h_6}}^{2^{11}-1} e^{\frac{\pi i}{4} (x^TJx)} \ket{x'},
\end{eqnarray}
\begin{eqnarray}
  c_b &=& 2c_{b_{60}} c_{b_{66}}\\
  c_6 &=& 2c_{e_6} c_{o_6},
\end{eqnarray}
\begin{equation}
  \{D_i\}_{i=1,\ldots,11} = \begin{cases} 4 & \text{if}\,\, i>5,\\ 0 & \text{otherwise}, \end{cases}
\end{equation}
\begin{equation}
  \{J_{ij}\}_{i,j=1,\ldots,11} = \begin{cases} 4 & \text{if}\,\, j>i,\\ 0 & \text{otherwise}, \end{cases}
\end{equation}
\begin{equation}
  \{G_{ij}\}_{\substack{i=1,\ldots,11\\j=1,\ldots,12}} = \begin{cases} 1 & \text{if}\,\, j=i+1 \,\, \text{or} \, \, j=1, \\ 0 & \text{otherwise}, \end{cases}
\end{equation}
\begin{equation}
  \{{h_b}_i\}_{i=1,\ldots,12} = \delta_{i,1} \,\, \text{and}\,\,
  \{{h_6}_i\}_{i=1,\ldots,12} = 0.
\end{equation}

\section{Strong Simulation Numerical Simulation Details}
\label{sec:strongsimnumericaldetails}

Numerical implementation follows the pseudocode proved in the appendices of~\cite{Bravyi16_1}. It consists of modules EXPONENTIALSUM(), which performs diagonalization of the quadratic form in the exponential ans scales as \(\mathcal O(n^3)\) for \(n\) qubits, SHRINK() and SHRINKSTAR(), which shrink the affine space spanned by a stabilizer state by one dimension, INNERPRODUCT(), which finds the common affine space shared by two stabilizer states using SHRINK() iteratively if necessary and calculates their inner product using EXPONENTIALSUM(), EXTEND(), which increases the dimensions of the affine space spanned by a stabilizer state, MEASUREPAULI(), which performes a Pauli measurement on a stabilizer state taking it to another stabilizer state using SHRINK() and EXTEND() if necessary, and RANDOMSTABILIZERSTATE(), which generates Haar random stabilizer states and scales as \(\mathcal O(n^3)\). The primary formalism that governs the affine space are its periodic boundary conditions; vectors in the affine space have entries in \(\mathbb Z/ 2 \mathbb Z\). As a result, there is not well-defined notion of adjoint, but dual or reciprocal vectors are well-defined; \(a\) and \(b\) are conjugate if \(a \cdot b = 1\) where the dot product is performed mod \(2\). Therefore, for instance, to determine whether two affine spaces intersect, inner products with the basis vectors of one of the spaces must be taken with the dual vectors to the basis vectors of the other space, etc. Code and further information can be found at https://s3miclassical.com/gitweb/.

\section{Three-Qubit Decomposition}
\label{sec:3qubitdecomp}

A stabilizer decomposition for \(k=3\) that saturates its stabilizer rank \(\chi_3 = 3\) is:
\begin{equation}
\label{eq:3qubitTgatedecomp}
  \ket{T^{\otimes 3}} = c_1 \ket{\psi_1} + c_2 \ket{\psi_2} + c_3 \ket{\psi_3},
\end{equation}
where
\begin{eqnarray}
c_1 &=& -\frac{1-i}{4}(-1-i+\sqrt{2})\sqrt{-i},\\
c_2 &=& -\frac{1+i}{4}(1-i+\sqrt{2})\sqrt{i},\\
c_3 &=& -\frac{+1+i}{4}(-1.0+i+\sqrt{2})\sqrt{i},
\end{eqnarray}
and
\begin{eqnarray}
\ket{\psi_1} &=& \frac{1}{\sqrt{2}}(\ket{011} + i \ket{100}),\\
\ket{\psi_2} &=& \frac{1}{2\sqrt{2}}(i\ket{000}-\ket{001}-\ket{010}-i\ket{011}\nonumber\\
             && +i\ket{100}-\ket{101}-\ket{110}-i\ket{111}),\\
\ket{\psi_3} &=& \frac{1}{2\sqrt{2}}(i\ket{000}+\ket{001}+\ket{010}+i\ket{011}\nonumber\\
             &&+i\ket{100}-\ket{101}-\ket{110}+i\ket{111}).
\end{eqnarray}

Selecting the first decomposition for the expectation value produces the following expression:
\begin{eqnarray}
\label{eq:3qubittgate1}
  && \Tr \left[ \hat P \hat \rho_{T^3} \right]  \\
  &=& \frac{\omega}{4} \left\{ \left[ i^{\delta_2-\gamma_1} G_1 \left( 2(\beta_1 + \beta_2 + \gamma_1 + \delta_2) \right) \right. \right.\\
  && \qquad \left. \left. + (-1)^{\beta_1+\delta_1} i^{\delta_1+\delta_2} G_1 \left( 2(\beta_1 + \delta_1 + \beta_2 + \delta_2) \right) \right] \right.\\
  && \qquad \times \left[ 1 + (-1)^{\beta_3} \right] \delta( \gamma_1+\delta_1 -\gamma_2-\delta_2 ) \delta(\gamma_3 + \delta_3 ) \nonumber\\
  && \quad + \left[ i^{\delta_2-\gamma_1} G_1 \left( 2(\beta_1 + \beta_2 + \gamma_1 + \delta_2) \right) \right.\\
  &&\qquad \left. + (-1)^{\beta_1+\delta_1} i^{\delta_1+\delta_2} G_1 \left( 2(\beta_1 + \delta_1 + \beta_2 + \delta_2) \right) \right] \\
  && \qquad \times \left[ \sqrt{i} (-1)^{\delta} i^{\delta}  + \sqrt{-i} i^{\delta} \right] \delta( \gamma_1+\delta_1 -\gamma_2-\delta_2 ) \delta(\alpha_3 + \beta_3 ) \nonumber\\
  && \quad + \left[ \sqrt{i} (-1)^{\beta_2+\gamma_2} i^{\delta_1 -\gamma_2 } (-i) G_1(1,\beta_2 + \gamma_2 + \beta_1 + \delta_1) \right. \nonumber\\
  && \qquad \left. + \sqrt{-i} i^{\delta_1 + \delta_2} G_1(1,\beta_1 + \beta_2 + \delta_1 + \delta_2) \right] \left[ 1 + (-1)^{\beta_3} \right]\nonumber\\
  && \qquad \times \delta(\gamma_1 + \delta_1 - \gamma_2 - \delta_2 +1 ) \delta(\gamma_3 + \delta_3 ) \nonumber\\
  && \quad + \left[ \sqrt{i} (-1)^{\beta_2+\gamma_2} i^{\delta_1 -\gamma_2 } (-i) G_1(1,\beta_2 + \gamma_2 + \beta_1 + \delta_1) \right.\\
  && \qquad \left. + \sqrt{-i} i^{\delta_1 + \delta_2} G_1(1,\beta_1 + \beta_2 + \delta_1 + \delta_2) \right] \\
  && \qquad \times \left[ \sqrt{i} (-1)^{\delta} i^{\delta}  + \sqrt{-i} i^{\delta} \right] \delta(\gamma_1 + \delta_1 - \gamma_2 - \delta_2 +1 ) \\
  && \qquad \times \left. \delta(\alpha_3 + \beta_3 ) \right\},\nonumber
\end{eqnarray}
where one line is non-zero, as can be seen by inspecting the delta functions on the Pauli parameters \(\alpha\), \(\beta\), \(\gamma\) and \(\delta\).

Let us index the four Gauss sums in each of the sets by \(x,\, y \in \mathbb Z/ 2 \mathbb Z\), where \(x\) indexes the sum over the Gauss sums in the first factor of Eq.~\ref{eq:3qubittgate1} and \(y\) indexes over the sum in the second factor, and let us consider the Gauss sums to be over a third degree of freedom, \(z \in \mathbb Z /2 \mathbb Z\). 

\begin{widetext}
\begin{eqnarray}
  && \Tr \left[ \hat P \hat \rho_{T^3} \right]  \\
  &=& \frac{\omega}{4} \left[ e^{\frac{\pi i}{2} ((\delta_2 - \gamma_1)(y + 1)^2 + 2(\beta_1 + \beta_2 + \gamma_1 + \delta_2) x (y + 1)^2 + 2(\delta_1 + \delta_2 + \beta_1 + \beta_2) x y + (2\beta_1 + 3\delta_1 + \delta_2)y)} e^{\pi i \beta_3 z} \delta( \gamma_1+\delta_1 -\gamma_2-\delta_2 ) \delta(\gamma_3 + \delta_3 ) \right. \nonumber\\
  && \quad + \sqrt{-i} e^{\frac{\pi i}{2} ((\delta_2 - \gamma_1)(y + 1)^2 + 2 (\beta_1 + \beta_2 + \gamma_1 + \delta_2) x (y + 1)^2 + 2 (\delta_1 + \delta_2 + \beta_1 + \beta_2) x y + (2\beta_1 + 3\delta_1 + \delta_2) y)}e^{\frac{\pi i}{2}(\gamma_3 + \delta_3) z^2} \delta( \gamma_1+\delta_1 -\gamma_2-\delta_2 ) \delta(\alpha_3 + \beta_3 ) \nonumber\\
  && \quad + \sqrt{-i} e^{\frac{\pi i}{2} (2 \beta_3 z + y (\delta_1 + \delta_2) + (y + 1)^2 (\delta_1 + \gamma_2 + 2 \beta_2))} e^{\frac{\pi i}{2}(x^2 + 2 (\beta_1 + \beta_2 + \delta_1 + \gamma_2 (y + 1) + \delta_2 y)x)} \delta(\gamma_1 + \delta_1 - \gamma_2 - \delta_2 +1 ) \delta(\gamma_3 + \delta_3 ) \nonumber\\
  && \quad \left. (-i) e^{\frac{\pi i}{2} (y (\delta_1 + \delta_2) + (y + 1)^2 (\delta_1 + \gamma_2 + 2 \beta_2))} e^{\frac{\pi i}{2} (x^2 + 2 (\beta_1 + \beta_2 + \delta_1 + \gamma_2 (y + 1) + \delta_2 y) x)} e^{\frac{\pi i}{2}(\gamma_3 + \delta_3) z} \delta(\gamma_1 + \delta_1 - \gamma_2 - \delta_2 +1 ) \delta(\alpha_3 + \beta_3 ) \right],\nonumber
\end{eqnarray}
\end{widetext}

By inspection, we can rewrite the first and second sets of four Gauss sums as Gauss sums over \(y\) instead of \(z\). This corresponds to using the decomposition given by Eq.~\ref{eq:TgateMagicStateGaussSums1}. The third set of four Gauss sums can be left as is, which corresponds to Eq.~\ref{eq:decomposition1}. The fourth set of Gauss sums can be rewritten as Gauss sums over \(X \equiv x+y\) to produce sets of Gauss sums, which corresponds to Eq.~\ref{eq:TgateMagicStateGaussSums3}. This produces repeated Gauss sums that we can take account of with factors of \(2\):

Inspection reveals that the four non-zero Gauss sums consist of at most three unique Gauss sums for any Pauli since at least one repeats. The identity of the repeating Gauss sum is straightforward to parametrize if we rewrite the four Gauss sums such that every Gauss sum that is not cancelled is real-valued. This must be possible to do since the total expectation value is real-valued. This produces Eq.~\ref{eq:3qubittgatemagicstate} and the repeating Gauss sum can now be excluded from the sum by appropriate linear arguments to the sum range over \(x\) and \(y\).

\section{Six-Qubit Decomposition}
\label{sec:6qubitdecomp}

An optimal stabilizer decomposition for the six-qubit T gate magic state is given in Appendix~\ref{sec:minimalstabdecomp}.

\begin{widetext}
\begin{eqnarray}
\label{eq:6qubittgate1}
  && \Tr \left[ \hat P \hat \rho_{T^6} \right]  \\
  &=& \frac{\omega}{8} \left\{\sum_{y=0}^{1} \sum_{x=0}^{y(\gamma_1+\gamma_2)+(y+1)(\gamma_1+\delta_2)} \sum_{y'=0}^{1} \sum_{x'=0}^{y'(\gamma_4+\gamma_5)+(y'+1)(\gamma_4+\delta_5)} 2^{1-y(\gamma_1-\gamma_2)^2-(y-1)^2(\gamma_1-\delta_2)^2} 2^{1-y'(\gamma_4-\gamma_5)^2-(y'-1)^2(\gamma_4-\delta_5)^2} \right. \nonumber\\
  && \qquad \qquad \qquad \qquad \qquad \qquad \qquad \qquad \qquad \qquad \qquad \qquad \times \left[ \mathcal G_1(x,y) + \mathcal G_2(x,y) \right] \left[ \mathcal G_1(x',y') + \mathcal G_2(x',y') \right] \nonumber\\
  && \qquad + 2 \sum_{y=0}^{1} \sum_{x=0}^{y(\gamma_1+\gamma_2)+(y+1)(\gamma_1+\delta_2)} \sum_{x'=0}^{1+\beta_6} \sum_{y'=0}^{1+\alpha_6} 2^{1-y'(\gamma_4-\gamma_5)^2-(y'-1)^2(\gamma_4-\delta_5)^2} \left[ \mathcal G_1(x,y) + \mathcal G_2(x,y) \right]  \mathcal G_3(x',y') \nonumber\\
  && \qquad + \sum_{y=0}^{1} \sum_{x=0}^{y(\gamma_1+\gamma_2)+(y+1)(\gamma_1+\delta_2)} \sum_{y'=0}^1 \sum_{x'=y'(\delta_4 + \delta_5)}^{1+y'(\gamma_4 + \gamma_5)} 2^{1-y(\gamma_1-\gamma_2)^2-(y-1)^2(\gamma_1-\delta_2)^2} 2^{(y'-1)^2[(x')^2(\gamma_4+\delta_5) + (x'-1)^2(\gamma_5+\delta_4)]} \nonumber\\
  && \qquad \qquad \qquad \qquad \qquad \qquad \qquad \qquad \qquad \qquad \times \left[ \mathcal G_1(x,y) + \mathcal G_2(x,y) \right] \mathcal G_4(x',y') \nonumber\\
  && \qquad + 2 \sum_{x=0}^{1+\beta_3} \sum_{y=0}^{1+\alpha_3} \sum_{y'=0}^{1} \sum_{x'=0}^{y'(\gamma_4+\gamma_5)+(y'+1)(\gamma_4+\delta_5)} 2^{1-y'(\gamma_4-\gamma_5)^2-(y'-1)^2(\gamma_4-\delta_5)^2} \mathcal G_3(x,y) \left[ \mathcal G_1(x',y') + \mathcal G_2(x',y') \right]  \nonumber\\
  && \qquad + 4 \sum_{x=0}^{1+\beta_3} \sum_{y=0}^{1+\alpha_3} \sum_{x'=0}^{1+\beta_6} \sum_{y'=0}^{1+\alpha_6} \mathcal G_3(x,y) \mathcal G_3(x',y') \nonumber\\
  && \qquad + 2 \sum_{x=0}^{1+\beta_3} \sum_{y=0}^{1+\alpha_3} \sum_{y'=0}^1 \sum_{x'=y'(\delta_4 + \delta_5)}^{1+y'(\gamma_4 + \gamma_5)} 2^{(y'-1)^2[(x')^2(\gamma_4+\delta_5) + (x'-1)^2(\gamma_5+\delta_4)]} \mathcal G_3(x,y) \mathcal G_4(x',y') \nonumber\\
  && \qquad + \sum_{y=0}^1 \sum_{x=y(\delta_1 + \delta_2)}^{1+y(\gamma_1 + \gamma_2)} \sum_{y'=0}^{1} \sum_{x'=0}^{y'(\gamma_4+\gamma_5)+(y'+1)(\gamma_4+\delta_5)} 2^{(y-1)^2[(x)^2(\gamma_1+\delta_2) + (x-1)^2(\gamma_2+\delta_1)]} 2^{1-y'(\gamma_4-\gamma_5)^2-(y'-1)^2(\gamma_4-\delta_5)^2} \nonumber\\
  && \qquad \qquad \qquad \qquad \qquad \qquad \qquad \qquad \qquad \qquad \times \mathcal G_4(x,y) \left[ \mathcal G_1(x',y') + \mathcal G_2(x',y') \right] \nonumber\\
  && \qquad + 2 \sum_{y=0}^1 \sum_{x=y(\delta_1 + \delta_2)}^{1+y(\gamma_1 + \gamma_2)} \sum_{x'=0}^{1+\beta_6} \sum_{y'=0}^{1+\alpha_6} 2^{(y-1)^2[(x)^2(\gamma_1+\delta_2) + (x-1)^2(\gamma_2+\delta_1)]} \mathcal G_4(x,y) \mathcal G_3(x',y') \nonumber\\
  && \qquad + \sum_{y=0}^1 \sum_{x=y(\delta_1 + \delta_2)}^{1+y(\gamma_1 + \gamma_2)} \sum_{y'=0}^1 \sum_{x'=y'(\delta_4 + \delta_5)}^{1+y'(\gamma_4 + \gamma_5)} 2^{(y-1)^2[(x)^2(\gamma_1+\delta_2) + (x-1)^2(\gamma_2+\delta_1)]} 2^{(y'-1)^2[(x')^2(\gamma_4+\delta_5) + (x'-1)^2(\gamma_5+\delta_4)]} \nonumber\\
  && \qquad \qquad \qquad \qquad \qquad \qquad \qquad \qquad \left. \times \mathcal G_4(x,y) \mathcal G_4(x',y') \right\}\nonumber\\
\end{eqnarray}
\end{widetext}

The nine lines consists of sets of Gauss sums numbering nine, six, nine, six, four, six, nine, six, and nine, respectively. These are sets \emph{of sets} of Gauss sums, with the prior defined sets containing containing four, two, two, two, one, one, two, one, and one, respectively, smaller sets. As always, the smallest sets of Gauss sums are so chosen so that only one is non-zero for a given \(\hat P_k\).

The sets of nine Gauss sums can be reduced due to the simplification described after Eq.~\ref{eq:permutingGausssums}. This simplification reduces them down to seven Gauss sums and produces the following expression:
\begin{widetext}
\begin{eqnarray}
\label{eq:6qubittgate2}
  && \Tr \left[ \hat P \hat \rho_{T^6} \right]  \\
  &=& \frac{\omega}{8} \left\{\sum_{y=0}^{1} \sum_{x=0}^{y(\gamma_1+\gamma_2)+(y+1)(\gamma_1+\delta_2)} \sum_{y'=0+(\gamma_4+\delta_5)x}^{1-(\gamma_4+\gamma_5)x} \sum_{x'=0}^{y'(\gamma_4+\gamma_5)+(y'+1)(\gamma_4+\delta_5)} 2\times 2^{\left[1-y(\gamma_1-\gamma_2)^2-(y-1)^2(\gamma_1-\delta_2)^2\right]\left[1-y'(\gamma_4-\gamma_5)^2-(y'-1)^2(\gamma_4-\delta_5)^2\right]} \right. \nonumber\\
  && \qquad \qquad \qquad \qquad \qquad \qquad \qquad \qquad \qquad \qquad \qquad \qquad \times \left[ \mathcal G_1(x,y) + \mathcal G_2(x,y) \right] \left[ \mathcal G_1(x',y') + \mathcal G_2(x',y') \right] \nonumber\\
  && \qquad + 2 \sum_{y=0}^{1} \sum_{x=0}^{y(\gamma_1+\gamma_2)+(y+1)(\gamma_1+\delta_2)} \sum_{x'=0}^{1+\beta_6} \sum_{y'=0}^{1+\alpha_6} 2^{1-y'(\gamma_4-\gamma_5)^2-(y'-1)^2(\gamma_4-\delta_5)^2} \left[ \mathcal G_1(x,y) + \mathcal G_2(x,y) \right]  \mathcal G_3(x',y') \nonumber\\
  && \qquad + \sum_{y=0}^{1} \sum_{x=0}^{y(\gamma_1+\gamma_2)+(y+1)(\gamma_1+\delta_2)} \sum_{y'=x}^1 \sum_{x'=y'(\delta_4 + \delta_5)+x}^{1+y'(\gamma_4 + \gamma_5)+x} 2\times 2^{\left[1-y(\gamma_1-\gamma_2)^2-(y-1)^2(\gamma_1-\delta_2)^2\right]\left[(y'-1)^2[(x')^2(\gamma_4+\delta_5) + (x'-1)^2(\gamma_5+\delta_4)]\right]} \nonumber\\
  && \qquad \qquad \qquad \qquad \qquad \qquad \qquad \qquad \qquad \qquad \times \left[ \mathcal G_1(x,y) + \mathcal G_2(x,y) \right] \mathcal G_4(x',y') \nonumber\\
  && \qquad + 2 \sum_{x=0}^{1+\beta_3} \sum_{y=0}^{1+\alpha_3} \sum_{y'=0}^{1} \sum_{x'=0}^{y'(\gamma_4+\gamma_5)+(y'+1)(\gamma_4+\delta_5)} 2^{1-y'(\gamma_4-\gamma_5)^2-(y'-1)^2(\gamma_4-\delta_5)^2} \mathcal G_3(x,y) \left[ \mathcal G_1(x',y') + \mathcal G_2(x',y') \right]  \nonumber\\
  && \qquad + 4 \sum_{x=0}^{1+\beta_3} \sum_{y=0}^{1+\alpha_3} \sum_{x'=0}^{1+\beta_6} \sum_{y'=0}^{1+\alpha_6} \mathcal G_3(x,y) \mathcal G_3(x',y') \nonumber\\
  && \qquad + 2 \sum_{x=0}^{1+\beta_3} \sum_{y=0}^{1+\alpha_3} \sum_{y'=0}^1 \sum_{x'=y'(\delta_4 + \delta_5)}^{1+y'(\gamma_4 + \gamma_5)} 2^{(y'-1)^2[(x')^2(\gamma_4+\delta_5) + (x'-1)^2(\gamma_5+\delta_4)]} \mathcal G_3(x,y) \mathcal G_4(x',y') \nonumber\\
  && \qquad + \sum_{y=0}^1 \sum_{x=y(\delta_1 + \delta_2)}^{1+y(\gamma_1 + \gamma_2)} \sum_{y'=0+(\gamma_4+\delta_5)y}^{1-(\gamma_4+\gamma_5)y} \sum_{x'=0}^{y'(\gamma_4+\gamma_5)+(y'+1)(\gamma_4+\delta_5)} 2 \times 2^{\left[(y-1)^2[(x)^2(\gamma_1+\delta_2) + (x-1)^2(\gamma_2+\delta_1)]\right]\left[1-y'(\gamma_4-\gamma_5)^2-(y'-1)^2(\gamma_4-\delta_5)^2\right]} \nonumber\\
  && \qquad \qquad \qquad \qquad \qquad \qquad \qquad \qquad \qquad \qquad \times \mathcal G_4(x,y) \left[ \mathcal G_1(x',y') + \mathcal G_2(x',y') \right] \nonumber\\
  && \qquad + 2 \sum_{y=0}^1 \sum_{x=y(\delta_1 + \delta_2)}^{1+y(\gamma_1 + \gamma_2)} \sum_{x'=0}^{1+\beta_6} \sum_{y'=0}^{1+\alpha_6} 2^{(y-1)^2[(x)^2(\gamma_1+\delta_2) + (x-1)^2(\gamma_2+\delta_1)]} \mathcal G_4(x,y) \mathcal G_3(x',y') \nonumber\\
  && \qquad + \sum_{y=0}^1 \sum_{x=y(\delta_1 + \delta_2)}^{1+y(\gamma_1 + \gamma_2)} \sum_{y'=y}^1 \sum_{x'=y'(\delta_4 + \delta_5)+y}^{1+y'(\gamma_4 + \gamma_5)+y} 2 \times 2^{\left[(y-1)^2[(x)^2(\gamma_1+\delta_2) + (x-1)^2(\gamma_2+\delta_1)]\right]\left[(y'-1)^2[(x')^2(\gamma_4+\delta_5) + (x'-1)^2(\gamma_5+\delta_4)]\right]} \nonumber\\
  && \qquad \qquad \qquad \qquad \qquad \qquad \qquad \qquad \left. \times \mathcal G_4(x,y) \mathcal G_4(x',y') \right\}\nonumber\\
\end{eqnarray}
\end{widetext}

The largest set of Gauss sums that are non-zero for a given \(\hat P_k\) contains seven Gauss sums. This determines the worst-case scaling of \((\xi_{6})^{t/6} = 2^{\sim 0.468 t}\).

\section{Twelve-Qubit Decomposition}
\label{sec:12qubitdecomp}
\begin{widetext}
The largest sets can be reduced from forty-nine down to thirty-one Gauss sums by following the same simplification strategy employed on the six qubit cases. This can be seen by examining the first line of Eq.~\ref{eq:12qubittgate} below:
\begin{eqnarray}
\label{eq:12qubitsimplification1}
  && 8 \sum_{y=0}^{1} \sum_{x=0}^{\substack{y(\gamma_1+\gamma_2)+\\(y+1)(\gamma_1+\delta_2)}} \sum_{y'=(\gamma_4+\delta_5)x}^{1+(\gamma_4+\gamma_5)x} \sum_{x'=0}^{\substack{y'(\gamma_4+\gamma_5)+\\(y'+1)(\gamma_4+\delta_5)}} \sum_{y''=0}^{1} \sum_{x''=0}^{\substack{y''(\gamma_7+\gamma_8)+\\(y''+1)(\gamma_7+\delta_8)}} \sum_{y'''=(\gamma_{10}+\delta_{11})x''}^{1+(\gamma_{10}+\gamma_{11})x''} \sum_{x'''=0}^{\substack{y'''(\gamma_{10}+\gamma_{11})+\\(y'''+1)(\gamma_{10}+\delta_{11})}} \\
  && \times 2^{\left[1-y(\gamma_1-\gamma_2)^2-(y-1)^2(\gamma_1-\delta_2)^2\right]\left[1-y'(\gamma_4-\gamma_5)^2-(y'-1)^2(\gamma_4-\delta_5)^2\right]} 2^{\left[1-y''(\gamma_7-\gamma_8)^2-(y''-1)^2(\gamma_7-\delta_8)^2\right]\left[1-y'''(\gamma_{10}-\gamma_{11})^2-(y'''-1)^2(\gamma_{10}-\delta_{11})^2\right]} \nonumber\\
  && \times \left[ \mathcal G_1(x,y) + \mathcal G_2(x,y) \right] \left[ \mathcal G_1(x',y') + \mathcal G_2(x',y') \right] \left[ \mathcal G_1(x'',y'') + \mathcal G_2(x'',y'') \right] \left[ \mathcal G_1(x''',y''') + \mathcal G_2(x''',y''') \right] \nonumber\\
\label{eq:12qubitsimplification2}
  &=& 8 \sum_{y=0}^{1} \sum_{x=y(\beta_1+\alpha_2+\beta_2)}^{\substack{y(\beta_1+\alpha_2+\beta_2)+\\y(\gamma_1+\gamma_2)+\\(y+1)(\gamma_1+\delta_2)}} \sum_{y'=(\gamma_4+\delta_5)x}^{1-(\gamma_4+\gamma_5)x} \sum_{x'=y'(\beta_4+\alpha_5+\beta_5)}^{\substack{y'(\beta_4+\alpha_5+\beta_5)+\\y'(\gamma_4+\gamma_5)+\\(y'+1)(\gamma_4+\delta_5)}} \sum_{\substack{y''=\\(\gamma_7+\delta_8)(x+x')}}^{\substack{1+\\(\gamma_7+\gamma_8)(x+x')}} \sum_{x''=y''(\beta_7+\alpha_8+\beta_8)}^{\substack{y''(\beta_7+\alpha_8+\beta_8)+\\y''(\gamma_7+\gamma_8)+\\(y''+1)(\gamma_7+\delta_8)}} \sum_{\substack{y'''=\\(\gamma_{10}+\delta_{11})(x+x'+x'')}}^{\substack{1+\\(\gamma_{10}+\gamma_{11})(x+x'+x'')}} \sum_{x'''=y'''(\beta_{10}+\alpha_{11}+\beta_{12})}^{\substack{y'''(\beta_{10}+\alpha_{11}+\beta_{12})+\\y'''(\gamma_{10}+\gamma_{11})+\\(y'''+1)(\gamma_{10}+\delta_{11})}}  \\
  && \times 2^{\left[1-y(\gamma_1-\gamma_2)^2-(y-1)^2(\gamma_1-\delta_2)^2\right]\left[1-y'(\gamma_4-\gamma_5)^2-(y'-1)^2(\gamma_4-\delta_5)^2\right]\left[1-y''(\gamma_7-\gamma_8)^2-(y''-1)^2(\gamma_7-\delta_8)^2\right]\left[1-y'''(\gamma_{10}-\gamma_{11})^2-(y'''-1)^2(\gamma_{10}-\delta_{11})^2\right]} \nonumber\\
  && \times \left[ \mathcal G_1(x,y) + \mathcal G_2(x,y) \right] \left[ \mathcal G_1(x',y') + \mathcal G_2(x',y') \right] \left[ \mathcal G_1(x'',y'') + \mathcal G_2(x'',y'') \right] \left[ \mathcal G_1(x''',y''') + \mathcal G_2(x''',y''') \right] \nonumber
\end{eqnarray}
Notice that there exists a factor of two whose argument is simply the product of the arguments in Eq.~\ref{eq:12qubitsimplification2}'s powers of two. Furthermore, the domains of the second half of the sum in Eq.~\ref{eq:12qubitsimplification2} have an added factor of \(x+x'\) compared to Eq.~\ref{eq:12qubitsimplification1}. These are the same modifications that were made to reduce the number of Gauss sums in the largest set for the six-qubit \(T\) gate magic state.

The full equation follows:
{\tiny
\begin{eqnarray}
\label{eq:12qubittgate}
  && \Tr \left[ \hat P \hat \rho_{T^{12}} \right] \\
\label{eq:12qubittgate1}
  &=& \frac{\omega}{64} \left\{\sum_{y=0}^{1} \sum_{x=y(\beta_1+\alpha_2+\beta_2)}^{\substack{y(\beta_1+\alpha_2+\beta_2+\gamma_1+\gamma_2)+\\(y+1)(\gamma_1+\delta_2)}} \sum_{y'=(\gamma_4+\delta_5)x}^{1-(\gamma_4+\gamma_5)x} \sum_{x'=y'(\beta_4+\alpha_5+\beta_5)}^{\substack{y'(\beta_4+\alpha_5+\beta_5+\gamma_4+\gamma_5)+\\(y'+1)(\gamma_4+\delta_5)}} \right. \nonumber\\
  && \left[ \sum_{y''=(\gamma_7+\delta_8)(x+x')}^{1+(\gamma_7+\gamma_8)(x+x')} \sum_{x''=y''(\beta_7+\alpha_8+\beta_8)}^{\substack{y''(\beta_7+\alpha_8+\beta_8)+\\y''(\gamma_7+\gamma_8)+\\(y''+1)(\gamma_7+\delta_8)}} \sum_{y'''=(\gamma_{10}+\delta_{11})(x+x'+x'')}^{1+(\gamma_{10}+\gamma_{11})(x+x'+x'')} \sum_{x'''=y'''(\beta_{10}+\alpha_{11}+\beta_{12})}^{\substack{y'''(\beta_{10}+\alpha_{11}+\beta_{12})+\\y'''(\gamma_{10}+\gamma_{11})+\\(y'''+1)(\gamma_{10}+\delta_{11})}} \right. \nonumber\\
  && \times 4\times 2\times 2^{\left[1-y(\gamma_1-\gamma_2)^2-(y-1)^2(\gamma_1-\delta_2)^2\right]\left[1-y'(\gamma_4-\gamma_5)^2-(y'-1)^2(\gamma_4-\delta_5)^2\right]\left[1-y''(\gamma_7-\gamma_8)^2-(y''-1)^2(\gamma_7-\delta_8)^2\right]\left[1-y'''(\gamma_{10}-\gamma_{11})^2-(y'''-1)^2(\gamma_{10}-\delta_{11})^2\right]} \nonumber\\
  && \times \left[ \mathcal G_1(x,y) + \mathcal G_2(x,y) \right] \left[ \mathcal G_1(x',y') + \mathcal G_2(x',y') \right] \left[ \mathcal G_1(x'',y'') + \mathcal G_2(x'',y'') \right] \left[ \mathcal G_1(x''',y''') + \mathcal G_2(x''',y''') \right] \nonumber\\ %
  && + \sum_{y''=0}^{1} \sum_{x''=y''(\beta_7+\alpha_8+\beta_8)}^{\substack{y''(\beta_7+\alpha_8+\beta_8)+\\y''(\gamma_7+\gamma_8)+\\(y''+1)(\gamma_7+\delta_8)}} \sum_{x'''=0}^{1+\beta_{12}} \sum_{y'''=0}^{1+\alpha_{12}} \nonumber\\
  && \times 2\times 2^{\left[1-y(\gamma_1-\gamma_2)^2-(y-1)^2(\gamma_1-\delta_2)^2\right]\left[1-y'(\gamma_4-\gamma_5)^2-(y'-1)^2(\gamma_4-\delta_5)^2\right]} 2^{1-y'''(\gamma_{10}-\gamma_{11})^2-(y'''-1)^2(\gamma_{10}-\delta_{11})^2} \nonumber\\
  && \times \left[ \mathcal G_1(x,y) + \mathcal G_2(x,y) \right] \left[ \mathcal G_1(x',y') + \mathcal G_2(x',y') \right] \left[ \mathcal G_1(x'',y'') + \mathcal G_2(x'',y'') \right]  \mathcal G_3(x''',y''') \nonumber\\
  && + \sum_{y''=(\gamma_7+\delta_8)(x+x')}^{1+(\gamma_7+\gamma_8)(x+x')} \sum_{x''=y''(\beta_7+\alpha_8+\beta_8)}^{\substack{y''(\beta_7+\alpha_8+\beta_8)+\\y''(\gamma_7+\gamma_8)+\\(y''+1)(\gamma_7+\delta_8)}} \sum_{y'''=x(1+\alpha_1+\beta_1)+x'(1+\alpha_4+\beta_4)+x''(1+\alpha_7+\beta_7)}^1 \sum_{x'''=y'''(\delta_{10} + \delta_{11})+x(1+\alpha_1+\beta_1)+x'(1+\alpha_4+\beta_4)+x''(1+\alpha_7+\beta_7)}^{1+y'''(\gamma_{10} + \gamma_{11})+x(1+\alpha_1+\beta_1)+x'(1+\alpha_4+\beta_4)+x''(1+\alpha_7+\beta_7)}\nonumber\\
  && \times 4\times 2 \times 2^{\left[1-y(\gamma_1-\gamma_2)^2-(y-1)^2(\gamma_1-\delta_2)^2\right]\left[1-y'(\gamma_4-\gamma_5)^2-(y'-1)^2(\gamma_4-\delta_5)^2\right] \left[1-y''(\gamma_7-\gamma_8)^2-(y''-1)^2(\gamma_7-\delta_8)^2\right]\left[(y'''-1)^2[(x''')^2(\gamma_{10}+\delta_{11}) + (x'''-1)^2(\gamma_{11}+\delta_{10})]\right]} \nonumber\\
  && \times \left[ \mathcal G_1(x,y) + \mathcal G_2(x,y) \right] \left[ \mathcal G_1(x',y') + \mathcal G_2(x',y') \right] \left[ \mathcal G_1(x'',y'') + \mathcal G_2(x'',y'') \right] \mathcal G_4(x''',y''')\nonumber\\
  && + \sum_{x''=0}^{1+\beta_9} \sum_{y''=0}^{1+\alpha_9} \sum_{y'''=0}^{1} \sum_{x'''=y'''(\beta_{10}+\alpha_{11}+\beta_{11})}^{\substack{y'''(\beta_{10}+\alpha_{11}+\beta_{11})+\\y'''(\gamma_{10}+\gamma_{11})+\\(y'''+1)(\gamma_{10}+\delta_{11})}} \nonumber\\
  && \times 2\times 2^{\left[1-y(\gamma_1-\gamma_2)^2-(y-1)^2(\gamma_1-\delta_2)^2\right]\left[1-y'(\gamma_4-\gamma_5)^2-(y'-1)^2(\gamma_4-\delta_5)^2\right]} 2^{1-y'''(\gamma_{10}-\gamma_{11})^2-(y'''-1)^2(\gamma_{10}-\delta_{11})^2} \nonumber\\
  && \times \left[ \mathcal G_1(x,y) + \mathcal G_2(x,y) \right] \left[ \mathcal G_1(x',y') + \mathcal G_2(x',y') \right] \mathcal G_3(x'',y'') \left[ \mathcal G_1(x''',y''') + \mathcal G_2(x''',y''') \right] \nonumber\\
  && + \sum_{x''=0}^{1+\beta_9} \sum_{y''=0}^{1+\alpha_9} \sum_{x'''=0}^{1+\beta_{12}} \sum_{y'''=0}^{1+\alpha_{12}} \nonumber\\
  && \times 2\times 2^{\left[1-y(\gamma_1-\gamma_2)^2-(y-1)^2(\gamma_1-\delta_2)^2\right]\left[1-y'(\gamma_4-\gamma_5)^2-(y'-1)^2(\gamma_4-\delta_5)^2\right]} \nonumber\\
  && \times \left[ \mathcal G_1(x,y) + \mathcal G_2(x,y) \right] \left[ \mathcal G_1(x',y') + \mathcal G_2(x',y') \right] \mathcal G_3(x'',y'') \mathcal G_3(x''',y''')\nonumber\\
  && + \sum_{x''=0}^{1+\beta_9} \sum_{y''=0}^{1+\alpha_9} \sum_{y'''=0}^1 \sum_{x'''=y'''(\delta_{10} + \delta_{11})}^{1+y'''(\gamma_{10} + \gamma_{11})} \nonumber\\
  && \times 2\times 2^{\left[1-y(\gamma_1-\gamma_2)^2-(y-1)^2(\gamma_1-\delta_2)^2\right]\left[1-y'(\gamma_4-\gamma_5)^2-(y'-1)^2(\gamma_4-\delta_5)^2\right]} 2^{(y'''-1)^2[(x''')^2(\gamma_{10}+\delta_{11}) + (x'''-1)^2(\gamma_{11}+\delta_{10})]} \nonumber\\
  && \times \left[ \mathcal G_1(x,y) + \mathcal G_2(x,y) \right] \left[ \mathcal G_1(x',y') + \mathcal G_2(x',y') \right] \mathcal G_3(x'',y'') \mathcal G_4(x''',y''') \nonumber\\
  && + \sum_{y''=x(1+\alpha_1+\beta_1)+x'(1+\alpha_4+\beta_4)}^1 \sum_{x''=y''(\delta_7 + \delta_8)+x(1+\alpha_1+\beta_1)+x'(1+\alpha_4+\beta_4)}^{1+y''(\gamma_7 + \gamma_8)+x(1+\alpha_1+\beta_1)+x'(1+\alpha_4+\beta_4)} \sum_{y'''=(\gamma_{10}+\delta_{11})(x+x'+y'')}^{1-(\gamma_{10}+\gamma_{11})(x+x'+y'')} \sum_{x'''=y'''(\beta_{10}+\alpha_{11}+\beta_{11})}^{y'''(\beta_{10}+\alpha_{11}+\beta_{11})+y'''(\gamma_{10}+\gamma_{11})+(y'''+1)(\gamma_{10}+\delta_{11})} \nonumber\\
  && \times 4\times 2 \times 2^{\left[1-y(\gamma_1-\gamma_2)^2-(y-1)^2(\gamma_1-\delta_2)^2\right]\left[1-y'(\gamma_4-\gamma_5)^2-(y'-1)^2(\gamma_4-\delta_5)^2\right] \left[(y''-1)^2[(x'')^2(\gamma_7+\delta_8) + (x''-1)^2(\gamma_8+\delta_7)]\right]\left[1-y'''(\gamma_{10}-\gamma_{11})^2-(y'''-1)^2(\gamma_{10}-\delta_{11})^2\right]} \nonumber\\
  && \times \left[ \mathcal G_1(x,y) + \mathcal G_2(x,y) \right] \left[ \mathcal G_1(x',y') + \mathcal G_2(x',y') \right] \mathcal G_4(x'',y'') \left[ \mathcal G_1(x''',y''') + \mathcal G_2(x''',y''') \right] \nonumber\\
  && + \sum_{y''=0}^1 \sum_{x''=y''(\delta_7 + \delta_8)}^{1+y''(\gamma_7 + \gamma_8)} \sum_{x'''=0}^{1+\beta_{12}} \sum_{y'''=0}^{1+\alpha_{12}} \nonumber\\
  && \times 2\times 2^{\left[1-y(\gamma_1-\gamma_2)^2-(y-1)^2(\gamma_1-\delta_2)^2\right]\left[1-y'(\gamma_4-\gamma_5)^2-(y'-1)^2(\gamma_4-\delta_5)^2\right]} 2^{(y''-1)^2[(x'')^2(\gamma_7+\delta_8) + (x''-1)^2(\gamma_8+\delta_7)]} \nonumber\\
  && \times \left[ \mathcal G_1(x,y) + \mathcal G_2(x,y) \right] \left[ \mathcal G_1(x',y') + \mathcal G_2(x',y') \right] \mathcal G_4(x'',y'') \mathcal G_3(x''',y''') \nonumber\\
  && + \sum_{y''=x(1+\alpha_1+\beta_1)+x'(1+\alpha_4+\beta_4)}^1 \sum_{x''=y''(\delta_7 + \delta_8)+x(1+\alpha_1+\beta_1)+x'(1+\alpha_4+\beta_4)}^{1+y''(\gamma_7 + \gamma_8)+x(1+\alpha_1+\beta_1)+x'(1+\alpha_4+\beta_4)} \sum_{y'''=x(1+\alpha_1+\beta_1)+x'(1+\alpha_4+\beta_4)+y''}^1 \sum_{x'''=y'''(\delta_{10} + \delta_{11})+(x(1+\alpha_1+\beta_1)+x'(1+\alpha_4+\beta_4)+y'')}^{1+y'''(\gamma_{10} + \gamma_{11})+(x(1+\alpha_1+\beta_1)+x'(1+\alpha_4+\beta_4)+y'')} \nonumber\\
  && \times 4\times 2 \times 2^{\left[1-y(\gamma_1-\gamma_2)^2-(y-1)^2(\gamma_1-\delta_2)^2\right]\left[1-y'(\gamma_4-\gamma_5)^2-(y'-1)^2(\gamma_4-\delta_5)^2\right] \left[(y''-1)^2[(x'')^2(\gamma_7+\delta_8) + (x''-1)^2(\gamma_8+\delta_7)]\right]\left[(y'''-1)^2[(x''')^2(\gamma_{10}+\delta_{11}) + (x'''-1)^2(\gamma_{11}+\delta_{10})]\right]} \nonumber\\
  && \times \left[ \mathcal G_1(x,y) + \mathcal G_2(x,y) \right] \left[ \mathcal G_1(x',y') + \mathcal G_2(x',y') \right] \mathcal G_4(x'',y'') \mathcal G_4(x''',y''') \Bigg] \Bigg\} \nonumber
\end{eqnarray}

\begin{eqnarray}
\label{eq:12qubittgate2}
  &+& \frac{\omega}{64} \left\{2 \sum_{y=0}^{1} \sum_{x=y(\beta_1+\alpha_2+\beta_2)}^{y(\beta_1+\alpha_2+\beta_2)+y(\gamma_1+\gamma_2)+(y+1)(\gamma_1+\delta_2)} \sum_{x'=0}^{1+\beta_6} \sum_{y'=0}^{1+\alpha_6} \right.\\
  && \left[ \sum_{y''=0}^{1} \sum_{x''=y''(\beta_7+\alpha_8+\beta_8)}^{\substack{y''(\beta_7+\alpha_8+\beta_8)+\\y''(\gamma_7+\gamma_8)+\\(y''+1)(\gamma_7+\delta_8)}} \sum_{y'''=(\gamma_{10}+\delta_{11})x''}^{1-(\gamma_{10}+\gamma_{11})x''} \sum_{x'''=y'''(\beta_{10}+\alpha_{11}+\beta_{11})}^{\substack{y'''(\beta_{10}+\alpha_{11}+\beta_{11})+\\y'''(\gamma_{10}+\gamma_{11})+\\(y'''+1)(\gamma_{10}+\delta_{11})}} \right. \nonumber\\
  && \times 2^{1-y'(\gamma_4-\gamma_5)^2-(y'-1)^2(\gamma_4-\delta_5)^2} 2\times 2^{\left[1-y''(\gamma_7-\gamma_8)^2-(y''-1)^2(\gamma_7-\delta_8)^2\right]\left[1-y'''(\gamma_{10}-\gamma_{11})^2-(y'''-1)^2(\gamma_{10}-\delta_{11})^2\right]} \nonumber\\
  && \times \left[ \mathcal G_1(x,y) + \mathcal G_2(x,y) \right]  \mathcal G_3(x',y') \left[ \mathcal G_1(x'',y'') + \mathcal G_2(x'',y'') \right] \left[ \mathcal G_1(x''',y''') + \mathcal G_2(x''',y''') \right] \nonumber\\
  && + 2 \sum_{y''=0}^{1} \sum_{x''=y''(\beta_7+\alpha_8+\beta_8)}^{\substack{y''(\beta_7+\alpha_8+\beta_8)+\\y''(\gamma_7+\gamma_8)+\\(y''+1)(\gamma_7+\delta_8)}} \sum_{x'''=0}^{1+\beta_{12}} \sum_{y'''=0}^{1+\alpha_{12}} \nonumber\\
  && \times 2^{1-y'(\gamma_4-\gamma_5)^2-(y'-1)^2(\gamma_4-\delta_5)^2} 2^{1-y'''(\gamma_{10}-\gamma_{11})^2-(y'''-1)^2(\gamma_{10}-\delta_{11})^2} \nonumber\\
  && \times \left[ \mathcal G_1(x,y) + \mathcal G_2(x,y) \right]  \mathcal G_3(x',y') \left[ \mathcal G_1(x'',y'') + \mathcal G_2(x'',y'') \right]  \mathcal G_3(x''',y''') \nonumber\\
  && + \sum_{y''=0}^{1} \sum_{x''=y''(\beta_7+\alpha_8+\beta_8)}^{\substack{y''(\beta_7+\alpha_8+\beta_8)+\\y''(\gamma_7+\gamma_8)+\\(y''+1)(\gamma_7+\delta_8)}} \sum_{y'''=x''(1+\alpha_7+\beta_7)}^1 \sum_{x'''=y'''(\delta_{10} + \delta_{11})+x''(1+\alpha_7+\beta_7)}^{1+y'''(\gamma_{10} + \gamma_{11})+x''(1+\alpha_7+\beta_7)}\nonumber\\
  && \times 2^{1-y'(\gamma_4-\gamma_5)^2-(y'-1)^2(\gamma_4-\delta_5)^2} 2\times 2^{\left[1-y''(\gamma_7-\gamma_8)^2-(y''-1)^2(\gamma_7-\delta_8)^2\right]\left[(y'''-1)^2[(x''')^2(\gamma_{10}+\delta_{11}) + (x'''-1)^2(\gamma_{11}+\delta_{10})]\right]} \nonumber\\
  && \times \left[ \mathcal G_1(x,y) + \mathcal G_2(x,y) \right]  \mathcal G_3(x',y') \left[ \mathcal G_1(x'',y'') + \mathcal G_2(x'',y'') \right] \mathcal G_4(x''',y''')\nonumber\\
  && + 2 \sum_{x''=0}^{1+\beta_9} \sum_{y''=0}^{1+\alpha_9} \sum_{y'''=0}^{1} \sum_{x'''=y'''(\beta_{10}+\alpha_{11}+\beta_{11})}^{\substack{y'''(\beta_{10}+\alpha_{11}+\beta_{11})+\\y'''(\gamma_{10}+\gamma_{11})+\\(y'''+1)(\gamma_{10}+\delta_{11})}} \nonumber\\
  && \times 2^{1-y'(\gamma_4-\gamma_5)^2-(y'-1)^2(\gamma_4-\delta_5)^2} 2^{1-y'''(\gamma_{10}-\gamma_{11})^2-(y'''-1)^2(\gamma_{10}-\delta_{11})^2} \nonumber\\
  && \times \left[ \mathcal G_1(x,y) + \mathcal G_2(x,y) \right]  \mathcal G_3(x',y') \mathcal G_3(x'',y'') \left[ \mathcal G_1(x''',y''') + \mathcal G_2(x''',y''') \right] \nonumber\\
  && + 4 \sum_{x''=0}^{1+\beta_9} \sum_{y''=0}^{1+\alpha_9} \sum_{x'''=0}^{1+\beta_{12}} \sum_{y'''=0}^{1+\alpha_{12}} \nonumber\\
  && \times 2^{1-y'(\gamma_4-\gamma_5)^2-(y'-1)^2(\gamma_4-\delta_5)^2} \nonumber\\
  && \times \left[ \mathcal G_1(x,y) + \mathcal G_2(x,y) \right]  \mathcal G_3(x',y') \mathcal G_3(x'',y'') \mathcal G_3(x''',y''')\nonumber\\
  && + 2 \sum_{x''=0}^{1+\beta_9} \sum_{y''=0}^{1+\alpha_9} \sum_{y'''=0}^1 \sum_{x'''=y'''(\delta_{10} + \delta_{11})}^{1+y'''(\gamma_{10} + \gamma_{11})} \nonumber\\
  && \times 2^{1-y'(\gamma_4-\gamma_5)^2-(y'-1)^2(\gamma_4-\delta_5)^2} 2^{(y'''-1)^2[(x''')^2(\gamma_{10}+\delta_{11}) + (x'''-1)^2(\gamma_{11}+\delta_{10})]} \nonumber\\
  && \times \left[ \mathcal G_1(x,y) + \mathcal G_2(x,y) \right]  \mathcal G_3(x',y') \mathcal G_3(x'',y'') \mathcal G_4(x''',y''') \nonumber\\
  && + \sum_{y''=0}^1 \sum_{x''=y''(\delta_7 + \delta_8)}^{1+y''(\gamma_7 + \gamma_8)} \sum_{y'''=(\gamma_{10}+\delta_{11})y''}^{1-(\gamma_{10}+\gamma_{11})y''} \sum_{x'''=y'''(\beta_{10}+\alpha_{11}+\beta_{11})}^{y'''(\beta_{10}+\alpha_{11}+\beta_{11})+y'''(\gamma_{10}+\gamma_{11})+(y'''+1)(\gamma_{10}+\delta_{11})} \nonumber\\
  && \times 2^{1-y'(\gamma_4-\gamma_5)^2-(y'-1)^2(\gamma_4-\delta_5)^2} 2 \times 2^{\left[(y''-1)^2[(x'')^2(\gamma_7+\delta_8) + (x''-1)^2(\gamma_8+\delta_7)]\right]\left[1-y'''(\gamma_{10}-\gamma_{11})^2-(y'''-1)^2(\gamma_{10}-\delta_{11})^2\right]} \nonumber\\
  && \times \left[ \mathcal G_1(x,y) + \mathcal G_2(x,y) \right]  \mathcal G_3(x',y') \mathcal G_4(x'',y'') \left[ \mathcal G_1(x''',y''') + \mathcal G_2(x''',y''') \right] \nonumber\\
  && + 2 \sum_{y''=0}^1 \sum_{x''=y''(\delta_7 + \delta_8)}^{1+y''(\gamma_7 + \gamma_8)} \sum_{x'''=0}^{1+\beta_{12}} \sum_{y'''=0}^{1+\alpha_{12}} \nonumber\\
  && \times 2^{1-y'(\gamma_4-\gamma_5)^2-(y'-1)^2(\gamma_4-\delta_5)^2} 2^{(y''-1)^2[(x'')^2(\gamma_7+\delta_8) + (x''-1)^2(\gamma_8+\delta_7)]} \nonumber\\
  && \times \left[ \mathcal G_1(x,y) + \mathcal G_2(x,y) \right]  \mathcal G_3(x',y') \mathcal G_4(x'',y'') \mathcal G_3(x''',y''') \nonumber\\
  && + \sum_{y''=0}^1 \sum_{x''=y''(\delta_7 + \delta_8)}^{1+y''(\gamma_7 + \gamma_8)} \sum_{y'''=y''}^1 \sum_{x'''=y'''(\delta_{10} + \delta_{11})+y''}^{1+y'''(\gamma_{10} + \gamma_{11})+y''} \nonumber\\
  && \times 2^{1-y'(\gamma_4-\gamma_5)^2-(y'-1)^2(\gamma_4-\delta_5)^2} 2\times 2^{\left[(y''-1)^2[(x'')^2(\gamma_7+\delta_8) + (x''-1)^2(\gamma_8+\delta_7)]\right]\left[(y'''-1)^2[(x''')^2(\gamma_{10}+\delta_{11}) + (x'''-1)^2(\gamma_{11}+\delta_{10})]\right]} \nonumber\\
  && \times \left[ \mathcal G_1(x,y) + \mathcal G_2(x,y) \right]  \mathcal G_3(x',y') \mathcal G_4(x'',y'') \mathcal G_4(x''',y''') \Bigg] \Bigg\}\nonumber
\end{eqnarray}

\begin{eqnarray}
\label{eq:12qubittgate3}
  &+& \frac{\omega}{64} \left\{\sum_{y=0}^{1} \sum_{x=y(\beta_1+\alpha_2+\beta_2)}^{\substack{y(\beta_1+\alpha_2+\beta_2)+\\y(\gamma_1+\gamma_2)+\\(y+1)(\gamma_1+\delta_2)}} \sum_{y'=x(1+\alpha_1+\beta_1)}^1 \sum_{x'=y'(\delta_4 + \delta_5)+x(1+\alpha_1+\beta_1)}^{1+y'(\gamma_4 + \gamma_5)+x(1+\alpha_1+\beta_1)} \right. \\
  && \left[ \sum_{y''=(\gamma_7+\delta_8)(x+y')}^{1+(\gamma_7+\gamma_8)(x+y')} \sum_{x''=y''(\beta_7+\alpha_8+\beta_8)}^{\substack{y''(\beta_7+\alpha_8+\beta_8)+\\y''(\gamma_7+\gamma_8)+\\(y''+1)(\gamma_7+\delta_8)}} \sum_{y'''=(\gamma_{10}+\delta_{11})(x+y'+x'')}^{1+(\gamma_{10}+\gamma_{11})(x+y'+x'')} \sum_{x'''=y'''(\beta_{10}+\alpha_{11}+\beta_{11})}^{\substack{y'''(\beta_{10}+\alpha_{11}+\beta_{11})+\\y'''(\gamma_{10}+\gamma_{11})+\\(y'''+1)(\gamma_{10}+\delta_{11})}} \right. \nonumber\\
  && \times 4 \times 2 \times 2^{\left[1-y(\gamma_1-\gamma_2)^2-(y-1)^2(\gamma_1-\delta_2)^2\right]\left[(y'-1)^2[(x')^2(\gamma_4+\delta_5) + (x'-1)^2(\gamma_5+\delta_4)]\right] \left[1-y''(\gamma_7-\gamma_8)^2-(y''-1)^2(\gamma_7-\delta_8)^2\right]\left[1-y'''(\gamma_{10}-\gamma_{11})^2-(y'''-1)^2(\gamma_{10}-\delta_{11})^2\right]} \nonumber\\
  && \times \left[ \mathcal G_1(x,y) + \mathcal G_2(x,y) \right] \mathcal G_4(x',y') \left[ \mathcal G_1(x'',y'') + \mathcal G_2(x'',y'') \right] \left[ \mathcal G_1(x''',y''') + \mathcal G_2(x''',y''') \right] \nonumber\\
  && + 2 \sum_{y''=(\gamma_7+\delta_8)(x+y')}^{1+(\gamma_7+\gamma_8)(x+y')} \sum_{x''=y''(\beta_7+\alpha_8+\beta_8)}^{\substack{y''(\beta_7+\alpha_8+\beta_8)+\\y''(\gamma_7+\gamma_8)+\\(y''+1)(\gamma_7+\delta_8)}} \sum_{x'''=0}^{1+\beta_{12}} \sum_{y'''=0}^{1+\alpha_{12}} \nonumber\\
  && \times 2\times 2^{\left[1-y(\gamma_1-\gamma_2)^2-(y-1)^2(\gamma_1-\delta_2)^2\right]\left[(y'-1)^2[(x')^2(\gamma_4+\delta_5) + (x'-1)^2(\gamma_5+\delta_4)]\right]} 2^{1-y'''(\gamma_{10}-\gamma_{11})^2-(y'''-1)^2(\gamma_{10}-\delta_{11})^2} \nonumber\\
  && \times \left[ \mathcal G_1(x,y) + \mathcal G_2(x,y) \right] \mathcal G_4(x',y') \left[ \mathcal G_1(x'',y'') + \mathcal G_2(x'',y'') \right]  \mathcal G_3(x''',y''') \nonumber\\
  && + \sum_{y''=(\gamma_7+\delta_8)(x+y')}^{1+(\gamma_7+\gamma_8)(x+y')} \sum_{x''=y''(\beta_7+\alpha_8+\beta_8)}^{\substack{y''(\beta_7+\alpha_8+\beta_8)+\\y''(\gamma_7+\gamma_8)+\\(y''+1)(\gamma_7+\delta_8)}} \sum_{y'''=(x(1+\alpha_1+\beta_1)+y'+x''(1+\alpha_7+\beta_7))}^1 \sum_{x'''=y'''(\delta_{10} + \delta_{11})+(x(1+\alpha_1+\beta_1)+y'+x''(1+\alpha_7+\beta_7))}^{1+y'''(\gamma_{10} + \gamma_{11})+(x(1+\alpha_1+\beta_1)+y'+x''(1+\alpha_7+\beta_7))}\nonumber\\
  && \times 4 \times 2 \times 2^{\left[1-y(\gamma_1-\gamma_2)^2-(y-1)^2(\gamma_1-\delta_2)^2\right]\left[(y'-1)^2[(x')^2(\gamma_4+\delta_5) + (x'-1)^2(\gamma_5+\delta_4)]\right] \left[1-y''(\gamma_7-\gamma_8)^2-(y''-1)^2(\gamma_7-\delta_8)^2\right]\left[(y'''-1)^2[(x''')^2(\gamma_{10}+\delta_{11}) + (x'''-1)^2(\gamma_{11}+\delta_{10})]\right]} \nonumber\\
  && \times \left[ \mathcal G_1(x,y) + \mathcal G_2(x,y) \right] \mathcal G_4(x',y') \left[ \mathcal G_1(x'',y'') + \mathcal G_2(x'',y'') \right] \mathcal G_4(x''',y''')\nonumber\\
  && + 2 \sum_{x''=0}^{1+\beta_9} \sum_{y''=0}^{1+\alpha_9} \sum_{y'''=0}^{1} \sum_{x'''=y'''(\beta_{10}+\alpha_{11}+\beta_{11})}^{\substack{y'''(\beta_{10}+\alpha_{11}+\beta_{11})+\\y'''(\gamma_{10}+\gamma_{11})+\\(y'''+1)(\gamma_{10}+\delta_{11})}} \nonumber\\
  && \times 2\times 2^{\left[1-y(\gamma_1-\gamma_2)^2-(y-1)^2(\gamma_1-\delta_2)^2\right]\left[(y'-1)^2[(x')^2(\gamma_4+\delta_5) + (x'-1)^2(\gamma_5+\delta_4)]\right]} 2^{1-y'''(\gamma_{10}-\gamma_{11})^2-(y'''-1)^2(\gamma_{10}-\delta_{11})^2} \nonumber\\
  && \times \left[ \mathcal G_1(x,y) + \mathcal G_2(x,y) \right] \mathcal G_4(x',y') \mathcal G_3(x'',y'') \left[ \mathcal G_1(x''',y''') + \mathcal G_2(x''',y''') \right] \nonumber\\
  && + 4 \sum_{x''=0}^{1+\beta_9} \sum_{y''=0}^{1+\alpha_9} \sum_{x'''=0}^{1+\beta_{12}} \sum_{y'''=0}^{1+\alpha_{12}} \nonumber\\
  && \times 2\times 2^{\left[1-y(\gamma_1-\gamma_2)^2-(y-1)^2(\gamma_1-\delta_2)^2\right]\left[(y'-1)^2[(x')^2(\gamma_4+\delta_5) + (x'-1)^2(\gamma_5+\delta_4)]\right]} \nonumber\\
  && \times \left[ \mathcal G_1(x,y) + \mathcal G_2(x,y) \right] \mathcal G_4(x',y') \mathcal G_3(x'',y'') \mathcal G_3(x''',y''')\nonumber\\
  && + 2 \sum_{x''=0}^{1+\beta_9} \sum_{y''=0}^{1+\alpha_9} \sum_{y'''=0}^1 \sum_{x'''=y'''(\delta_{10} + \delta_{11})}^{1+y'''(\gamma_{10} + \gamma_{11})} \nonumber\\
  && \times 2\times 2^{\left[1-y(\gamma_1-\gamma_2)^2-(y-1)^2(\gamma_1-\delta_2)^2\right]\left[(y'-1)^2[(x')^2(\gamma_4+\delta_5) + (x'-1)^2(\gamma_5+\delta_4)]\right]} 2^{(y'''-1)^2[(x''')^2(\gamma_{10}+\delta_{11}) + (x'''-1)^2(\gamma_{11}+\delta_{10})]} \nonumber\\
  && \times \left[ \mathcal G_1(x,y) + \mathcal G_2(x,y) \right] \mathcal G_4(x',y') \mathcal G_3(x'',y'') \mathcal G_4(x''',y''') \nonumber\\
  && + \sum_{y''=x(1+\alpha_1+\beta_1)+y'}^1 \sum_{x''=y''(\delta_7 + \delta_8)+(x(1+\alpha_1+\beta_1)+y')}^{1+y''(\gamma_7 + \gamma_8)+(x(1+\alpha_1+\beta_1)+y')} \sum_{y'''=(\gamma_{10}+\delta_{11})(x+y'+y'')}^{1-(\gamma_{10}+\gamma_{11})(x+y'+y'')} \sum_{x'''=y'''(\beta_{10}+\alpha_{11}+\beta_{11})}^{y'''(\beta_{10}+\alpha_{11}+\beta_{11})+y'''(\gamma_{10}+\gamma_{11})+(y'''+1)(\gamma_{10}+\delta_{11})} \nonumber\\
  && \times 4 \times 2 \times 2^{\left[1-y(\gamma_1-\gamma_2)^2-(y-1)^2(\gamma_1-\delta_2)^2\right]\left[(y'-1)^2[(x')^2(\gamma_4+\delta_5) + (x'-1)^2(\gamma_5+\delta_4)]\right] \left[(y''-1)^2[(x'')^2(\gamma_7+\delta_8) + (x''-1)^2(\gamma_8+\delta_7)]\right]\left[1-y'''(\gamma_{10}-\gamma_{11})^2-(y'''-1)^2(\gamma_{10}-\delta_{11})^2\right]} \nonumber\\
  && \times \left[ \mathcal G_1(x,y) + \mathcal G_2(x,y) \right] \mathcal G_4(x',y') \mathcal G_4(x'',y'') \left[ \mathcal G_1(x''',y''') + \mathcal G_2(x''',y''') \right] \nonumber\\
  && + 2 \sum_{y''=0}^1 \sum_{x''=y''(\delta_7 + \delta_8)}^{1+y''(\gamma_7 + \gamma_8)} \sum_{x'''=0}^{1+\beta_{12}} \sum_{y'''=0}^{1+\alpha_{12}} \nonumber\\
  && \times 2\times 2^{\left[1-y(\gamma_1-\gamma_2)^2-(y-1)^2(\gamma_1-\delta_2)^2\right]\left[(y'-1)^2[(x')^2(\gamma_4+\delta_5) + (x'-1)^2(\gamma_5+\delta_4)]\right]} 2^{(y''-1)^2[(x'')^2(\gamma_7+\delta_8) + (x''-1)^2(\gamma_8+\delta_7)]} \nonumber\\
  && \times \left[ \mathcal G_1(x,y) + \mathcal G_2(x,y) \right] \mathcal G_4(x',y') \mathcal G_4(x'',y'') \mathcal G_3(x''',y''') \nonumber\\
  && + \sum_{y''=x(1+\alpha_1+\beta_1)+y'}^1 \sum_{x''=y''(\delta_7 + \delta_8)+(x(1+\alpha_1+\beta_1)+y')}^{1+y''(\gamma_7 + \gamma_8)+(x(1+\alpha_1+\beta_1)+y')} \sum_{y'''=x(1+\alpha_1+\beta_1)+y'+y''}^1 \sum_{x'''=y'''(\delta_{10} + \delta_{11})+(x(1+\alpha_1+\beta_1)+y'+y'')}^{1+y'''(\gamma_{10} + \gamma_{11})+(x(1+\alpha_1+\beta_1)+y'+y'')} \nonumber\\
  && \times 4 \times 2\times 2^{\left[1-y(\gamma_1-\gamma_2)^2-(y-1)^2(\gamma_1-\delta_2)^2\right]\left[(y'-1)^2[(x')^2(\gamma_4+\delta_5) + (x'-1)^2(\gamma_5+\delta_4)]\right] \left[(y''-1)^2[(x'')^2(\gamma_7+\delta_8) + (x''-1)^2(\gamma_8+\delta_7)]\right]\left[(y'''-1)^2[(x''')^2(\gamma_{10}+\delta_{11}) + (x'''-1)^2(\gamma_{11}+\delta_{10})]\right]} \nonumber\\
  && \times \left[ \mathcal G_1(x,y) + \mathcal G_2(x,y) \right] \mathcal G_4(x',y') \mathcal G_4(x'',y'') \mathcal G_4(x''',y''') \Bigg] \Bigg\}\nonumber
\end{eqnarray}

\begin{eqnarray}
\label{eq:12qubittgate4}
  &+& \frac{\omega}{64} \left\{2 \sum_{x=0}^{1+\beta_3} \sum_{y=0}^{1+\alpha_3} \sum_{y'=0}^{1} \sum_{x'=y'(\beta_4+\alpha_5+\beta_5)}^{\substack{y'(\beta_4+\alpha_5+\beta_5)+\\y'(\gamma_4+\gamma_5)+\\(y'+1)(\gamma_4+\delta_5)}} \right.\\
  && \left[ \sum_{y''=0}^{1} \sum_{x''=y''(\beta_7+\alpha_8+\beta_8)}^{\substack{y''(\beta_7+\alpha_8+\beta_8)+\\y''(\gamma_7+\gamma_8)+\\(y''+1)(\gamma_7+\delta_8)}} \sum_{y'''=(\gamma_{10}+\delta_{11})x''}^{1-(\gamma_{10}+\gamma_{11})x''} \sum_{x'''=y'''(\beta_{10}+\alpha_{11}+\beta_{11})}^{\substack{y'''(\beta_{10}+\alpha_{11}+\beta_{11})+\\y'''(\gamma_{10}+\gamma_{11})+\\(y'''+1)(\gamma_{10}+\delta_{11})}} \right. \nonumber\\
  && \times 2^{1-y'(\gamma_4-\gamma_5)^2-(y'-1)^2(\gamma_4-\delta_5)^2} 2\times 2^{\left[1-y''(\gamma_7-\gamma_8)^2-(y''-1)^2(\gamma_7-\delta_8)^2\right]\left[1-y'''(\gamma_{10}-\gamma_{11})^2-(y'''-1)^2(\gamma_{10}-\delta_{11})^2\right]} \nonumber\\
  && \times \mathcal G_3(x,y) \left[ \mathcal G_1(x',y') + \mathcal G_2(x',y') \right] \left[ \mathcal G_1(x'',y'') + \mathcal G_2(x'',y'') \right] \left[ \mathcal G_1(x''',y''') + \mathcal G_2(x''',y''') \right] \nonumber\\
  && + 2 \sum_{y''=0}^{1} \sum_{x''=y''(\beta_7+\alpha_8+\beta_8)}^{\substack{y''(\beta_7+\alpha_8+\beta_8)+\\y''(\gamma_7+\gamma_8)+\\(y''+1)(\gamma_7+\delta_8)}} \sum_{x'''=0}^{1+\beta_{12}} \sum_{y'''=0}^{1+\alpha_{12}} \nonumber\\
  && \times 2^{1-y'(\gamma_4-\gamma_5)^2-(y'-1)^2(\gamma_4-\delta_5)^2} 2^{1-y'''(\gamma_{10}-\gamma_{11})^2-(y'''-1)^2(\gamma_{10}-\delta_{11})^2} \nonumber\\
  && \times \mathcal G_3(x,y) \left[ \mathcal G_1(x',y') + \mathcal G_2(x',y') \right] \left[ \mathcal G_1(x'',y'') + \mathcal G_2(x'',y'') \right]  \mathcal G_3(x''',y''') \nonumber\\
  && + \sum_{y''=0}^{1} \sum_{x''=y''(\beta_7+\alpha_8+\beta_8)}^{\substack{y''(\beta_7+\alpha_8+\beta_8)+\\y''(\gamma_7+\gamma_8)+\\(y''+1)(\gamma_7+\delta_8)}} \sum_{y'''=x''(1+\alpha_7+\beta_7)}^1 \sum_{x'''=y'''(\delta_{10} + \delta_{11})+x''(1+\alpha_7+\beta_7)}^{1+y'''(\gamma_{10} + \gamma_{11})+x''(1+\alpha_7+\beta_7)}\nonumber\\
  && \times 2^{1-y'(\gamma_4-\gamma_5)^2-(y'-1)^2(\gamma_4-\delta_5)^2} 2\times 2^{\left[1-y''(\gamma_7-\gamma_8)^2-(y''-1)^2(\gamma_7-\delta_8)^2\right]\left[(y'''-1)^2[(x''')^2(\gamma_{10}+\delta_{11}) + (x'''-1)^2(\gamma_{11}+\delta_{10})]\right]} \nonumber\\
  && \times \mathcal G_3(x,y) \left[ \mathcal G_1(x',y') + \mathcal G_2(x',y') \right] \left[ \mathcal G_1(x'',y'') + \mathcal G_2(x'',y'') \right] \mathcal G_4(x''',y''')\nonumber\\
  && + 2 \sum_{x''=0}^{1+\beta_9} \sum_{y''=0}^{1+\alpha_9} \sum_{y'''=0}^{1} \sum_{x'''=y'''(\beta_{10}+\alpha_{11}+\beta_{11})}^{\substack{y'''(\beta_{10}+\alpha_{11}+\beta_{11})+\\y'''(\gamma_{10}+\gamma_{11})+\\(y'''+1)(\gamma_{10}+\delta_{11})}} \nonumber\\
  && \times 2^{1-y'(\gamma_4-\gamma_5)^2-(y'-1)^2(\gamma_4-\delta_5)^2} 2^{1-y'''(\gamma_{10}-\gamma_{11})^2-(y'''-1)^2(\gamma_{10}-\delta_{11})^2} \nonumber\\
  && \times \mathcal G_3(x,y) \left[ \mathcal G_1(x',y') + \mathcal G_2(x',y') \right] \mathcal G_3(x'',y'') \left[ \mathcal G_1(x''',y''') + \mathcal G_2(x''',y''') \right] \nonumber\\
  && + 4 \sum_{x''=0}^{1+\beta_9} \sum_{y''=0}^{1+\alpha_9} \sum_{x'''=0}^{1+\beta_{12}} \sum_{y'''=0}^{1+\alpha_{12}} \nonumber\\
  && \times 2^{1-y'(\gamma_4-\gamma_5)^2-(y'-1)^2(\gamma_4-\delta_5)^2} \nonumber\\
  && \times \mathcal G_3(x,y) \left[ \mathcal G_1(x',y') + \mathcal G_2(x',y') \right] \mathcal G_3(x'',y'') \mathcal G_3(x''',y''')\nonumber\\
  && + 2 \sum_{x''=0}^{1+\beta_9} \sum_{y''=0}^{1+\alpha_9} \sum_{y'''=0}^1 \sum_{x'''=y'''(\delta_{10} + \delta_{11})}^{1+y'''(\gamma_{10} + \gamma_{11})} \nonumber\\
  && \times 2^{1-y'(\gamma_4-\gamma_5)^2-(y'-1)^2(\gamma_4-\delta_5)^2} 2^{(y'''-1)^2[(x''')^2(\gamma_{10}+\delta_{11}) + (x'''-1)^2(\gamma_{11}+\delta_{10})]} \nonumber\\
  && \times \mathcal G_3(x,y) \left[ \mathcal G_1(x',y') + \mathcal G_2(x',y') \right] \mathcal G_3(x'',y'') \mathcal G_4(x''',y''') \nonumber\\
  && + \sum_{y''=0}^1 \sum_{x''=y''(\delta_7 + \delta_8)}^{1+y''(\gamma_7 + \gamma_8)} \sum_{y'''=(\gamma_{10}+\delta_{11})y''}^{1-(\gamma_{10}+\gamma_{11})y''} \sum_{x'''=y'''(\beta_{10}+\alpha_{11}+\beta_{11})}^{y'''(\beta_{10}+\alpha_{11}+\beta_{11})+y'''(\gamma_{10}+\gamma_{11})+(y'''+1)(\gamma_{10}+\delta_{11})} \nonumber\\
  && \times 2^{1-y'(\gamma_4-\gamma_5)^2-(y'-1)^2(\gamma_4-\delta_5)^2} 2 \times 2^{\left[(y''-1)^2[(x'')^2(\gamma_7+\delta_8) + (x''-1)^2(\gamma_8+\delta_7)]\right]\left[1-y'''(\gamma_{10}-\gamma_{11})^2-(y'''-1)^2(\gamma_{10}-\delta_{11})^2\right]} \nonumber\\
  && \times \mathcal G_3(x,y) \left[ \mathcal G_1(x',y') + \mathcal G_2(x',y') \right] \mathcal G_4(x'',y'') \left[ \mathcal G_1(x''',y''') + \mathcal G_2(x''',y''') \right] \nonumber\\
  && + 2 \sum_{y''=0}^1 \sum_{x''=y''(\delta_7 + \delta_8)}^{1+y''(\gamma_7 + \gamma_8)} \sum_{x'''=0}^{1+\beta_{12}} \sum_{y'''=0}^{1+\alpha_{12}} \nonumber\\
  && \times 2^{1-y'(\gamma_4-\gamma_5)^2-(y'-1)^2(\gamma_4-\delta_5)^2} 2^{(y''-1)^2[(x'')^2(\gamma_7+\delta_8) + (x''-1)^2(\gamma_8+\delta_7)]} \nonumber\\
  && \times \mathcal G_3(x,y) \left[ \mathcal G_1(x',y') + \mathcal G_2(x',y') \right] \mathcal G_4(x'',y'') \mathcal G_3(x''',y''') \nonumber\\
  && + \sum_{y''=0}^1 \sum_{x''=y''(\delta_7 + \delta_8)}^{1+y''(\gamma_7 + \gamma_8)} \sum_{y'''=y''}^1 \sum_{x'''=y'''(\delta_{10} + \delta_{11})+y''}^{1+y'''(\gamma_{10} + \gamma_{11})+y''} \nonumber\\
  && \times 2^{1-y'(\gamma_4-\gamma_5)^2-(y'-1)^2(\gamma_4-\delta_5)^2} 2\times 2^{\left[(y''-1)^2[(x'')^2(\gamma_7+\delta_8) + (x''-1)^2(\gamma_8+\delta_7)]\right]\left[(y'''-1)^2[(x''')^2(\gamma_{10}+\delta_{11}) + (x'''-1)^2(\gamma_{11}+\delta_{10})]\right]} \nonumber\\
  && \times \mathcal G_3(x,y) \left[ \mathcal G_1(x',y') + \mathcal G_2(x',y') \right] \mathcal G_4(x'',y'') \mathcal G_4(x''',y''') \Bigg] \Bigg\}\nonumber
\end{eqnarray}

\begin{eqnarray}
\label{eq:12qubittgate5}
  &+& \frac{\omega}{64} \left\{4 \sum_{x=0}^{1+\beta_3} \sum_{y=0}^{1+\alpha_3} \sum_{x'=0}^{1+\beta_6} \sum_{y'=0}^{1+\alpha_6} \right. \\
  && \left[ \sum_{y''=0}^{1} \sum_{x''=y''(\beta_7+\alpha_8+\beta_8)}^{\substack{y''(\beta_7+\alpha_8+\beta_8)+\\y''(\gamma_7+\gamma_8)+\\(y''+1)(\gamma_7+\delta_8)}} \sum_{y'''=(\gamma_{10}+\delta_{11})x''}^{1-(\gamma_{10}+\gamma_{11})x''} \sum_{x'''=y'''(\beta_{10}+\alpha_{11}+\beta_{11})}^{\substack{y'''(\beta_{10}+\alpha_{11}+\beta_{11})+\\y'''(\gamma_{10}+\gamma_{11})+\\(y'''+1)(\gamma_{10}+\delta_{11})}} \right. \nonumber\\
  && \times 2\times 2^{\left[1-y''(\gamma_7-\gamma_8)^2-(y''-1)^2(\gamma_7-\delta_8)^2\right]\left[1-y'''(\gamma_{10}-\gamma_{11})^2-(y'''-1)^2(\gamma_{10}-\delta_{11})^2\right]} \nonumber\\
  && \times \mathcal G_3(x,y) \mathcal G_3(x',y') \left[ \mathcal G_1(x'',y'') + \mathcal G_2(x'',y'') \right] \left[ \mathcal G_1(x''',y''') + \mathcal G_2(x''',y''') \right] \nonumber\\
  && + 2 \sum_{y''=0}^{1} \sum_{x''=y''(\beta_7+\alpha_8+\beta_8)}^{\substack{y''(\beta_7+\alpha_8+\beta_8)+\\y''(\gamma_7+\gamma_8)+\\(y''+1)(\gamma_7+\delta_8)}} \sum_{x'''=0}^{1+\beta_{12}} \sum_{y'''=0}^{1+\alpha_{12}} \nonumber\\
  && \times 2^{1-y'''(\gamma_{10}-\gamma_{11})^2-(y'''-1)^2(\gamma_{10}-\delta_{11})^2} \nonumber\\
  && \times \mathcal G_3(x,y) \mathcal G_3(x',y') \left[ \mathcal G_1(x'',y'') + \mathcal G_2(x'',y'') \right]  \mathcal G_3(x''',y''') \nonumber\\
  && + \sum_{y''=0}^{1} \sum_{x''=y''(\beta_7+\alpha_8+\beta_8)}^{\substack{y''(\beta_7+\alpha_8+\beta_8)+\\y''(\gamma_7+\gamma_8)+\\(y''+1)(\gamma_7+\delta_8)}} \sum_{y'''=x''(1+\alpha_7+\beta_7)}^1 \sum_{x'''=y'''(\delta_{10} + \delta_{11})+x''(1+\alpha_7+\beta_7)}^{1+y'''(\gamma_{10} + \gamma_{11})+x''(1+\alpha_7+\beta_7)}\nonumber\\
  && \times 2\times 2^{\left[1-y''(\gamma_7-\gamma_8)^2-(y''-1)^2(\gamma_7-\delta_8)^2\right]\left[(y'''-1)^2[(x''')^2(\gamma_{10}+\delta_{11}) + (x'''-1)^2(\gamma_{11}+\delta_{10})]\right]} \nonumber\\
  && \times \mathcal G_3(x,y) \mathcal G_3(x',y') \left[ \mathcal G_1(x'',y'') + \mathcal G_2(x'',y'') \right] \mathcal G_4(x''',y''')\nonumber\\
  && + 2 \sum_{x''=0}^{1+\beta_9} \sum_{y''=0}^{1+\alpha_9} \sum_{y'''=0}^{1} \sum_{x'''=y'''(\beta_{10}+\alpha_{11}+\beta_{11})}^{\substack{y'''(\beta_{10}+\alpha_{11}+\beta_{11})+\\y'''(\gamma_{10}+\gamma_{11})+\\(y'''+1)(\gamma_{10}+\delta_{11})}} \nonumber\\
  && \times 2^{1-y'''(\gamma_{10}-\gamma_{11})^2-(y'''-1)^2(\gamma_{10}-\delta_{11})^2} \nonumber\\
  && \times \mathcal G_3(x,y) \mathcal G_3(x',y') \mathcal G_3(x'',y'') \left[ \mathcal G_1(x''',y''') + \mathcal G_2(x''',y''') \right] \nonumber\\
  && + 4 \sum_{x''=0}^{1+\beta_9} \sum_{y''=0}^{1+\alpha_9} \sum_{x'''=0}^{1+\beta_{12}} \sum_{y'''=0}^{1+\alpha_{12}} \nonumber\\
  && \times \mathcal G_3(x,y) \mathcal G_3(x',y') \mathcal G_3(x'',y'') \mathcal G_3(x''',y''')\nonumber\\
  && + 2 \sum_{x''=0}^{1+\beta_9} \sum_{y''=0}^{1+\alpha_9} \sum_{y'''=0}^1 \sum_{x'''=y'''(\delta_{10} + \delta_{11})}^{1+y'''(\gamma_{10} + \gamma_{11})} \nonumber\\
  && \times 2^{(y'''-1)^2[(x''')^2(\gamma_{10}+\delta_{11}) + (x'''-1)^2(\gamma_{11}+\delta_{10})]} \nonumber\\
  && \times \mathcal G_3(x,y) \mathcal G_3(x',y') \mathcal G_3(x'',y'') \mathcal G_4(x''',y''') \nonumber\\
  && + \sum_{y''=0}^1 \sum_{x''=y''(\delta_7 + \delta_8)}^{1+y''(\gamma_7 + \gamma_8)} \sum_{y'''=(\gamma_{10}+\delta_{11})y''}^{1-(\gamma_{10}+\gamma_{11})y''} \sum_{x'''=y'''(\beta_{10}+\alpha_{11}+\beta_{11})}^{y'''(\beta_{10}+\alpha_{11}+\beta_{11})+y'''(\gamma_{10}+\gamma_{11})+(y'''+1)(\gamma_{10}+\delta_{11})} \nonumber\\
  && \times 2 \times 2^{\left[(y''-1)^2[(x'')^2(\gamma_7+\delta_8) + (x''-1)^2(\gamma_8+\delta_7)]\right]\left[1-y'''(\gamma_{10}-\gamma_{11})^2-(y'''-1)^2(\gamma_{10}-\delta_{11})^2\right]} \nonumber\\
  && \times \mathcal G_3(x,y) \mathcal G_3(x',y') \mathcal G_4(x'',y'') \left[ \mathcal G_1(x''',y''') + \mathcal G_2(x''',y''') \right] \nonumber\\
  && + 2 \sum_{y''=0}^1 \sum_{x''=y''(\delta_7 + \delta_8)}^{1+y''(\gamma_7 + \gamma_8)} \sum_{x'''=0}^{1+\beta_{12}} \sum_{y'''=0}^{1+\alpha_{12}} \nonumber\\
  && \times 2^{(y''-1)^2[(x'')^2(\gamma_7+\delta_8) + (x''-1)^2(\gamma_8+\delta_7)]} \nonumber\\
  && \times \mathcal G_3(x,y) \mathcal G_3(x',y') \mathcal G_4(x'',y'') \mathcal G_3(x''',y''') \nonumber\\
  && + \sum_{y''=0}^1 \sum_{x''=y''(\delta_7 + \delta_8)}^{1+y''(\gamma_7 + \gamma_8)} \sum_{y'''=y''}^1 \sum_{x'''=y'''(\delta_{10} + \delta_{11})+y''}^{1+y'''(\gamma_{10} + \gamma_{11})+y''} \nonumber\\
  && \times 2\times 2^{\left[(y''-1)^2[(x'')^2(\gamma_7+\delta_8) + (x''-1)^2(\gamma_8+\delta_7)]\right]\left[(y'''-1)^2[(x''')^2(\gamma_{10}+\delta_{11}) + (x'''-1)^2(\gamma_{11}+\delta_{10})]\right]} \nonumber\\
  && \times \mathcal G_3(x,y) \mathcal G_3(x',y') \mathcal G_4(x'',y'') \mathcal G_4(x''',y''') \Bigg] \Bigg\} \nonumber
\end{eqnarray}

\begin{eqnarray}
\label{eq:12qubittgate6}
  &+& \frac{\omega}{64} \left\{2 \sum_{x=0}^{1+\beta_3} \sum_{y=0}^{1+\alpha_3} \sum_{y'=0}^1 \sum_{x'=y'(\delta_4 + \delta_5)}^{1+y'(\gamma_4 + \gamma_5)} \right.\\
  && \left[ \sum_{y''=0}^{1} \sum_{x''=y''(\beta_7+\alpha_8+\beta_8)}^{\substack{y''(\beta_7+\alpha_8+\beta_8)+\\y''(\gamma_7+\gamma_8)+\\(y''+1)(\gamma_7+\delta_8)}} \sum_{y'''=(\gamma_{10}+\delta_{11})x''}^{1-(\gamma_{10}+\gamma_{11})x''} \sum_{x'''=y'''(\beta_{10}+\alpha_{11}+\beta_{11})}^{\substack{y'''(\beta_{10}+\alpha_{11}+\beta_{11})+\\y'''(\gamma_{10}+\gamma_{11})+\\(y'''+1)(\gamma_{10}+\delta_{11})}} \right. \nonumber\\
  && \times 2^{(y'-1)^2[(x')^2(\gamma_4+\delta_5) + (x'-1)^2(\gamma_5+\delta_4)]} 2\times 2^{\left[1-y''(\gamma_7-\gamma_8)^2-(y''-1)^2(\gamma_7-\delta_8)^2\right]\left[1-y'''(\gamma_{10}-\gamma_{11})^2-(y'''-1)^2(\gamma_{10}-\delta_{11})^2\right]} \nonumber\\
  && \times \mathcal G_3(x,y) \mathcal G_4(x',y') \left[ \mathcal G_1(x'',y'') + \mathcal G_2(x'',y'') \right] \left[ \mathcal G_1(x''',y''') + \mathcal G_2(x''',y''') \right] \nonumber\\
  && + 2 \sum_{y''=0}^{1} \sum_{x''=y''(\beta_7+\alpha_8+\beta_8)}^{\substack{y''(\beta_7+\alpha_8+\beta_8)+\\y''(\gamma_7+\gamma_8)+\\(y''+1)(\gamma_7+\delta_8)}} \sum_{x'''=0}^{1+\beta_{12}} \sum_{y'''=0}^{1+\alpha_{12}} \nonumber\\
  && \times 2^{(y'-1)^2[(x')^2(\gamma_4+\delta_5) + (x'-1)^2(\gamma_5+\delta_4)]} 2^{1-y'''(\gamma_{10}-\gamma_{11})^2-(y'''-1)^2(\gamma_{10}-\delta_{11})^2} \nonumber\\
  && \times \mathcal G_3(x,y) \mathcal G_4(x',y') \left[ \mathcal G_1(x'',y'') + \mathcal G_2(x'',y'') \right]  \mathcal G_3(x''',y''') \nonumber\\
  && + \sum_{y''=0}^{1} \sum_{x''=y''(\beta_7+\alpha_8+\beta_8)}^{\substack{y''(\beta_7+\alpha_8+\beta_8)+\\y''(\gamma_7+\gamma_8)+\\(y''+1)(\gamma_7+\delta_8)}} \sum_{y'''=x''(1+\alpha_7+\beta_7)}^1 \sum_{x'''=y'''(\delta_{10} + \delta_{11})+x''(1+\alpha_7+\beta_7)}^{1+y'''(\gamma_{10} + \gamma_{11})+x''(1+\alpha_7+\beta_7)}\nonumber\\
  && \times 2^{(y'-1)^2[(x')^2(\gamma_4+\delta_5) + (x'-1)^2(\gamma_5+\delta_4)]} 2\times 2^{\left[1-y''(\gamma_7-\gamma_8)^2-(y''-1)^2(\gamma_7-\delta_8)^2\right]\left[(y'''-1)^2[(x''')^2(\gamma_{10}+\delta_{11}) + (x'''-1)^2(\gamma_{11}+\delta_{10})]\right]} \nonumber\\
  && \times \mathcal G_3(x,y) \mathcal G_4(x',y') \left[ \mathcal G_1(x'',y'') + \mathcal G_2(x'',y'') \right] \mathcal G_4(x''',y''')\nonumber\\
  && + 2 \sum_{x''=0}^{1+\beta_9} \sum_{y''=0}^{1+\alpha_9} \sum_{y'''=0}^{1} \sum_{x'''=y'''(\beta_{10}+\alpha_{11}+\beta_{11})}^{\substack{y'''(\beta_{10}+\alpha_{11}+\beta_{11})+\\y'''(\gamma_{10}+\gamma_{11})+\\(y'''+1)(\gamma_{10}+\delta_{11})}} \nonumber\\
  && \times 2^{(y'-1)^2[(x')^2(\gamma_4+\delta_5) + (x'-1)^2(\gamma_5+\delta_4)]} 2^{1-y'''(\gamma_{10}-\gamma_{11})^2-(y'''-1)^2(\gamma_{10}-\delta_{11})^2} \nonumber\\
  && \times \mathcal G_3(x,y) \mathcal G_4(x',y') \mathcal G_3(x'',y'') \left[ \mathcal G_1(x''',y''') + \mathcal G_2(x''',y''') \right] \nonumber\\
  && + 4 \sum_{x''=0}^{1+\beta_9} \sum_{y''=0}^{1+\alpha_9} \sum_{x'''=0}^{1+\beta_{12}} \sum_{y'''=0}^{1+\alpha_{12}} \nonumber\\
  && \times 2^{(y'-1)^2[(x')^2(\gamma_4+\delta_5) + (x'-1)^2(\gamma_5+\delta_4)]} \nonumber\\
  && \times \mathcal G_3(x,y) \mathcal G_4(x',y') \mathcal G_3(x'',y'') \mathcal G_3(x''',y''')\nonumber\\
  && + 2 \sum_{x''=0}^{1+\beta_9} \sum_{y''=0}^{1+\alpha_9} \sum_{y'''=0}^1 \sum_{x'''=y'''(\delta_{10} + \delta_{11})}^{1+y'''(\gamma_{10} + \gamma_{11})} \nonumber\\
  && \times 2^{(y'-1)^2[(x')^2(\gamma_4+\delta_5) + (x'-1)^2(\gamma_5+\delta_4)]} 2^{(y'''-1)^2[(x''')^2(\gamma_{10}+\delta_{11}) + (x'''-1)^2(\gamma_{11}+\delta_{10})]} \nonumber\\
  && \times \mathcal G_3(x,y) \mathcal G_4(x',y') \mathcal G_3(x'',y'') \mathcal G_4(x''',y''') \nonumber\\
  && + \sum_{y''=0}^1 \sum_{x''=y''(\delta_7 + \delta_8)}^{1+y''(\gamma_7 + \gamma_8)} \sum_{y'''=(\gamma_{10}+\delta_{11})y''}^{1-(\gamma_{10}+\gamma_{11})y''} \sum_{x'''=y'''(\beta_{10}+\alpha_{11}+\beta_{11})}^{y'''(\beta_{10}+\alpha_{11}+\beta_{11})+y'''(\gamma_{10}+\gamma_{11})+(y'''+1)(\gamma_{10}+\delta_{11})} \nonumber\\
  && \times 2^{(y'-1)^2[(x')^2(\gamma_4+\delta_5) + (x'-1)^2(\gamma_5+\delta_4)]} 2 \times 2^{\left[(y''-1)^2[(x'')^2(\gamma_7+\delta_8) + (x''-1)^2(\gamma_8+\delta_7)]\right]\left[1-y'''(\gamma_{10}-\gamma_{11})^2-(y'''-1)^2(\gamma_{10}-\delta_{11})^2\right]} \nonumber\\
  && \times \mathcal G_3(x,y) \mathcal G_4(x',y') \mathcal G_4(x'',y'') \left[ \mathcal G_1(x''',y''') + \mathcal G_2(x''',y''') \right] \nonumber\\
  && + 2 \sum_{y''=0}^1 \sum_{x''=y''(\delta_7 + \delta_8)}^{1+y''(\gamma_7 + \gamma_8)} \sum_{x'''=0}^{1+\beta_{12}} \sum_{y'''=0}^{1+\alpha_{12}} \nonumber\\
  && \times 2^{(y'-1)^2[(x')^2(\gamma_4+\delta_5) + (x'-1)^2(\gamma_5+\delta_4)]} 2^{(y''-1)^2[(x'')^2(\gamma_7+\delta_8) + (x''-1)^2(\gamma_8+\delta_7)]} \nonumber\\
  && \times \mathcal G_3(x,y) \mathcal G_4(x',y') \mathcal G_4(x'',y'') \mathcal G_3(x''',y''') \nonumber\\
  && + \sum_{y''=0}^1 \sum_{x''=y''(\delta_7 + \delta_8)}^{1+y''(\gamma_7 + \gamma_8)} \sum_{y'''=y''}^1 \sum_{x'''=y'''(\delta_{10} + \delta_{11})+y''}^{1+y'''(\gamma_{10} + \gamma_{11})+y''} \nonumber\\
  && \times 2^{(y'-1)^2[(x')^2(\gamma_4+\delta_5) + (x'-1)^2(\gamma_5+\delta_4)]} 2 \times 2^{\left[(y''-1)^2[(x'')^2(\gamma_7+\delta_8) + (x''-1)^2(\gamma_8+\delta_7)]\right]\left[(y'''-1)^2[(x''')^2(\gamma_{10}+\delta_{11}) + (x'''-1)^2(\gamma_{11}+\delta_{10})]\right]} \nonumber\\
  && \times \mathcal G_3(x,y) \mathcal G_4(x',y') \mathcal G_4(x'',y'') \mathcal G_4(x''',y''') \Bigg] \Bigg\} \nonumber
\end{eqnarray}

\begin{eqnarray}
\label{eq:12qubittgate7}
  &+& \frac{\omega}{64} \left\{\sum_{y=0}^1 \sum_{x=y(\delta_1 + \delta_2)}^{1+y(\gamma_1 + \gamma_2)} \sum_{y'=(\gamma_4+\delta_5)y}^{1-(\gamma_4+\gamma_5)y} \sum_{x'=y'(\beta_4+\alpha_5+\beta_5)}^{y'(\beta_4+\alpha_5+\beta_5)+y'(\gamma_4+\gamma_5)+(y'+1)(\gamma_4+\delta_5)} \right. \\
  && \left[ \sum_{y''=(\gamma_7+\delta_8)(y+x')}^{1+(\gamma_7+\gamma_8)(y+x')} \sum_{x''=y''(\beta_7+\alpha_8+\beta_8)}^{\substack{y''(\beta_7+\alpha_8+\beta_8)+\\y''(\gamma_7+\gamma_8)+\\(y''+1)(\gamma_7+\delta_8)}} \sum_{y'''=(\gamma_{10}+\delta_{11})(y+x'+x'')}^{1+(\gamma_{10}+\gamma_{11})(y+x'+x'')} \sum_{x'''=y'''(\beta_{10}+\alpha_{11}+\beta_{11})}^{\substack{y'''(\beta_{10}+\alpha_{11}+\beta_{11})+\\y'''(\gamma_{10}+\gamma_{11})+\\(y'''+1)(\gamma_{10}+\delta_{11})}} \right. \nonumber\\
  && \times 4 \times 2 \times 2^{\left[(y-1)^2[(x)^2(\gamma_1+\delta_2) + (x-1)^2(\gamma_2+\delta_1)]\right]\left[1-y'(\gamma_4-\gamma_5)^2-(y'-1)^2(\gamma_4-\delta_5)^2\right] \left[1-y''(\gamma_7-\gamma_8)^2-(y''-1)^2(\gamma_7-\delta_8)^2\right]\left[1-y'''(\gamma_{10}-\gamma_{11})^2-(y'''-1)^2(\gamma_{10}-\delta_{11})^2\right]} \nonumber\\
  && \times \mathcal G_4(x,y) \left[ \mathcal G_1(x',y') + \mathcal G_2(x',y') \right] \left[ \mathcal G_1(x'',y'') + \mathcal G_2(x'',y'') \right] \left[ \mathcal G_1(x''',y''') + \mathcal G_2(x''',y''') \right] \nonumber\\
  && + 2 \sum_{y''=0}^{1} \sum_{x''=y''(\beta_7+\alpha_8+\beta_8)}^{\substack{y''(\beta_7+\alpha_8+\beta_8)+\\y''(\gamma_7+\gamma_8)+\\(y''+1)(\gamma_7+\delta_8)}} \sum_{x'''=0}^{1+\beta_{12}} \sum_{y'''=0}^{1+\alpha_{12}} \nonumber\\
  && \times 2 \times 2^{\left[(y-1)^2[(x)^2(\gamma_1+\delta_2) + (x-1)^2(\gamma_2+\delta_1)]\right]\left[1-y'(\gamma_4-\gamma_5)^2-(y'-1)^2(\gamma_4-\delta_5)^2\right]} 2^{1-y'''(\gamma_{10}-\gamma_{11})^2-(y'''-1)^2(\gamma_{10}-\delta_{11})^2} \nonumber\\
  && \times \mathcal G_4(x,y) \left[ \mathcal G_1(x',y') + \mathcal G_2(x',y') \right] \left[ \mathcal G_1(x'',y'') + \mathcal G_2(x'',y'') \right]  \mathcal G_3(x''',y''') \nonumber\\
  && + \sum_{y''=(\gamma_7+\delta_8)(y+x')}^{1+(\gamma_7+\gamma_8)(y+x')} \sum_{x''=y''(\beta_7+\alpha_8+\beta_8)}^{\substack{y''(\beta_7+\alpha_8+\beta_8)+\\y''(\gamma_7+\gamma_8)+\\(y''+1)(\gamma_7+\delta_8)}} \sum_{y'''=y+x'(1+\alpha_4+\beta_4)+x''(1+\alpha_7+\beta_7)}^1 \sum_{x'''=y'''(\delta_{10} + \delta_{11})+(y+x'(1+\alpha_4+\beta_4)+x''(1+\alpha_7+\beta_7))}^{1+y'''(\gamma_{10} + \gamma_{11})+(y+x'(1+\alpha_4+\beta_4)+x''(1+\alpha_7+\beta_7))}\nonumber\\
  && \times 4 \times 2 \times 2^{\left[(y-1)^2[(x)^2(\gamma_1+\delta_2) + (x-1)^2(\gamma_2+\delta_1)]\right]\left[1-y'(\gamma_4-\gamma_5)^2-(y'-1)^2(\gamma_4-\delta_5)^2\right] \left[1-y''(\gamma_7-\gamma_8)^2-(y''-1)^2(\gamma_7-\delta_8)^2\right]\left[(y'''-1)^2[(x''')^2(\gamma_{10}+\delta_{11}) + (x'''-1)^2(\gamma_{11}+\delta_{10})]\right]} \nonumber\\
  && \times \mathcal G_4(x,y) \left[ \mathcal G_1(x',y') + \mathcal G_2(x',y') \right] \left[ \mathcal G_1(x'',y'') + \mathcal G_2(x'',y'') \right] \mathcal G_4(x''',y''')\nonumber\\
  && + 2 \sum_{x''=0}^{1+\beta_9} \sum_{y''=0}^{1+\alpha_9} \sum_{y'''=0}^{1} \sum_{x'''=y'''(\beta_{10}+\alpha_{11}+\beta_{11})}^{\substack{y'''(\beta_{10}+\alpha_{11}+\beta_{11})+\\y'''(\gamma_{10}+\gamma_{11})+\\(y'''+1)(\gamma_{10}+\delta_{11})}} \nonumber\\
  && \times 2 \times 2^{\left[(y-1)^2[(x)^2(\gamma_1+\delta_2) + (x-1)^2(\gamma_2+\delta_1)]\right]\left[1-y'(\gamma_4-\gamma_5)^2-(y'-1)^2(\gamma_4-\delta_5)^2\right]} 2^{1-y'''(\gamma_{10}-\gamma_{11})^2-(y'''-1)^2(\gamma_{10}-\delta_{11})^2} \nonumber\\
  && \times \mathcal G_4(x,y) \left[ \mathcal G_1(x',y') + \mathcal G_2(x',y') \right] \mathcal G_3(x'',y'') \left[ \mathcal G_1(x''',y''') + \mathcal G_2(x''',y''') \right] \nonumber\\
  && + 4 \sum_{x''=0}^{1+\beta_9} \sum_{y''=0}^{1+\alpha_9} \sum_{x'''=0}^{1+\beta_{12}} \sum_{y'''=0}^{1+\alpha_{12}} \nonumber\\
  && \times 2 \times 2^{\left[(y-1)^2[(x)^2(\gamma_1+\delta_2) + (x-1)^2(\gamma_2+\delta_1)]\right]\left[1-y'(\gamma_4-\gamma_5)^2-(y'-1)^2(\gamma_4-\delta_5)^2\right]} \nonumber\\
  && \times \mathcal G_4(x,y) \left[ \mathcal G_1(x',y') + \mathcal G_2(x',y') \right] \mathcal G_3(x'',y'') \mathcal G_3(x''',y''')\nonumber\\
  && + 2 \sum_{x''=0}^{1+\beta_9} \sum_{y''=0}^{1+\alpha_9} \sum_{y'''=0}^1 \sum_{x'''=y'''(\delta_{10} + \delta_{11})}^{1+y'''(\gamma_{10} + \gamma_{11})} \nonumber\\
  && \times 2 \times 2^{\left[(y-1)^2[(x)^2(\gamma_1+\delta_2) + (x-1)^2(\gamma_2+\delta_1)]\right]\left[1-y'(\gamma_4-\gamma_5)^2-(y'-1)^2(\gamma_4-\delta_5)^2\right]} 2^{(y'''-1)^2[(x''')^2(\gamma_{10}+\delta_{11}) + (x'''-1)^2(\gamma_{11}+\delta_{10})]} \nonumber\\
  && \times \mathcal G_4(x,y) \left[ \mathcal G_1(x',y') + \mathcal G_2(x',y') \right] \mathcal G_3(x'',y'') \mathcal G_4(x''',y''') \nonumber\\
  && + \sum_{y''=y+x'(1+\alpha_4+\beta_4)}^1 \sum_{x''=y''(\delta_7 + \delta_8)+(y+x'(1+\alpha_4+\beta_4))}^{1+y''(\gamma_7 + \gamma_8)+(y+x'(1+\alpha_4+\beta_4))} \sum_{y'''=(\gamma_{10}+\delta_{11})(y+x'+y'')}^{1-(\gamma_{10}+\gamma_{11})(y+x'+y'')} \sum_{x'''=y'''(\beta_{10}+\alpha_{11}+\beta_{11})}^{y'''(\beta_{10}+\alpha_{11}+\beta_{11})+y'''(\gamma_{10}+\gamma_{11})+(y'''+1)(\gamma_{10}+\delta_{11})} \nonumber\\
  && \times 4 \times 2 \times 2^{\left[(y-1)^2[(x)^2(\gamma_1+\delta_2) + (x-1)^2(\gamma_2+\delta_1)]\right]\left[1-y'(\gamma_4-\gamma_5)^2-(y'-1)^2(\gamma_4-\delta_5)^2\right] \left[(y''-1)^2[(x'')^2(\gamma_7+\delta_8) + (x''-1)^2(\gamma_8+\delta_7)]\right]\left[1-y'''(\gamma_{10}-\gamma_{11})^2-(y'''-1)^2(\gamma_{10}-\delta_{11})^2\right]} \nonumber\\
  && \times \mathcal G_4(x,y) \left[ \mathcal G_1(x',y') + \mathcal G_2(x',y') \right] \mathcal G_4(x'',y'') \left[ \mathcal G_1(x''',y''') + \mathcal G_2(x''',y''') \right] \nonumber\\
  && + 2 \sum_{y''=0}^1 \sum_{x''=y''(\delta_7 + \delta_8)}^{1+y''(\gamma_7 + \gamma_8)} \sum_{x'''=0}^{1+\beta_{12}} \sum_{y'''=0}^{1+\alpha_{12}} \nonumber\\
  && \times 2 \times 2^{\left[(y-1)^2[(x)^2(\gamma_1+\delta_2) + (x-1)^2(\gamma_2+\delta_1)]\right]\left[1-y'(\gamma_4-\gamma_5)^2-(y'-1)^2(\gamma_4-\delta_5)^2\right]} 2^{(y''-1)^2[(x'')^2(\gamma_7+\delta_8) + (x''-1)^2(\gamma_8+\delta_7)]} \nonumber\\
  && \times \mathcal G_4(x,y) \left[ \mathcal G_1(x',y') + \mathcal G_2(x',y') \right] \mathcal G_4(x'',y'') \mathcal G_3(x''',y''') \nonumber\\
  && + \sum_{y''=y+x'(1+\alpha_4+\beta_4)}^1 \sum_{x''=y''(\delta_7 + \delta_8)+(y+x'(1+\alpha_4+\beta_4))}^{1+y''(\gamma_7 + \gamma_8)+(y+x'(1+\alpha_4+\beta_4))} \sum_{y'''=y+x'(1+\alpha_4+\beta_4)+y''}^1 \sum_{x'''=y'''(\delta_{10} + \delta_{11})+(y+x'(1+\alpha_4+\beta_4)+y'')}^{1+y'''(\gamma_{10} + \gamma_{11})+(y+x'(1+\alpha_4+\beta_4)+y'')} \nonumber\\
  && \times 4 \times 2 \times 2^{\left[(y-1)^2[(x)^2(\gamma_1+\delta_2) + (x-1)^2(\gamma_2+\delta_1)]\right]\left[1-y'(\gamma_4-\gamma_5)^2-(y'-1)^2(\gamma_4-\delta_5)^2\right] \left[(y''-1)^2[(x'')^2(\gamma_7+\delta_8) + (x''-1)^2(\gamma_8+\delta_7)]\right]\left[(y'''-1)^2[(x''')^2(\gamma_{10}+\delta_{11}) + (x'''-1)^2(\gamma_{11}+\delta_{10})]\right]} \nonumber\\
  && \times \mathcal G_4(x,y) \left[ \mathcal G_1(x',y') + \mathcal G_2(x',y') \right] \mathcal G_4(x'',y'') \mathcal G_4(x''',y''') \Bigg] \Bigg\} \nonumber
\end{eqnarray}

\begin{eqnarray}
\label{eq:12qubittgate8}
  &+& \frac{\omega}{64} \left\{2 \sum_{y=0}^1 \sum_{x=y(\delta_1 + \delta_2)}^{1+y(\gamma_1 + \gamma_2)} \sum_{x'=0}^{1+\beta_6} \sum_{y'=0}^{1+\alpha_6} \right. \\
  && \left[ \sum_{y''=0}^{1} \sum_{x''=y''(\beta_7+\alpha_8+\beta_8)}^{\substack{y''(\beta_7+\alpha_8+\beta_8)+\\y''(\gamma_7+\gamma_8)+\\(y''+1)(\gamma_7+\delta_8)}} \sum_{y'''=(\gamma_{10}+\delta_{11})x''}^{1-(\gamma_{10}+\gamma_{11})x''} \sum_{x'''=y'''(\beta_{10}+\alpha_{11}+\beta_{11})}^{\substack{y'''(\beta_{10}+\alpha_{11}+\beta_{11})+\\y'''(\gamma_{10}+\gamma_{11})+\\(y'''+1)(\gamma_{10}+\delta_{11})}} \right. \nonumber\\
  && \times 2^{(y-1)^2[(x)^2(\gamma_1+\delta_2) + (x-1)^2(\gamma_2+\delta_1)]} 2\times 2^{\left[1-y''(\gamma_7-\gamma_8)^2-(y''-1)^2(\gamma_7-\delta_8)^2\right]\left[1-y'''(\gamma_{10}-\gamma_{11})^2-(y'''-1)^2(\gamma_{10}-\delta_{11})^2\right]} \nonumber\\
  && \times \mathcal G_4(x,y) \mathcal G_3(x',y') \left[ \mathcal G_1(x'',y'') + \mathcal G_2(x'',y'') \right] \left[ \mathcal G_1(x''',y''') + \mathcal G_2(x''',y''') \right] \nonumber\\
  && + 2 \sum_{y''=0}^{1} \sum_{x''=y''(\beta_7+\alpha_8+\beta_8)}^{\substack{y''(\beta_7+\alpha_8+\beta_8)+\\y''(\gamma_7+\gamma_8)+\\(y''+1)(\gamma_7+\delta_8)}} \sum_{x'''=0}^{1+\beta_{12}} \sum_{y'''=0}^{1+\alpha_{12}} \nonumber\\
  && \times 2^{(y-1)^2[(x)^2(\gamma_1+\delta_2) + (x-1)^2(\gamma_2+\delta_1)]} 2^{1-y'''(\gamma_{10}-\gamma_{11})^2-(y'''-1)^2(\gamma_{10}-\delta_{11})^2} \nonumber\\
  && \times \mathcal G_4(x,y) \mathcal G_3(x',y') \left[ \mathcal G_1(x'',y'') + \mathcal G_2(x'',y'') \right]  \mathcal G_3(x''',y''') \nonumber\\
  && + \sum_{y''=0}^{1} \sum_{x''=y''(\beta_7+\alpha_8+\beta_8)}^{\substack{y''(\beta_7+\alpha_8+\beta_8)+\\y''(\gamma_7+\gamma_8)+\\(y''+1)(\gamma_7+\delta_8)}} \sum_{y'''=x''(1+\alpha_7+\beta_7)}^1 \sum_{x'''=y'''(\delta_{10} + \delta_{11})+x''(1+\alpha_7+\beta_7)}^{1+y'''(\gamma_{10} + \gamma_{11})+x''(1+\alpha_7+\beta_7)}\nonumber\\
  && \times 2^{(y-1)^2[(x)^2(\gamma_1+\delta_2) + (x-1)^2(\gamma_2+\delta_1)]} 2\times 2^{\left[1-y''(\gamma_7-\gamma_8)^2-(y''-1)^2(\gamma_7-\delta_8)^2\right]\left[(y'''-1)^2[(x''')^2(\gamma_{10}+\delta_{11}) + (x'''-1)^2(\gamma_{11}+\delta_{10})]\right]} \nonumber\\
  && \times \mathcal G_4(x,y) \mathcal G_3(x',y') \left[ \mathcal G_1(x'',y'') + \mathcal G_2(x'',y'') \right] \mathcal G_4(x''',y''')\nonumber\\
  && + 2 \sum_{x''=0}^{1+\beta_9} \sum_{y''=0}^{1+\alpha_9} \sum_{y'''=0}^{1} \sum_{x'''=y'''(\beta_{10}+\alpha_{11}+\beta_{11})}^{\substack{y'''(\beta_{10}+\alpha_{11}+\beta_{11})+\\y'''(\gamma_{10}+\gamma_{11})+\\(y'''+1)(\gamma_{10}+\delta_{11})}} \nonumber\\
  && \times 2^{(y-1)^2[(x)^2(\gamma_1+\delta_2) + (x-1)^2(\gamma_2+\delta_1)]} 2^{1-y'''(\gamma_{10}-\gamma_{11})^2-(y'''-1)^2(\gamma_{10}-\delta_{11})^2} \nonumber\\
  && \times \mathcal G_4(x,y) \mathcal G_3(x',y') \mathcal G_3(x'',y'') \left[ \mathcal G_1(x''',y''') + \mathcal G_2(x''',y''') \right] \nonumber\\
  && + 4 \sum_{x''=0}^{1+\beta_9} \sum_{y''=0}^{1+\alpha_9} \sum_{x'''=0}^{1+\beta_{12}} \sum_{y'''=0}^{1+\alpha_{12}} \nonumber\\
  && \times 2^{(y-1)^2[(x)^2(\gamma_1+\delta_2) + (x-1)^2(\gamma_2+\delta_1)]} \nonumber\\
  && \times \mathcal G_4(x,y) \mathcal G_3(x',y') \mathcal G_3(x'',y'') \mathcal G_3(x''',y''')\nonumber\\
  && + 2 \sum_{x''=0}^{1+\beta_9} \sum_{y''=0}^{1+\alpha_9} \sum_{y'''=0}^1 \sum_{x'''=y'''(\delta_{10} + \delta_{11})}^{1+y'''(\gamma_{10} + \gamma_{11})} \nonumber\\
  && \times 2^{(y-1)^2[(x)^2(\gamma_1+\delta_2) + (x-1)^2(\gamma_2+\delta_1)]} 2^{(y'''-1)^2[(x''')^2(\gamma_{10}+\delta_{11}) + (x'''-1)^2(\gamma_{11}+\delta_{10})]} \nonumber\\
  && \times \mathcal G_4(x,y) \mathcal G_3(x',y') \mathcal G_3(x'',y'') \mathcal G_4(x''',y''') \nonumber\\
  && + \sum_{y''=0}^1 \sum_{x''=y''(\delta_7 + \delta_8)}^{1+y''(\gamma_7 + \gamma_8)} \sum_{y'''=(\gamma_{10}+\delta_{11})y''}^{1-(\gamma_{10}+\gamma_{11})y''} \sum_{x'''=y'''(\beta_{10}+\alpha_{11}+\beta_{11})}^{y'''(\beta_{10}+\alpha_{11}+\beta_{11})+y'''(\gamma_{10}+\gamma_{11})+(y'''+1)(\gamma_{10}+\delta_{11})} \nonumber\\
  && \times 2^{(y-1)^2[(x)^2(\gamma_1+\delta_2) + (x-1)^2(\gamma_2+\delta_1)]} 2 \times 2^{\left[(y''-1)^2[(x'')^2(\gamma_7+\delta_8) + (x''-1)^2(\gamma_8+\delta_7)]\right]\left[1-y'''(\gamma_{10}-\gamma_{11})^2-(y'''-1)^2(\gamma_{10}-\delta_{11})^2\right]} \nonumber\\
  && \times \mathcal G_4(x,y) \mathcal G_3(x',y') \mathcal G_4(x'',y'') \left[ \mathcal G_1(x''',y''') + \mathcal G_2(x''',y''') \right] \nonumber\\
  && + 2 \sum_{y''=0}^1 \sum_{x''=y''(\delta_7 + \delta_8)}^{1+y''(\gamma_7 + \gamma_8)} \sum_{x'''=0}^{1+\beta_{12}} \sum_{y'''=0}^{1+\alpha_{12}} \nonumber\\
  && \times 2^{(y-1)^2[(x)^2(\gamma_1+\delta_2) + (x-1)^2(\gamma_2+\delta_1)]} 2^{(y''-1)^2[(x'')^2(\gamma_7+\delta_8) + (x''-1)^2(\gamma_8+\delta_7)]} \nonumber\\
  && \times \mathcal G_4(x,y) \mathcal G_3(x',y') \mathcal G_4(x'',y'') \mathcal G_3(x''',y''') \nonumber\\
  && + \sum_{y''=0}^1 \sum_{x''=y''(\delta_7 + \delta_8)}^{1+y''(\gamma_7 + \gamma_8)} \sum_{y'''=y''}^1 \sum_{x'''=y'''(\delta_{10} + \delta_{11})+y''}^{1+y'''(\gamma_{10} + \gamma_{11})+y''} \nonumber\\
  && \times 2^{(y-1)^2[(x)^2(\gamma_1+\delta_2) + (x-1)^2(\gamma_2+\delta_1)]} 2\times 2^{\left[(y''-1)^2[(x'')^2(\gamma_7+\delta_8) + (x''-1)^2(\gamma_8+\delta_7)]\right]\left[(y'''-1)^2[(x''')^2(\gamma_{10}+\delta_{11}) + (x'''-1)^2(\gamma_{11}+\delta_{10})]\right]} \nonumber\\
  && \times \mathcal G_4(x,y) \mathcal G_3(x',y') \mathcal G_4(x'',y'') \mathcal G_4(x''',y''') \Bigg] \Bigg\} \nonumber
\end{eqnarray}

\begin{eqnarray}
\label{eq:12qubittgate9}
  &+& \frac{\omega}{64} \left\{\sum_{y=0}^1 \sum_{x=y(\delta_1 + \delta_2)}^{1+y(\gamma_1 + \gamma_2)} \sum_{y'=y}^1 \sum_{x'=y'(\delta_4 + \delta_5)+y}^{1+y'(\gamma_4 + \gamma_5)+y} \right.\\
  && \left[ \sum_{y''=(\gamma_7+\delta_8)(y+y')}^{1+(\gamma_7+\gamma_8)(y+y')} \sum_{x''=y''(\beta_7+\alpha_8+\beta_8)}^{\substack{y''(\beta_7+\alpha_8+\beta_8)+\\y''(\gamma_7+\gamma_8)+\\(y''+1)(\gamma_7+\delta_8)}} \sum_{y'''=(\gamma_{10}+\delta_{11})(y+y'+x'')}^{1-(\gamma_{10}+\gamma_{11})(y+y'+x'')} \sum_{x'''=y'''(\beta_{10}+\alpha_{11}+\beta_{11})}^{\substack{y'''(\beta_{10}+\alpha_{11}+\beta_{11})+\\y'''(\gamma_{10}+\gamma_{11})+\\(y'''+1)(\gamma_{10}+\delta_{11})}} \right. \nonumber\\
  && \times 4 \times 2 \times 2^{\left[(y-1)^2[(x)^2(\gamma_1+\delta_2) + (x-1)^2(\gamma_2+\delta_1)]\right]\left[(y'-1)^2[(x')^2(\gamma_4+\delta_5) + (x'-1)^2(\gamma_5+\delta_4)]\right] \left[1-y''(\gamma_7-\gamma_8)^2-(y''-1)^2(\gamma_7-\delta_8)^2\right]\left[1-y'''(\gamma_{10}-\gamma_{11})^2-(y'''-1)^2(\gamma_{10}-\delta_{11})^2\right]} \nonumber\\
  && \times \left[ \mathcal G_4(x,y) \mathcal G_4(x',y') \mathcal G_1(x'',y'') + \mathcal G_2(x'',y'') \right] \left[ \mathcal G_1(x''',y''') + \mathcal G_2(x''',y''') \right] \nonumber\\
  && + 2 \sum_{y''=0}^{1} \sum_{x''=y''(\beta_7+\alpha_8+\beta_8)}^{\substack{y''(\beta_7+\alpha_8+\beta_8)+\\y''(\gamma_7+\gamma_8)+\\(y''+1)(\gamma_7+\delta_8)}} \sum_{x'''=0}^{1+\beta_{12}} \sum_{y'''=0}^{1+\alpha_{12}} \nonumber\\
  && \times 2 \times 2^{\left[(y-1)^2[(x)^2(\gamma_1+\delta_2) + (x-1)^2(\gamma_2+\delta_1)]\right]\left[(y'-1)^2[(x')^2(\gamma_4+\delta_5) + (x'-1)^2(\gamma_5+\delta_4)]\right]} 2^{1-y'''(\gamma_{10}-\gamma_{11})^2-(y'''-1)^2(\gamma_{10}-\delta_{11})^2} \nonumber\\
  && \times \mathcal G_4(x,y) \mathcal G_4(x',y') \left[ \mathcal G_1(x'',y'') + \mathcal G_2(x'',y'') \right]  \mathcal G_3(x''',y''') \nonumber\\
  && + \sum_{y''=(\gamma_7+\delta_8)(y+y')}^{1+(\gamma_7+\gamma_8)(y+y')} \sum_{x''=y''(\beta_7+\alpha_8+\beta_8)}^{\substack{y''(\beta_7+\alpha_8+\beta_8)+\\y''(\gamma_7+\gamma_8)+\\(y''+1)(\gamma_7+\delta_8)}} \sum_{y'''=y+y'+x''(1+\alpha_7+\beta_7)}^1 \sum_{x'''=y'''(\delta_{10} + \delta_{11})+(y+y'+x''(1+\alpha_7+\beta_7))}^{1+y'''(\gamma_{10} + \gamma_{11})+(y+y'+x''(1+\alpha_7+\beta_7))}\nonumber\\
  && \times 4 \times 2 \times 2^{\left[(y-1)^2[(x)^2(\gamma_1+\delta_2) + (x-1)^2(\gamma_2+\delta_1)]\right]\left[(y'-1)^2[(x')^2(\gamma_4+\delta_5) + (x'-1)^2(\gamma_5+\delta_4)]\right] \left[1-y''(\gamma_7-\gamma_8)^2-(y''-1)^2(\gamma_7-\delta_8)^2\right]\left[(y'''-1)^2[(x''')^2(\gamma_{10}+\delta_{11}) + (x'''-1)^2(\gamma_{11}+\delta_{10})]\right]} \nonumber\\
  && \times \mathcal G_4(x,y) \mathcal G_4(x',y') \left[ \mathcal G_1(x'',y'') + \mathcal G_2(x'',y'') \right] \mathcal G_4(x''',y''')\nonumber\\
  && + 2 \sum_{x''=0}^{1+\beta_9} \sum_{y''=0}^{1+\alpha_9} \sum_{y'''=0}^{1} \sum_{x'''=y'''(\beta_{10}+\alpha_{11}+\beta_{11})}^{\substack{y'''(\beta_{10}+\alpha_{11}+\beta_{11})+\\y'''(\gamma_{10}+\gamma_{11})+\\(y'''+1)(\gamma_{10}+\delta_{11})}} \nonumber\\
  && \times 2 \times 2^{\left[(y-1)^2[(x)^2(\gamma_1+\delta_2) + (x-1)^2(\gamma_2+\delta_1)]\right]\left[(y'-1)^2[(x')^2(\gamma_4+\delta_5) + (x'-1)^2(\gamma_5+\delta_4)]\right]} 2^{1-y'''(\gamma_{10}-\gamma_{11})^2-(y'''-1)^2(\gamma_{10}-\delta_{11})^2} \nonumber\\
  && \times \mathcal G_4(x,y) \mathcal G_4(x',y') \mathcal G_3(x'',y'') \left[ \mathcal G_1(x''',y''') + \mathcal G_2(x''',y''') \right] \nonumber\\
  && + 4 \sum_{x''=0}^{1+\beta_9} \sum_{y''=0}^{1+\alpha_9} \sum_{x'''=0}^{1+\beta_{12}} \sum_{y'''=0}^{1+\alpha_{12}} \nonumber\\
  && \times 2 \times 2^{\left[(y-1)^2[(x)^2(\gamma_1+\delta_2) + (x-1)^2(\gamma_2+\delta_1)]\right]\left[(y'-1)^2[(x')^2(\gamma_4+\delta_5) + (x'-1)^2(\gamma_5+\delta_4)]\right]} \nonumber\\
  && \times \mathcal G_4(x,y) \mathcal G_4(x',y') \mathcal G_3(x'',y'') \mathcal G_3(x''',y''')\nonumber\\
  && + 2 \sum_{x''=0}^{1+\beta_9} \sum_{y''=0}^{1+\alpha_9} \sum_{y'''=0}^1 \sum_{x'''=y'''(\delta_{10} + \delta_{11})}^{1+y'''(\gamma_{10} + \gamma_{11})} \nonumber\\
  && \times 2 \times 2^{\left[(y-1)^2[(x)^2(\gamma_1+\delta_2) + (x-1)^2(\gamma_2+\delta_1)]\right]\left[(y'-1)^2[(x')^2(\gamma_4+\delta_5) + (x'-1)^2(\gamma_5+\delta_4)]\right]} 2^{(y'''-1)^2[(x''')^2(\gamma_{10}+\delta_{11}) + (x'''-1)^2(\gamma_{11}+\delta_{10})]} \nonumber\\
  && \times \mathcal G_4(x,y) \mathcal G_4(x',y') \mathcal G_3(x'',y'') \mathcal G_4(x''',y''') \nonumber\\
  && + \sum_{y''=y+y'}^1 \sum_{x''=y''(\delta_7 + \delta_8)+(y+y')}^{1+y''(\gamma_7 + \gamma_8)+(y+y')} \sum_{y'''=(\gamma_{10}+\delta_{11})(y+y'+y'')}^{1-(\gamma_{10}+\gamma_{11})(y+y'+y'')} \sum_{x'''=y'''(\beta_{10}+\alpha_{11}+\beta_{11})}^{y'''(\beta_{10}+\alpha_{11}+\beta_{11})+y'''(\gamma_{10}+\gamma_{11})+(y'''+1)(\gamma_{10}+\delta_{11})} \nonumber\\
  && \times 4 \times 2 \times 2^{\left[(y-1)^2[(x)^2(\gamma_1+\delta_2) + (x-1)^2(\gamma_2+\delta_1)]\right]\left[(y'-1)^2[(x')^2(\gamma_4+\delta_5) + (x'-1)^2(\gamma_5+\delta_4)]\right] \left[(y''-1)^2[(x'')^2(\gamma_7+\delta_8) + (x''-1)^2(\gamma_8+\delta_7)]\right]\left[1-y'''(\gamma_{10}-\gamma_{11})^2-(y'''-1)^2(\gamma_{10}-\delta_{11})^2\right]} \nonumber\\
  && \times \mathcal G_4(x,y) \mathcal G_4(x',y') \mathcal G_4(x'',y'') \left[ \mathcal G_1(x''',y''') + \mathcal G_2(x''',y''') \right] \nonumber\\
  && + 2 \sum_{y''=0}^1 \sum_{x''=y''(\delta_7 + \delta_8)}^{1+y''(\gamma_7 + \gamma_8)} \sum_{x'''=0}^{1+\beta_{12}} \sum_{y'''=0}^{1+\alpha_{12}} \nonumber\\
  && \times 2 \times 2^{\left[(y-1)^2[(x)^2(\gamma_1+\delta_2) + (x-1)^2(\gamma_2+\delta_1)]\right]\left[(y'-1)^2[(x')^2(\gamma_4+\delta_5) + (x'-1)^2(\gamma_5+\delta_4)]\right]} 2^{(y''-1)^2[(x'')^2(\gamma_7+\delta_8) + (x''-1)^2(\gamma_8+\delta_7)]} \nonumber\\
  && \times \mathcal G_4(x,y) \mathcal G_4(x',y') \mathcal G_4(x'',y'') \mathcal G_3(x''',y''') \nonumber\\
  && + \sum_{y''=y+y'}^1 \sum_{x''=y''(\delta_7 + \delta_8)+(y+y')}^{1+y''(\gamma_7 + \gamma_8)+(y+y')} \sum_{y'''=y+y'+y''}^1 \sum_{x'''=y'''(\delta_{10} + \delta_{11})+(y+y'+y'')}^{1+y'''(\gamma_{10} + \gamma_{11})+(y+y'+y'')} \nonumber\\
  && \times 4 \times 2 \times 2^{\left[(y-1)^2[(x)^2(\gamma_1+\delta_2) + (x-1)^2(\gamma_2+\delta_1)]\right]\left[(y'-1)^2[(x')^2(\gamma_4+\delta_5) + (x'-1)^2(\gamma_5+\delta_4)]\right] \left[(y''-1)^2[(x'')^2(\gamma_7+\delta_8) + (x''-1)^2(\gamma_8+\delta_7)]\right]\left[(y'''-1)^2[(x''')^2(\gamma_{10}+\delta_{11}) + (x'''-1)^2(\gamma_{11}+\delta_{10})]\right]} \nonumber\\
  && \left. \times \mathcal G_4(x,y) \mathcal G_4(x',y') \mathcal G_4(x'',y'') \mathcal G_4(x''',y''') \right\} \Bigg] \Bigg\} \nonumber
\end{eqnarray}
}
\end{widetext}

Eq.~\ref{eq:12qubittgate} is made up of lines~\ref{eq:12qubittgate1}-~\ref{eq:12qubittgate9}. Before the simplification (described after Eq.~\ref{eq:permutingGausssums}), these correspond to \(256\) sets \emph{of sets} of Gauss sums, which are obtained from taking the sets of Gauss sums for the six-qutrit T-gate magic state and tensoring them with each other. Eq.~\ref{eq:12qubittgate1} consists of sets of Gauss sums numbering seven, six, seven, six, four, six, seven, six, and seven, respectively, all multiplied by seven. Eq.~\ref{eq:12qubittgate2}-\ref{eq:12qubittgate9} consists of the same numbers multiplied by six, seven, six, four, six, seven, six, and seven, respectively. As before, these are sets \emph{of sets} of Gauss sums, so chosen so that only one of these smallest sets of Gauss sums is non-zero for a given \(\hat P_k\).

After simplification, the sets of Gauss sums that consisted of \(7 \times 7 = 49\) Gauss sums are reduced to \(31\) Gauss sums. As a result, the formerly next-largest set consisting of \(6 \times 7=42\) Gauss sums, becomes the largest set and determines the worst-case scaling of \((\xi_{12})^{t/12} = 2^{\sim 0.449 t}\).

\section{Numerical Results}
\label{sec:gausssumnumericalresults}

Code of the equations in Sections~\ref{sec:3qubitdecomp}-\ref{sec:12qubitdecomp} can be found at https://s3miclassical.com/gitweb/. Its performance is plotted in Figure~\ref{fig:scaling}.

\end{document}